\documentclass[a4paper,USenglish,cleveref,autoref,thm-restate
]{lipics-v2021}

\usepackage{amsmath,amssymb}
\usepackage{mathtools}
\usepackage{booktabs}
\usepackage{tabularx}
\usepackage{xltabular}
\newcolumntype{Y}{>{\raggedright\arraybackslash}X}
\usepackage{stmaryrd}
\usepackage{graphicx}

\nolinenumbers 

\crefname{conjecture}{Conjecture}{Conjectures}
\Crefname{conjecture}{Conjecture}{Conjectures}

\title{Light Cone Consistency: Closure, Ordering, and the Single-Observer Boundary}
\titlerunning{Light Cone Consistency}

\author{Rob Landers}{Swytch BV, Utrecht, The Netherlands}{rob@getswytch.com}{}{}
\author{Kaben Kramer}{Swytch BV, Utrecht, The Netherlands}{kaben@getswytch.com}{}{}

\authorrunning{Landers \& Kramer}

\Copyright{Rob Landers and Kaben Kramer}

\ccsdesc[500]{Theory of computation~Distributed algorithms}
\ccsdesc[300]{Computer systems organization~Distributed architectures}

\keywords{distributed systems, consistency models, causal DAG, fork resolution, light cone consistency, impossibility results}

\begin{document}

    \maketitle

    \begin{abstract}

        Every distributed system is a message-passing system, and every message-passing system is a growing causal DAG observed by a set of observers.
        We treat each observer's consistency as two operators on its visible sub-DAG (a causal-closure filter $C$, fixing which dependencies it must have seen, and a fork resolution $O$, ordering the concurrent forks the filter admits) and give the resulting space the structure the flat catalog of named models lacks.
        The operators are coupled, asymmetrically: an order that refines causality supplies closure its filter never demanded.
        That coupling yields a decidable readability order (which configuration's data another can read honestly) with a factoring dichotomy: the order splits across the $C$ and $O$ axes exactly when ordering does not refine causality, and refuses to when it does, the cross-axis gap being the closure ordering supplies.
        On that order sit a consistency ratchet (a level lost under migration is never regained) and a Detection = Prevention bound: a system can tell its order inverted causality only if it retained exactly what would have prevented the inversion.

        The classical results land at clean coordinates in the same system, not as new claims: resolving a fork demands retaining the causal history that distinguishes its branches (database folklore, here an impossibility for every message-passing system) and linearizability resolves as a composite of two systems, a store and a global real-time serializer supplying an order no single observer's light cone can.
        The named models are configurations of $(C, O)$, exact over the standard-safety fragment and generative past it, predicting configurations the catalog has not named.
        LCC is a formalization of the observer-relative consistency model of Burgess and Gerlits.

    \end{abstract}

    \section{Introduction}
    \label{sec:intro}

    Every distributed system is a message-passing system, and every message-passing system is a
    growing causal directed acyclic graph (DAG). Messages are vertices; causal dependencies are
    edges. Each observer sees a sub-DAG that grows monotonically as messages arrive, and every
    consistency guarantee is a constraint on the shape and ordering of that visible sub-DAG. We
    work in the crash/omission model: Byzantine and shared-memory behaviour are out of scope, and
    real-time agreement \emph{across} observers is not a constraint on one sub-DAG at all but the
    boundary we characterise (\cref{sec:boundary}).

    A consistency model, in this view, is built from two operators on an observer's sub-DAG. The
    first is a causal-closure \emph{filter} $C$: it fixes which causal dependencies of a message
    an observer must have seen before acting on it (none, those sharing an object or a session,
    or all of them). The second is \emph{fork resolution} $O$: within the sub-DAG the filter
    admits, concurrent messages (forks, with no causal order between them) are placed into a
    single order over a message-intrinsic scope. A response function $F$ reports what the
    observer returns, and its waiting behaviour $R$ sets how long it defers before answering. The
    named consistency models are configurations of $(C, O, F)$ (\cref{sec:model}).

    $C$ and $O$ look like independent dials, and as raw settings they are: a system can impose a
    total order over data it never causally relates---a network switch serialises packets that
    way.
    What couples them is causality.
    For a resolved order to \emph{refine} causality, an observer must already hold the dependencies that order respects,
    so a causally faithful $O$ at a given scope entails $C$ at the matching scope.
    The rule that produces the order (delivery order, last-writer-wins, a deterministic merge) is interchangeable and invisible here;
    what matters is whether the order refines causality. An order that does is \emph{$\kappa$-clean};
    one that inverts a causal edge carries a \emph{$\kappa$-scar}. That split, not the choice of
    rule, is where consistency lives (\cref{sec:readability}).

    Four consequences organise the paper. \emph{Retention}: resolving a fork requires retaining
    the causal history that distinguishes its branches, so a system that resolves cannot discard
    that history first---a bound on every message-passing system, not a storage detail
    (\cref{sec:retention}). \emph{Readability}: the configurations form a decidable order (given
    data written under one configuration, exactly which can read it honestly?) with a factoring
    dichotomy: it splits across the $C$ and $O$ axes exactly when ordering does not refine
    causality, and refuses to when it does, the $\kappa$-coupling being that refusal
    (\cref{sec:readability}). \emph{Mergeability}: two divergent views reconcile exactly when
    their orders agree, and the partition tradeoff (availability or ordering) is what that rule
    forces when they cannot (\cref{sec:merge}). \emph{Permanence}: a value migrated through a
    weaker configuration cannot be read back at the stronger one (the consistency ratchet), and a
    $\kappa$-scar can be detected only by a system that retained what would have prevented
    it---Detection\,$=$\,Prevention (\cref{sec:ratchet-scars}).

    The boundary is the sharpest claim.
    Linearizability (a single global order respecting real
    time) is not a single configuration but a composite of two message-passing systems: a store
    holding the data, and a global serializer that orders it and routes clients through it in
    real time (\cref{sec:boundary}). The reason is relativistic. A single order over operations
    that are pairwise concurrent (spacelike-separated) is a \emph{preferred frame}, a simultaneity
    imposed on events that have none intrinsically; no observer's light cone supplies it, so it
    must be manufactured at a distinguished point. This is why everything else here is
    \emph{single-observer} consistency: it is exactly what an observer can establish from its own
    sub-DAG, and linearizability is the standard model that steps outside that frame. LCC is,
    throughout, a formalization of the observer-relative architecture of Burgess and
    Gerlits~\cite{burgess2022}; we place the classical impossibilities in its coordinate system
    but make no claim to unify them.

    \paragraph*{Contributions.}
    \begin{itemize}
        \item The asymmetric coupling of closure and ordering: named consistency as
        configurations of a closure filter $C$ and a fork resolution $O$ on each observer's
        sub-DAG (\cref{sec:model}), where a causally faithful order entails closure
        (\cref{sec:readability})---the cross-axis dependence a flat parameter space cannot
        isolate.
        \item A decidable readability order with a factoring dichotomy (\cref{thm:dichotomy}),
        the consistency ratchet (\cref{thm:ratchet}), and the Detection\,$=$\,Prevention bound
        (\cref{thm:detect-prevent}).
        \item The mergeability frontier (\cref{thm:merge}, \cref{thm:merge-frontier}): two
        divergent views reconcile exactly when their orders agree, with the partition tradeoff
        (\cref{cor:partition}) the choice forced when they cannot.
        \item The retention bound (\cref{thm:retention}): database folklore, here an impossibility
        for every message-passing system.
        \item The single-observer boundary (\cref{thm:globality}, \cref{cor:linearizability}):
        linearizability is a composite of two message-passing systems, the global order-scope it
        requires being irreducible---recovering the classical distinction within the
        single-observer frame, not importing it.
        \item A glossary mapping the configuration space onto the named models
        (\cref{tab:coverage}), exact on the standard-safety fragment (\cref{sec:appendixA}) and
        generative past it, predicting configurations the catalog has not named.
    \end{itemize}

    \section{Model and the Two Operators}
    \label{sec:model}

    \subsection{The Observed Causal DAG}
    \label{sec:obs-dag}

    \begin{definition}[Causal DAG]\label{def:dag}
        A \emph{causal DAG} is a directed acyclic graph $G = (M, E)$. $M$ is a set of
        \emph{messages}---atomic point events (a write, a packet, an operation, a
        letter). $E \subseteq M \times M$ is a set of \emph{dependency edges}:
        $(m', m) \in E$ means the creator of $m$ had observed $m'$ before creating $m$.
        The transitive closure of $E$ is the \emph{causal past}
        $\mathit{deps}^*(m) = \{\, m' : \text{there is a directed path } m' \rightsquigarrow m \,\}$;
        messages with no directed path between them are \emph{concurrent}.
    \end{definition}

    \begin{definition}[System]\label{def:system}
        A \emph{system} is a causal DAG $G = (M, E)$ that grows as messages are
        created, a set $N$ of \emph{observers} (nodes, processes, replicas,
        recipients), and a \emph{visibility predicate}
        $\mathrm{seen}(n, m, t) \in \{\top, \bot\}$. The \emph{visible sub-DAG} of
        observer $n$ at time $t$ is
        \[
            V(n, t) = \{\, m \in M : \mathrm{seen}(n, m, t) \,\},
        \]
        together with the edges of $G$ restricted to it. $V(n,t)$ is the whole of what
        $n$ knows.
    \end{definition}

    \begin{definition}[Monotonic Visibility, Axiom 1]\label{ax:monotonic}
        $t_1 \le t_2 \wedge \mathrm{seen}(n, m, t_1) \Rightarrow \mathrm{seen}(n, m, t_2)$.
    \end{definition}

    \begin{definition}[Creator Visibility, Axiom 2]\label{ax:creator}
        The creator of a message sees it immediately---so read-your-writes is a
        tautology \emph{for that observer}, never a guarantee. It becomes a guarantee
        only \emph{across} an observer boundary, which is what makes the boundary remark
        below load-bearing rather than bookkeeping.
    \end{definition}

    \begin{definition}[Observation Induces Dependency, Axiom 3]\label{ax:observation}
        If the creator of $m$ saw $m'$ before creating $m$, then $(m', m) \in E$:
        \[
            \mathrm{seen}(\mathrm{creator}(m), m', t_{\mathrm{create}}(m)) \Rightarrow (m', m) \in E.
        \]
    \end{definition}

    This is Lamport's happened-before relation~\cite{lamport1978} encoded as DAG
    structure---the axiom that makes the graph \emph{causal}, tying the visibility
    predicate to the edge set.

    \begin{definition}[Correct observer]\label{def:correct}
        An observer is \emph{correct} if it does not permanently crash.
    \end{definition}

    \begin{remark}[No further assumptions]\label{rem:no-delivery}
        Nothing else is assumed: no reliable channels, no bounded delays, no
        synchronized clocks, no failure model---only the DAG, the visibility predicate,
        and the three axioms. In particular, for any message an observer did not create,
        delivery is not guaranteed: $\mathrm{seen}(n, m, t)$ may remain $\bot$ for every
        $t$. This delivery-freedom is the sole hypothesis \cref{sec:filter-forces} will
        draw on.
    \end{remark}

    \begin{remark}[Seen versus retained; observer boundaries]\label{rem:seen-retained}
        $\mathrm{seen}(n, m, t)$ records that $n$ \emph{observed} $m$, not that it keeps
        $m$'s content---an observer may discard content while the fact of observation
        persists (\cref{sec:retention}'s retention bound concerns the weaker duty to keep
        dependency \emph{metadata}). And a separate process, thread, or
        connection---even co-located---is a distinct observer, since reaching it costs a
        message subject to the same $(C, O, F)$. This is precisely what gives Axiom~2 its
        content: read-your-writes is trivial \emph{within} an observer; the
        read-your-writes \emph{guarantee}---the one real systems expend effort to
        provide---is its cross-boundary form, where a second observer (a reconnected
        client, a different replica) must actually receive the write before it can read
        it. Both points are developed in \cref{sec:glossary}; here they only fix what
        $V(n,t)$ denotes.
    \end{remark}

    \subsection{Causal Closure Is a Filter: \texorpdfstring{$C(\mathit{deps})$}{C(deps)}}
    \label{sec:closure}

    Causal closure is the operator the rest of the framework rests on, and it is a
    \emph{filter}: it selects which of the DAG's dependency edges an observer is obliged
    to be closed over.

    \begin{definition}[Causal closure]\label{def:closure}
        For a \emph{dependency filter} $\mathit{deps} : M \to \mathcal{P}(M)$, the
        configuration $C(\mathit{deps})$ requires
        \[
            \forall n,m,t:\quad \mathrm{seen}(n,m,t) \;\Rightarrow\; \forall m' \in \mathit{deps}(m):\ \mathrm{seen}(n,m',t).
        \]
        An observer satisfying $C(\mathit{deps})$ never holds a message without also
        holding everything the filter attaches to it. Read through the recursion---each
        required $m'$ is itself seen, so \emph{its} filtered dependencies are seen
        too---the filter makes $V(n,t)$ downward-closed under the relation it picks out.
        Representative filters:
        \begin{itemize}
            \item $\mathit{deps}_{\mathrm{none}}(m) = \varnothing$---no filter; $V(n,t)$ need not be closed over anything.
            \item $\mathit{deps}_{\mathrm{object}}(m)$---$m$'s dependencies that share its object; closure within each object, distinct objects independent.
            \item $\mathit{deps}_{\mathrm{session}}(m)$---$m$'s dependencies in the same session; closure within a session's history.
            \item $\mathit{deps}_{\mathrm{explicit}}(m) = \{\,m':(m',m)\in E\,\}$---every immediate predecessor; full causal closure, $V(n,t)$ downward-closed under $\mathit{deps}^*$.
        \end{itemize}
    \end{definition}

    \subsection{Fork Resolution Within the Filter: \texorpdfstring{$O(\pi)$}{O(pi)}}
    \label{sec:fork}

    The causal DAG is a partial order~\cite{lamport1978}, so it leaves \emph{forks}:
    concurrent messages with no directed path between them. Concurrency is forced by
    distance---two observers far enough apart each act before the other's message
    arrives, so the causal structure itself produces the fork (\cref{fig:lightcone}). A
    fork is judged \emph{within} the sub-DAG $C$ admits: a pair the observer cannot order
    from the edges its filter obliges it to hold.

    \begin{figure}[t]
        \centering
        \includegraphics[width=0.6\linewidth]{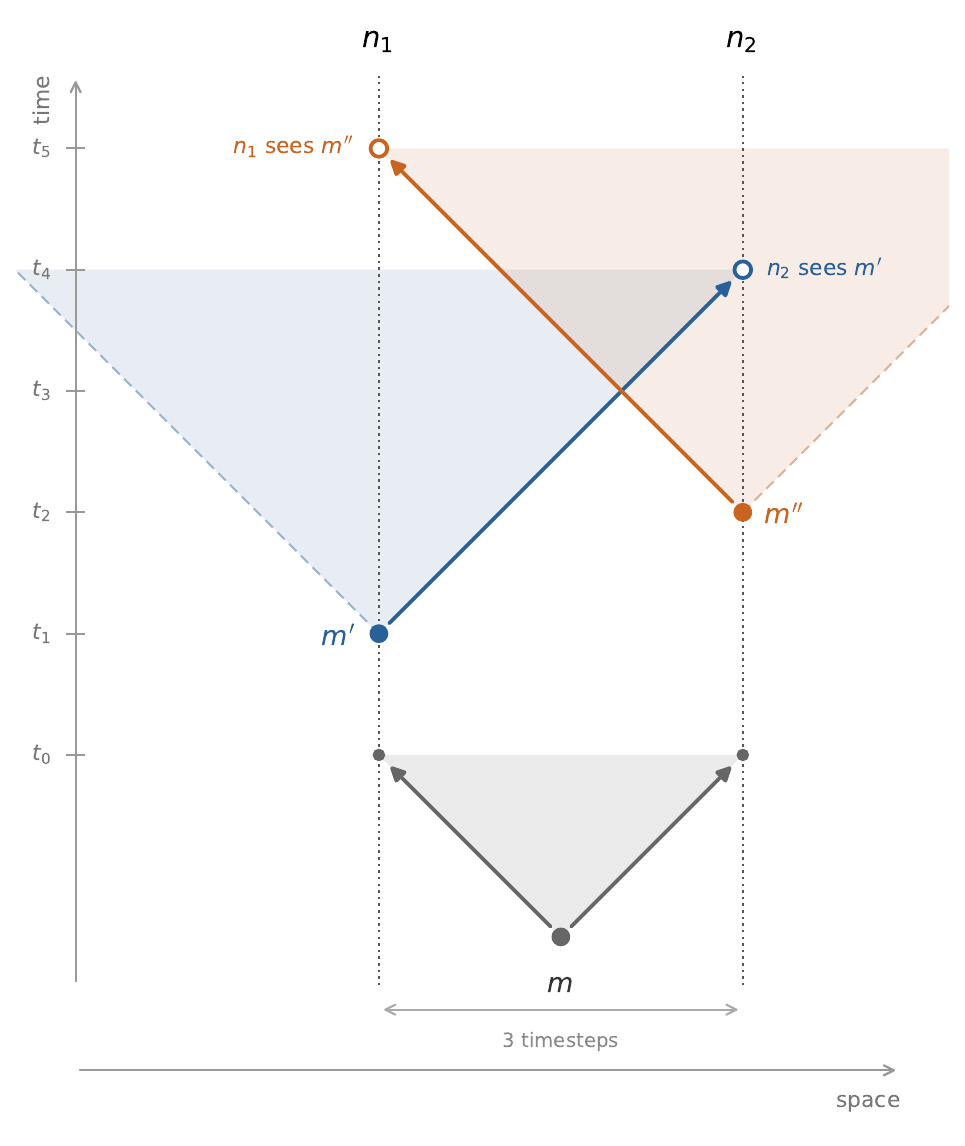}
        \caption{A fork is concurrency forced by distance. Nodes $n_1$ and $n_2$ are
        three timesteps apart, so a message takes three steps to cross and
        propagation runs along the $45^\circ$ light-cone edge. Both see the common
        ancestor $m$ at $t_0$; $n_1$ then emits $m'$ at $t_1$ and $n_2$ emits $m''$
            at $t_2$, each before the other's message can arrive. Both $m'$ and $m''$
            depend on $m$, yet neither lies in the other's past light cone: they are
            concurrent. The fork is a property of the causal structure, not a decision,
            and it is exactly what $O$ scopes.}
        \label{fig:lightcone}
    \end{figure}

    \begin{definition}[Fork resolution]\label{def:fork}
        A partition $\pi : M \to I$ sorts each message into a \emph{class}
        $i = \pi(m) \in I$; the class itself is the preimage
        $\pi^{-1}(i) \subseteq M$. The configuration $O(\pi)$ requires that within each
        class the forks of the $C$-filtered sub-DAG be resolved into a total order
        $\prec_i$ that visibility respects:
        \[
            \forall n,\; i \in I,\; m_1,m_2 \in \pi^{-1}(i),\, t:\quad \mathrm{seen}(n,m_2,t) \wedge m_1 \prec_i m_2 \;\Rightarrow\; \mathrm{seen}(n,m_1,t).
        \]
        By refinement, $O(\mathrm{trivial}) \sqsubseteq O(\pi_{\mathrm{object}}) \sqsubseteq O(\mathrm{all})$:
        \begin{itemize}
            \item $O(\mathrm{trivial})$: no class entangles two forks---the causal partial order only; concurrent messages carry no order.
            \item $O(\pi_{\mathrm{object}})$: forks entangled within an object, distinct objects disjoint.
            \item $O(\mathrm{all})$: every fork entangled into one total order over the whole DAG.
        \end{itemize}
    \end{definition}

    $O(\pi)$ asks only for \emph{a} total order over the concurrent---spacelike---messages
    in each class: the order that simultaneity withholds. How the tie is broken (delivery
    order, random, last-writer-wins, a deterministic merge) is invisible to $O$.

    \subsection{Response: \texorpdfstring{$F = (R, s)$}{F = (R, s)}}
    \label{sec:response}

    $C$, $O$, and the visibility predicate fix what an observer \emph{holds} and how it
    is \emph{ordered}; $F$ fixes what it \emph{reports}. Fix an observer $n$ and write
    its state at time $t$ as
    \[
        \sigma(n,t) = \big(\,V(n,t),\ \prec_n\,\big),
    \]
    the visible sub-DAG together with the order $\prec_n$ that $C$ and $O$ resolve over
    it (\cref{sec:closure,sec:fork}). A query $q$ arrives at time $t_0$, and $F$ answers
    in two stages---\emph{when}, then \emph{what}.

    \begin{definition}[Waiting policy]\label{def:waiting}
        $R$ is a stopping rule: watching the state trajectory
        $(\sigma(n,t))_{t \ge t_0}$ in a configured unit (wall-clock time, message count,
        version distance, or position in the $O$-order), it selects an answer time
        $t^{*} = R(q,n,t_0) \in [t_0,\infty]$.
        \begin{itemize}
            \item $R(0)$: $t^{*}$ at unit-zero---no deferral in the unit (immediate under a time unit; the whole ordered prefix handled first under the $O$-order unit).
            \item $R(\delta)$: $t^{*}$ within a bounded $\delta$ of $t_0$.
            \item $R(\infty)$: $t^{*}$ finite but unbounded---defers until a waiting condition (an acknowledgment, a quorum, a clock reading) holds, then answers.
            \item $R(\mathrm{absent})$: no unit-bound waiting---best-effort; the answer fires on a trigger or other signal rather than at an elapsed-unit threshold, and nothing is guaranteed to arrive.
        \end{itemize}
    \end{definition}

    \begin{definition}[Selection function]\label{def:selection}
        $s$ maps the query and the state at the answer time to the set of values the
        observer may report,
        \[
            s\big(q,\ \sigma(n,t^{*})\big) \in \mathcal{P}(\mathrm{Values}),
        \]
        a singleton for a single-valued read, a larger set otherwise. Six
        representatives, ranked below: $F_{\mathrm{latest}}$ (the $\prec_n$-greatest
        write), $F_{k\text{-}\mathrm{latest}}$ (any of the $k$ most recent),
        $F_{\mathrm{any\text{-}concurrent}}$ (any write concurrent with or before the
        read), $F_{\mathrm{anything}}$ (any value, even one never written),
        $F_{\mathrm{computed}}$ (a deterministic function of the state---a merge, sum, or
        maximum), $F_{\mathrm{multi}}$ (the whole concurrent set, for the caller to
        resolve).
    \end{definition}

    \begin{definition}[Response function]\label{def:rvf}
        $F$ is their composition: presented with $q$ at $t_0$,
        \[
            F(q,n,t_0) = s\big(q,\ \sigma(n,t^{*})\big), \qquad t^{*} = R(q,n,t_0).
        \]
        $R$ fixes the time at which the state is read; $s$ fixes what is read from it. We
        write $F = (R,s)$ for this composition. Its codomain is
        $\mathcal{P}(\mathrm{Values})$ throughout.
    \end{definition}

    \begin{theorem}[Restrictiveness order]\label{thm:rvf-order}
        Under inclusion of reportable-value sets,
        $F_{\mathrm{latest}} \le F_{k\text{-}\mathrm{latest}} \le F_{\mathrm{any\text{-}concurrent}} \le F_{\mathrm{anything}}$
        and, independently,
        $F_{\mathrm{latest}} \le F_{\mathrm{computed}} \le F_{\mathrm{multi}}$;
        $F_{\mathrm{computed}}$ and $F_{\mathrm{any\text{-}concurrent}}$ are incomparable.
    \end{theorem}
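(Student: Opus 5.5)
The plan is to fix a single query $q$ and a single state $\sigma = \sigma(n,t^{*})$, and to write each representative's reportable-value set as an explicit subset of $\mathrm{Values}$. The order is then set inclusion, checked pointwise for every $(q,\sigma)$. With $\prec_n$ the resolved order on the visible writes to $q$'s object, I will use:
\begin{itemize}
\item $L(\sigma)$, the singleton containing the $\prec_n$-greatest write;
\item $L_k(\sigma)$, the values of the $k$ $\prec_n$-greatest writes;
\item $A(\sigma)$, the values of writes concurrent with or causally before the read;
\item $\mathrm{Values}$ itself, for $F_{\mathrm{anything}}$;
\item $\{c(\sigma)\}$ for a deterministic function $c$, for $F_{\mathrm{computed}}$;
\item for $F_{\mathrm{multi}}$, the set of values of the concurrent frontier together with every value computable from it.
\end{itemize}
Fixing the last reading is a definitional choice, and I flag it as the point where the proof depends on interpretation. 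The waiting policy $R$ plays no role: all comparisons are at the same answer time $t^{*}$, so I compare only the selection functions $s$.

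For the first chain, $L \subseteq L_k$ holds because the greatest write is among the $k$ greatest ($k \ge 1$). Next, $L_k \subseteq A$ needs that every visible write is concurrent with or before the read. This follows from Axiom~3 (\cref{ax:observation}): the read is created by $n$, and $n$ has seen each of these writes, so each is in the read's causal past. The last step, $A \subseteq \mathrm{Values}$, is trivial.

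For the second chain, $L \subseteq \{c(\sigma)\}$ holds when $c$ is the function that returns the $\prec_n$-latest value. More carefully, $F_{\mathrm{latest}}$ is the special case of a deterministic function of the state, so the inclusion should be read as "every value reportable under latest is reportable under some computed policy". I will state this via the union over admissible $c$. Then $\{c(\sigma)\} \subseteq F_{\mathrm{multi}}$ holds because multi hands the caller the whole concurrent set, from which any deterministic resolution, $c$ included, can be derived. I expect this step to be the main obstacle, since the informal definition of $F_{\mathrm{multi}}$ must be pinned down as "the frontier, closed under the caller's resolutions" for the inclusion to hold literally.

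Incomparability of $F_{\mathrm{computed}}$ and $F_{\mathrm{any\text{-}concurrent}}$ comes from two small witnesses, each with two concurrent writes $w_1 = 1$ and $w_2 = 2$.
\begin{itemize}
\item Take $c$ to be the sum. Then $3 \in F_{\mathrm{computed}}$, but $3$ was never written, so $3 \notin A$.
\item Take $c = \max$. Then $A \ni 1$, but $1 \notin \{c(\sigma)\} = \{2\}$.
\end{itemize}
Hence neither set contains the other in general.
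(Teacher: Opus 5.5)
Your route matches the paper's, whose proof is one line: inclusion of the reportable sets at each input, plus two witnesses for incomparability. However, the patch you add to make $F_{\mathrm{latest}} \le F_{\mathrm{computed}}$ literal contradicts your incomparability argument. As written, the proposal uses two incompatible readings of $F_{\mathrm{computed}}$.
\begin{itemize}
\item \textbf{Union reading (used for the chain).} You take $\bigcup_c \{c(\sigma)\}$ over admissible deterministic $c$. On your own witness, $1 = \min(\sigma)$ is then computed, so your $\max$ witness fails. Worse, constant functions are deterministic functions of the state, so the union is all of $\mathrm{Values}$. That collapses $F_{\mathrm{computed}}$ onto $F_{\mathrm{anything}}$ and gives $F_{\mathrm{any\text{-}concurrent}} \le F_{\mathrm{computed}}$, the opposite of incomparability.
\item \textbf{Fixed-$c$ reading (used for incomparability).} With a single fixed $c$, the chain step fails: for $c = \mathrm{sum}$ on your witness, the latest set $\{2\}$ (or $\{1\}$) is not contained in $\{3\}$. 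Moreover, you use a different $c$ for each direction, and the $\max$ policy is actually comparable. The maximum of the visible written values is itself a visible written value, so by your own Axiom~3 step $\{\max(\sigma)\} \subseteq A(\sigma)$. Two witnesses for two different policies do not show that any one policy is incomparable to $F_{\mathrm{any\text{-}concurrent}}$.
\end{itemize}

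\textbf{Fix.} Commit to one reading and check every step against it. Take $F_{\mathrm{computed}}(\sigma) = \{c(\sigma) : c \in \mathcal{C}\}$ for a fixed class $\mathcal{C}$ that contains the $\prec_n$-latest selector and a non-selecting merge such as sum, but not arbitrary selectors or constants. Then $F_{\mathrm{latest}}(\sigma) \subseteq F_{\mathrm{computed}}(\sigma)$ holds pointwise. Your single state (writes $1$ and $2$, with $2$ latest) gives $F_{\mathrm{computed}} = \{2,3\}$ against $A = \{1,2\}$, which yields both non-inclusions at once; sum alone does the job, so drop $\max$.

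\textbf{Second gap, in the step $\{c(\sigma)\} \subseteq F_{\mathrm{multi}}$.} You claim every resolution can be derived from the concurrent frontier. But $c$ is a function of the whole state $(V, \prec_n)$, not of the frontier. Two examples are not recoverable from the frontier:
\begin{itemize}
\item a sum over the full history, as for a counter;
\item the $\prec_n$-latest write when $\prec_n$ inverts a causal edge. If $w_1 \to w_2$ but $w_2 \prec_n w_1$, the selected write $w_1$ is not causally maximal, so it is absent from the frontier.
\end{itemize}
You therefore need $\mathcal{C}$ restricted to functions of the frontier, and the latest selector lands in the frontier only under causal arbitration. The paper's one-line proof glosses over both points. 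Your explicit version surfaces them, but it has to resolve them with a single consistent definition.
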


    \begin{proof}
        Inclusion of the reportable sets at each input; $F_{\mathrm{computed}}$ may
        report a value no write produced and $F_{\mathrm{any\text{-}concurrent}}$ one the
        deterministic function would not, so neither set contains the other.
    \end{proof}

    \begin{theorem}[Orthogonality]\label{thm:orthogonality}
        Whether a given $(C,O,R)$ skeleton is achievable depends only on $(C,O,R)$, not
        on $s$.
    \end{theorem}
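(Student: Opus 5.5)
The argument is a factoring of the response map, following \cref{def:rvf}. What ``achievable'' constrains is a \emph{trajectory}: the visible sub-DAGs $V(n,t)$, the resolved orders $\prec_n$, and the answer times $t^{*}=R(q,n,t_0)$. The selection $s$ only post-processes a state already fixed by the skeleton. I will show that whenever an execution realises the skeleton $(C,O,R)$ with one selection $s$, the same execution realises it with any other selection $s'$. Achievability of the skeleton then cannot depend on $s$.

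\textbf{Step 1: list the constraints of the skeleton and check that none mentions $s$.} By \cref{def:closure}, $C(\mathit{deps})$ quantifies only over $\mathrm{seen}$ and $\mathit{deps}$. By \cref{def:fork}, $O(\pi)$ quantifies only over $\mathrm{seen}$, $\pi$ and the orders $\prec_i$. By \cref{def:waiting}, $R$ is a stopping rule on the trajectory $(\sigma(n,t))_{t\ge t_0}$, and $\sigma(n,t)=(V(n,t),\prec_n)$ contains no reported values. Axioms~1--3 involve only $\mathrm{seen}$, $E$ and creation times. So every condition defining ``the system satisfies $(C,O,R)$'' is a predicate on $(G,\mathrm{seen},\{\prec_i\},t^{*})$.

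\textbf{Step 2: transplant the selection.} Take an execution achieving $(C,O,R)$ together with $s$. Keep $G$, $\mathrm{seen}$, the orders and $t^{*}$ unchanged, and define the new response as $F'(q,n,t_0)=s'(q,\sigma(n,t^{*}))$. This is well defined: by \cref{def:selection}, $s'$ is a function of the query and the state, and its codomain $\mathcal{P}(\mathrm{Values})$ is the same as that of $s$. By Step~1, all skeleton constraints remain satisfied. Hence the set of achievable skeletons is the same for every $s$.

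\textbf{Main obstacle: feedback from responses into the DAG.} Reporting a value is itself an observation. Under Axiom~3, a later message created by a client that read a value gains an edge to what it saw, so a change in $s$ could in principle change $G$ and, through $G$, $C$ and $O$. I would handle this by noting that Axiom~3 ties edges to $\mathrm{seen}$, not to the value returned. Whatever $s$ reports, it is computed from $\sigma(n,t^{*})$, which $n$ already sees. The set of messages the client has observed, and therefore the induced edges, is fixed by $V(n,t^{*})$ and the delivery of the response, and both are skeleton data.

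Selections such as $F_{\mathrm{anything}}$ report values never written. These are not messages in $M$, so they create no vertices or edges. \Cref{thm:rvf-order} separately confirms that every $s$ has codomain $\mathcal{P}(\mathrm{Values})$, so no selection needs additional state. If a selection were allowed to wait on values it has not seen, that waiting would belong to $R$ by \cref{def:waiting}, not to $s$. This keeps the separation clean.
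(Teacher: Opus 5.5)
Your proposal is correct and follows the paper's own argument. The decisive step in both is that Axiom~3 (\cref{ax:observation}) ties edges to what an observer \emph{saw}, not to what $F$ reported, so $s$ adds no edge and changes no visibility and cannot affect which $(C,O,R)$ states are reachable; your constraint-by-constraint check, the transplant of $s'$ into a fixed execution, and the treatment of response feedback unpack the paper's one-line proof in more detail.
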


    \begin{proof}
        By Axiom~3 (\cref{ax:observation}) the edges of $E$ are induced by what an
        observer saw before creating a message, never by what $F$ reported; $s$ only
        chooses a value from the state at $t^{*}$, adding no edge and changing no
        visibility, so it cannot affect which $(C,O,R)$ states are reachable.
    \end{proof}

    \subsection{What the Filter Forces}
    \label{sec:filter-forces}

    \begin{remark}[Ordering presupposes the filter]\label{rem:order-presupposes}
        Because $O$ resolves forks of the sub-DAG $C$ admits, $O$'s order presupposes
        $C$: an order respecting causal arbitration is causally closed by construction.
        The converse fails---closure fixes which edges are held, not how concurrent
        writes are ordered. This is definitional, not a theorem: $O$'s domain is $C$'s
        output.
    \end{remark}

    \begin{corollary}[Incompleteness]\label{cor:incompleteness}
        Fix a single observer enforcing $C(\mathit{deps} \neq \mathrm{none})$. By the
        no-delivery remark of \cref{rem:no-delivery}, there are histories in which a
        vertex the filter requires is absent from $V(n,t)$ for every $t$. One missing
        vertex has two faces. \emph{Closure:} $V(n,t)$ is not downward-closed, so $C$ is
        unsatisfiable, and no waiting bound repairs it. \emph{Ordering:} any resolution
        that needed the missing edge is unsound---the observer cannot tell ``concurrent''
        from ``causally ordered through a vertex it has not seen.'' $F$ can then only
        defer ($R$), report from the incomplete sub-DAG ($s$), or return $\bot$. This is
        not an impossibility theorem; it is what the definitions say happens when the
        filter cannot be met.
    \end{corollary}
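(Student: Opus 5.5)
The plan is to make the corollary's informal claims into a concrete history and then read each consequence straight off the definitions. Fix an observer $n$ enforcing $C(\mathit{deps})$ with $\mathit{deps} \neq \mathrm{none}$. I will build a history containing a message $m$ and some $m' \in \mathit{deps}(m)$ where $n$ created neither, so that $m$ reaches $n$ and $m'$ never does. Two things must be checked. First, the history has to be consistent with the three axioms. Monotonic visibility holds trivially if $\mathrm{seen}(n,m',t)=\bot$ for all $t$. Creator visibility concerns only the creator of $m'$, which is not $n$. Observation-induces-dependency concerns only edges at creation time, and $(m',m)\in E$ is simply given. Second, such a history exists at all. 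That is exactly what \cref{rem:no-delivery} licenses: for a message $n$ did not create, $\mathrm{seen}(n,m',t)$ may be $\bot$ for every $t$. Choosing $m'$ non-$n$-created is the one place care is needed. If every filtered dependency reachable by $n$ were created by $n$, Axiom~2 would force delivery, so I will restrict to histories in which the required vertex has a foreign creator, for example by letting a second observer create $m'$ and then $m$.

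The \emph{closure} face comes directly from \cref{def:closure}. Here $\mathrm{seen}(n,m,t)$ holds while $\mathrm{seen}(n,m',t)$ fails with $m'\in\mathit{deps}(m)$, so the universally quantified implication is violated at every $t$ after $m$ arrives. By Axiom~1 the missing vertex never appears afterwards either, so no waiting policy $R$ from \cref{def:waiting}, bounded or unbounded, reaches a $t^*$ at which the condition holds. This is why no waiting bound repairs it. If instead $n$ refuses to admit $m$, the same missing vertex shows up as a permanently absent $m$.

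The \emph{ordering} face is an indistinguishability argument, and I expect it to be the main obstacle because "unsound" has to be made precise. I will exhibit two histories $H_1$ and $H_2$ that give $n$ identical visible sub-DAGs $V(n,t)$. In $H_1$, two visible messages $a$ and $b$ are concurrent. In $H_2$, the unseen vertex $m'$ lies on a path $a \rightsquigarrow m' \rightsquigarrow b$. Since $n$'s state $\sigma(n,t)$ is a function of $V(n,t)$ by \cref{def:system}, any resolution $\prec_n$ it chooses is the same in both histories. If that resolution puts $b \prec a$, it inverts causality in $H_2$. If it relies on the edge to put $a \prec b$ as a causal rather than fork order, it misreports $H_1$.

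Finally, $F$'s options follow from \cref{def:rvf}. $F$ can only pick $t^*$ via $R$ and apply $s$ to $\sigma(n,t^*)$, which is the incomplete sub-DAG, or decline with $\bot$. By \cref{thm:orthogonality}, no choice of $s$ changes the reachability already established.
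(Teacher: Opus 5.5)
Your route is the paper's own. The corollary carries its justification inline: delivery-freedom (\cref{rem:no-delivery}) gives a permanently absent required vertex, the closure face is the definitional failure of downward closure, the ordering face is an indistinguishability, and $F$'s residual options are read off \cref{def:rvf}. You make each step explicit, including two points the paper leaves implicit. First, $m'$ and $m$ must have a foreign creator, since Axiom~2 would otherwise force delivery. Second, ``$C$ is unsatisfiable'' presupposes $m \in V(n,t)$; the alternative is that $m$ is withheld forever. Two steps need tightening.

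In the ordering face, your pair does less than your dichotomy claims. The order $a \prec b$ is a linear extension in \emph{both} $H_1$ ($a,b$ concurrent) and $H_2$ ($a \rightsquigarrow m' \rightsquigarrow b$). So your second horn is not an unfaithful order, only a misclassification: the pair is reported as causally ordered rather than as a fork. Also, ``relies on the edge'' cannot happen, since the edge is exactly what $n$ does not hold.

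That suffices for the parenthetical, that $n$ cannot tell concurrent from causally ordered through an unseen vertex. This is how the paper itself later names this face (``the misclassification of \cref{cor:incompleteness}'' in the proof of \cref{thm:retention}). It does not suffice for ``every resolution is unsound.'' For the order-level claim, add the mirror history $H_2'$ with $b \rightsquigarrow m' \rightsquigarrow a$. Then $V(n,t)$ is $\{a,b\}$ with no edge in both $H_2$ and $H_2'$, so any fixed order inverts causality in one of them. This is exactly the $G/G'$ device of \cref{thm:retention}.

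Separately, in the closure face it is not Axiom~1 that keeps $m'$ out; your choice of history does that. Axiom~1 keeps $m$ in, which is what makes the violation hold at every later $t$ and so defeats every waiting policy $R$.
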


    \section{Retention: the Cost of Resolving Forks}
    \label{sec:retention}

    The ordering face of \cref{cor:incompleteness} arose from a vertex that was never
    delivered. The same gap arises by \emph{choice}: an observer that has \emph{seen} the
    connecting vertex and later discarded it stands in the identical position. That is the
    retention bound, and unlike incompleteness it is a cost a system pays on itself.

    \begin{theorem}[Retention bound]\label{thm:retention}
        Take a system with non-trivial scope $O(\pi)$ ($\pi \neq \mathrm{trivial}$) and
        $C(\mathit{deps} \neq \mathrm{none})$ that \emph{holds a fork open}---defers
        resolution, carrying an unresolved pair into its state rather than ordering before
        it answers (any $R \neq R(0)$). Suppose that, while a pair $\{m_1, m_2\}$ awaits
        resolution, the observer discards the dependency metadata of a seen message $m_0$
        that lies \emph{between} them---a vertex through which the only causal paths
        joining $m_1$ and $m_2$ run. Then no algorithm reading the remaining state can
        guarantee a resolution faithful to the causal order of $m_1$ and $m_2$.
    \end{theorem}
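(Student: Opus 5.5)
The plan is an indistinguishability argument: I will construct two histories that leave the observer in identical post-discard states but demand different faithful resolutions of $\{m_1,m_2\}$. Any deterministic algorithm reading only that state must return the same order in both, so it is wrong in one of them. The reduction is to the ordering face of \cref{cor:incompleteness}. There the observer could not distinguish ``concurrent'' from ``causally ordered through an unseen vertex'' because the vertex was never delivered. Here it was delivered, but its dependency metadata has been removed.

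\emph{First step: fix the two histories.} In history $H$, take $m_0$ with $(m_1,m_0)\in E$ and $(m_0,m_2)\in E$. Then $m_1 \rightsquigarrow m_2$, and since $m_0$ lies between them, this path is the only one joining them. A faithful resolution must therefore put $m_1 \prec_i m_2$, where $i=\pi(m_1)=\pi(m_2)$. Such a class exists because $\pi\neq\mathrm{trivial}$, so I may place the pair in a class that entangles forks. In history $H'$, keep the same $m_0$, $m_1$, $m_2$ and the same payloads, but remove the edge $(m_1,m_0)$. This is consistent with Axiom~3 (\cref{ax:observation}): in $H'$ the creator of $m_0$ simply had not seen $m_1$. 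Now $m_1$ and $m_2$ are concurrent, so $O(\pi)$ permits either order. To force a contradiction I need $H'$ to require the opposite order. I will do this by adding a second path $m_2 \rightsquigarrow m_0' \rightsquigarrow m_1$ through a vertex $m_0'$ whose metadata is also the discarded one. Alternatively, I can take $H'$ to be the mirror image of $H$, with the roles of $m_1$ and $m_2$ swapped. The mirror version is cleaner: after $m_0$'s dependency metadata is discarded, the edges into and out of $m_0$ are exactly the information that has been lost.

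\emph{Second step: show the states coincide.} By \cref{rem:seen-retained}, $\mathrm{seen}(n,m_0,t)$ persists while the metadata does not, so $V(n,t)$ as a vertex set is the same in both histories. The retained edges are the edges not incident through $m_0$, and these agree because the two histories differ only in $m_0$'s incidences. The hypothesis $C(\mathit{deps}\neq\mathrm{none})$ is used to make the histories legal: in both, the observer holds $m_0$ and its filtered dependencies before the discard, so neither history violates closure. The hypothesis $R\neq R(0)$ guarantees that the pair is still unresolved at the time of the discard, so the algorithm cannot have committed the order earlier. Given identical states, by the determinism of whatever rule is used (\cref{thm:orthogonality}: only the state matters), the output is the same in $H$ and $H'$, and it is unfaithful in one of them. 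For randomized algorithms the same argument yields failure with probability at least $1/2$ in one history, which is still enough to defeat ``guarantee.''

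\emph{Main obstacle.} I expect the hardest step to be making precise what ``the remaining state'' contains. In particular, I must ensure that no retained information, such as vector clocks on $m_1$ and $m_2$ themselves or transitive summaries, still encodes the path through $m_0$. I will handle this by interpreting ``the only causal paths joining $m_1$ and $m_2$ run through $m_0$'' as saying that the ordering evidence lives solely in $m_0$'s dependency metadata, and by stating this explicitly as the hypothesis under which the indistinguishability holds. The mirror construction then goes through without further calculation.
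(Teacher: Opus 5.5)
Your proposal is correct and is essentially the paper's proof. The paper uses exactly your mirror pair of histories, $m_1 \to m_0 \to m_2$ versus $m_2 \to m_0 \to m_1$, which differ only in the edges incident to $m_0$; after the discard the observer's state is therefore identical, and any deterministic rule is unfaithful in one of them, while a randomized one fails with positive probability (your $1/2$ bound is a slight sharpening). The paper resolves your ``main obstacle'' the same way you propose, by stipulating that discarding $m_0$'s metadata removes every edge with $m_0$ as an endpoint wherever stored, so no retained summary encodes the path.
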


    \begin{proof}
        The observer has seen $m_1, m_0, m_2$ and must report an order on
        $\{m_1, m_2\}$. Because every causal path between them runs through $m_0$, two
        ground-truth histories are consistent with what it has seen:
        \[
            (G)\quad m_1 \to m_0 \to m_2, \qquad\qquad (G')\quad m_2 \to m_0 \to m_1 .
        \]
        In $G$ the chain forces $m_1 \to m_2$, so a faithful order places
        $m_1 \prec m_2$; in $G'$ it forces $m_2 \prec m_1$. The two histories differ
        \emph{only} in the edges incident to $m_0$. Discarding $m_0$'s dependency metadata
        removes exactly those edges---every edge with $m_0$ as an endpoint, wherever
        stored (they existed by Axiom~3, \cref{ax:observation})---so the observer retains,
        by Axiom~1 (\cref{ax:monotonic}), only that it saw $m_1, m_0, m_2$, with no edge
        among them. Its state is therefore identical in $G$ and $G'$. A deterministic
        algorithm returns one fixed order, unfaithful in whichever history disagrees; a
        randomized one is faithful with probability below one in at least one of them. The
        pair the observer believed it was free to order was, in one world, a causal chain
        it can no longer see---the misclassification of \cref{cor:incompleteness}, here
        reached by \emph{discarding} rather than never seeing: the distinguishing
        information was retained and released, not absent from the start.
    \end{proof}

    \begin{remark}[Reading the bound]\label{rem:retention-context}
        \cref{thm:retention} is folklore in the database
        community~\cite{bottcher2019,lee2016}; the contribution is reading it as a
        constraint on \emph{all} message-passing systems, not a storage detail. It is the
        cost of reunification: a fork must be resolved only if the diverged sides are meant
        to rejoin, and then each side must have retained enough causal history to do
        so---the scope-bounded pattern of Helland~\cite{helland2007} and Brewer's
        partition-mode recovery~\cite{brewer2012}, and the window after which Burgess and
        Gerlits reclaim ``Many Worlds'' branches~\cite{burgess2022}; the partition tradeoff
        it forces is \cref{cor:partition}.
    \end{remark}

    \section{The Readability Order, \texorpdfstring{$\kappa$}{kappa}-Shortcuts, and the Factoring Dichotomy}
    \label{sec:readability}

    The configuration space has so far been a set of points---the $(C,O)$ configurations,
    with no relation among them. We now equip it with an order: when is data written under
    one configuration readable under another? The order exposes the central object of the paper, the $\kappa$ map: an
    order that refines causality supplies closure the writer's filter never demanded, and
    the $12 = 4 \times 3$ configurations $(C,O)$ collapse to $8$ classes under $\preceq$.
    This is the \emph{positive} pole of $\kappa$; \cref{sec:ratchet-scars} is the negative
    one.

    \subsection{Admissible cuts and the readability order}

    \begin{definition}[Causal order and resolved order]\label{def:two-orders}
        Two orders act on $M$. The \emph{causal order} is the happened-before partial
        order the DAG carries (\cref{ax:observation}): $m' \in \mathit{deps}^*(m)$ iff $m'$
        causally precedes $m$. The \emph{resolved order} $\prec$ is the total order the
        system lays over the forks $O$ leaves open (\cref{def:fork})---the application's
        own serialization, fixed by a tiebreak invisible to $O$. The two are a priori
        independent: $\prec$ is whatever the application produces and carries no commitment
        to the DAG's structure.
    \end{definition}

    Their relationship is the subject of this section---whether $\prec$ refines the causal
    order, and what that means for a reader. Both operators, meanwhile, are the same kind of
    object---``if you see this, you must also see that.'' $C(\mathit{deps})$ demands
    downward closure under the filter; $O(\pi)$ demands that within each class the visible
    set is a prefix of $\prec_i$. Each carves out the cuts an observer may hold.

    \begin{definition}[Admissible cut]\label{def:adm}
        Fix $G$ and a canonical resolved order $\prec$ refining the causal order (a linear
        extension of $G$; \cref{def:fork}). A cut $V \subseteq M$ is \emph{admissible}
        under $\sigma = (C, O)$, written $V \in \mathrm{Adm}(\sigma, G)$, if it is
        downward-closed under $\mathit{deps}$ (\cref{def:closure}) and a per-class prefix
        of $\prec$ within each $\pi$-class (\cref{def:fork}).
    \end{definition}

    A single canonical $\prec$ represents the entire $O$-axis: each scope reads its
    per-class prefixes off that one global order, since the scopes partition the messages,
    not the readers. (Session partitions \emph{visibility}, so it is a $C$ filter, not an
    $O$ scope; \cref{sec:fork}.)

    \begin{definition}[Readability]\label{def:readable}
        $A = (C_A, O_A)$ is \emph{readable under} $B = (C_B, O_B)$, written
        $A \preceq B$, if every $A$-admissible cut is $B$-admissible on every causal DAG:
        \[
            A \preceq B \quad\stackrel{\mathrm{def}}{\Longleftrightarrow}\quad
            \forall G:\ \mathrm{Adm}(A, G) \subseteq \mathrm{Adm}(B, G).
        \]
    \end{definition}

    A stronger configuration admits \emph{fewer} cuts, so $\preceq$ is the
    consistency-strength order restricted to $(C, O)$. Though $\mathrm{Adm}(\sigma, G)$
    depends on the chosen $\prec$, the relation $\preceq$ does not: $A$ and $B$ are scored
    against the \emph{same} $\prec$ in each $G$, and the rule below turns only on the
    scopes.

    \begin{remark}[$R$ does not participate]\label{rem:no-R}
        Admissibility is a property of a static cut, evaluated without reference to when
        messages arrive. $R$ governs only whether and when a cut becomes available, never
        whether an available cut is correctly interpretable. Readability is therefore a
        pure safety relation on $(C, O)$---the readability counterpart of
        \cref{thm:orthogonality}: there $s$ is orthogonal to achievability, here $R$ is
        orthogonal to readability.
    \end{remark}

    \subsection{The \texorpdfstring{$\kappa$}{kappa} map}

    Closure filters are ordered by inclusion, from the least
    $\mathit{deps}_{\mathrm{none}}$ to the greatest $\mathit{deps}_{\mathrm{explicit}}$,
    with $\mathit{deps}_{\mathrm{object}}$ and $\mathit{deps}_{\mathrm{session}}$
    incomparable atoms whose join is strictly below $\mathit{deps}_{\mathrm{explicit}}$ (a
    parent sharing neither object nor session stays unconstrained).

    \begin{definition}[Closure entailed by ordering---the $\kappa$ map]\label{def:kappa}
        Define $\kappa : O \to \{\text{closure filters}\}$ by
        $\kappa(O(\mathrm{trivial})) = \mathit{deps}_{\mathrm{none}}$,
        $\kappa(O(\pi_{\mathrm{object}})) = \mathit{deps}_{\mathrm{object}}$,
        $\kappa(O(\mathrm{all})) = \mathit{deps}_{\mathrm{explicit}}$. The \emph{effective
        closure} of $A$ is $C^{\mathrm{eff}}_A = \mathit{deps}_{C_A} \sqcup \kappa(O_A)$.
    \end{definition}

    \begin{definition}[Causal arbitration]\label{ass:causal-arb}
        The system's resolved order $\prec$ refines the causal order:
        $m' \in \mathit{deps}^*(m) \Rightarrow m' \prec m$; equivalently
        $\mathit{hb} \subseteq \mathit{ar}$ in Burckhardt's terms~\cite{burckhardt2014}.
    \end{definition}

    $\kappa$ is one relation with two poles. An order is $\kappa$-\emph{clean} when it
    honors the entailment---$\prec$ refines causality---and bears a
    $\kappa$-\emph{scar} (\cref{def:scar}) where it does not. The same map's structural
    features in the lattice below---ordering supplying closure the writer's filter
    lacked---are its $\kappa$-\emph{shortcuts}. \Cref{sec:readability} works the clean
    pole; \cref{sec:ratchet-scars} the scarred one.

    \begin{lemma}[Ordering entails closure]\label{lem:o-entails-c}
        Under causal arbitration (\cref{ass:causal-arb}), every $O(\pi_X)$-admissible cut is
        $\mathit{deps}_X$-closed; in particular $O(\mathrm{all})$ forces
        $\mathit{deps}_{\mathrm{explicit}}$-closure.
    \end{lemma}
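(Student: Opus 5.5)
The argument unfolds the admissibility definition (\cref{def:adm}) and then uses causal arbitration (\cref{ass:causal-arb}) to turn the prefix condition into a closure condition. Let $V$ be an $O(\pi_X)$-admissible cut, so within each $\pi_X$-class $V$ is a prefix of the canonical resolved order $\prec$. We must show $V$ is closed under $\mathit{deps}_X$. Take $m \in V$ and $m' \in \mathit{deps}_X(m)$.

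\textbf{Step 1: $m'$ shares $m$'s class.} The matching filter $\mathit{deps}_X = \kappa(O(\pi_X))$ (\cref{def:kappa}) returns only dependencies in $m$'s own $\pi_X$-class, so $\pi_X(m') = \pi_X(m)$. For $\pi_{\mathrm{object}}$ this is the definition of $\mathit{deps}_{\mathrm{object}}$. For $O(\mathrm{all})$ there is a single class, so the condition holds trivially. For $O(\mathrm{trivial})$ the filter is $\mathit{deps}_{\mathrm{none}}$ and the claim is vacuous.

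\textbf{Step 2: $m'$ precedes $m$.} Since $m' \in \mathit{deps}_X(m) \subseteq \mathit{deps}^*(m)$, causal arbitration gives $m' \prec m$. Both messages lie in the same class $\pi_X^{-1}(i)$, so $m' \prec_i m$.

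\textbf{Step 3: $m'$ is in $V$.} The per-class prefix condition, which is exactly the visibility constraint of \cref{def:fork} read on a static cut, now yields $m' \in V$. Hence $V$ is closed under $\mathit{deps}_X$.

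For the $O(\mathrm{all})$ case we also want closure under the full $\mathit{deps}^*$, matching the recursive reading in \cref{def:closure}. This follows by induction along any directed path $m' \rightsquigarrow m$: each immediate predecessor is pulled into $V$ by Steps 2 and 3. Alternatively, apply Step 2 directly to any $m' \in \mathit{deps}^*(m)$, since there is only one class.

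\textbf{Main obstacle.} The real content is Step 1: checking that each scope $\pi_X$ is aligned with its filter $\mathit{deps}_X$, so that every filtered dependency lies in the same class as the message it supports. This is precisely why $\kappa$ maps $\pi_{\mathrm{object}}$ to $\mathit{deps}_{\mathrm{object}}$ and not to something larger, such as a cross-object dependency, which the per-object prefix would not force. I would therefore state this class-alignment property explicitly as the one fact used about $\kappa$.

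Finally, I would note where causal arbitration is essential. Without it, a cut can be a prefix of $\prec$ while omitting a causal predecessor that $\prec$ places later. This is the $\kappa$-scar of \cref{sec:ratchet-scars}, and it explains why the lemma holds only on the clean pole.
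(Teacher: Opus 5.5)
Your proposal is correct and follows the paper's own proof: the filtered parent lies in $m$'s $\pi_X$-class, causal arbitration gives $m' \prec m$, and the per-class prefix condition then forces $m' \in V$. Your explicit class-alignment check for each scope and the $\mathit{deps}^*$ extension under $O(\mathrm{all})$ spell out steps the paper leaves implicit; they do not change the argument.
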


    \begin{proof}
        Let $V$ be $O(\pi_X)$-admissible, $m \in V$, and $m'$ a $\mathit{deps}_X$-required
        parent of $m$. Then $m'$ lies in $m$'s $\pi_X$-class and, being a causal ancestor,
        satisfies $m' \prec m$ by causal arbitration; the per-class prefix condition of
        $O(\pi_X)$ forces $m' \in V$.
    \end{proof}

    \subsection{The readability rule and the factoring dichotomy}

    \begin{theorem}[Readability rule]\label{thm:readability}
        Under causal arbitration, $A \preceq B$ if and only if
        \[
            O_A \sqsupseteq_O O_B \quad\text{and}\quad \mathit{deps}_{C_B} \sqsubseteq C^{\mathrm{eff}}_A,
        \]
        where $\sqsupseteq_O$ is the chain
        $O(\mathrm{trivial}) \sqsubset O(\pi_{\mathrm{object}}) \sqsubset O(\mathrm{all})$.
    \end{theorem}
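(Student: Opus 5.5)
I will prove the two directions separately, working throughout with the characterisation of $\mathrm{Adm}(\sigma,G)$ as the intersection of two constraints. The first is downward closure under $\mathit{deps}_{C}$. The second is the per-class prefix condition of $O(\pi)$ against the single canonical linear extension $\prec$. By \cref{lem:o-entails-c}, the second constraint already implies $\kappa(O)$-closure. So every $A$-admissible cut is closed under $C^{\mathrm{eff}}_A = \mathit{deps}_{C_A}\sqcup\kappa(O_A)$, where the join of filters is read as union of required parents. This reduction is the first step and carries most of the weight.

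\emph{Sufficiency.} Assume $O_A \sqsupseteq_O O_B$ and $\mathit{deps}_{C_B}\sqsubseteq C^{\mathrm{eff}}_A$, and let $V\in\mathrm{Adm}(A,G)$.

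For the $C_B$ condition: $V$ is $C^{\mathrm{eff}}_A$-closed by the reduction. Any $\mathit{deps}_{C_B}$-required parent is also $C^{\mathrm{eff}}_A$-required, so it lies in $V$.

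For the $O_B$ condition: each $\pi_B$-class is a union of $\pi_A$-classes, because the chain trivial $\sqsubset$ object $\sqsubset$ all is a chain of coarsenings. The two non-trivial cases need separate arguments.
\begin{itemize}
\item If $O_B = O(\mathrm{trivial})$, nothing is required.
\item If $O_A = O_B$, the condition is immediate.
\item If $O_A = O(\mathrm{all})$, then $V$ is a global prefix of $\prec$. Its intersection with any class is therefore a prefix of $\prec$ restricted to that class.
\end{itemize}

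\emph{Necessity.} I will prove this by contraposition, constructing small witness DAGs.
\begin{itemize}
\item If $O_A \not\sqsupseteq_O O_B$, then $O_B$ entangles two messages $m_1\prec m_2$ in one $\pi_B$-class that $O_A$ places in distinct classes or leaves unresolved. Take a DAG in which they are concurrent, on the same object or on different objects as the case requires. The cut $\{m_2\}$ is then $A$-admissible, since none of the filters demands $m_1$, but it is not a $\pi_B$-prefix. One caveat: the canonical $\prec$ can be chosen with $m_1\prec m_2$, and the witness must also respect $C_A$. Choosing the messages concurrent with no parents handles every filter.
\item If $\mathit{deps}_{C_B}\not\sqsubseteq C^{\mathrm{eff}}_A$, then some edge type (an object parent, a session parent, or a parent sharing neither) is required by $C_B$ but by neither $C_A$ nor $\kappa(O_A)$. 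Take the two-vertex DAG $m'\to m$ with exactly that type of edge, and the cut $\{m\}$. This cut violates $C_B$. It is $C_A$-closed by assumption. It must also be an $O_A$-prefix, which I will secure by placing $m'$ and $m$ in different $\pi_A$-classes.
\end{itemize}

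\emph{Main obstacle.} The delicate step is this last verification, because the $\kappa$ coupling is exactly what blocks the witness. If $O_A=O(\mathrm{all})$, then $\kappa(O_A)=\mathit{deps}_{\mathrm{explicit}}$ already covers every filter, so this case never arises. If $O_A=O(\pi_{\mathrm{object}})$, the missing edge type cannot be an object edge, since $\kappa$ supplies those. It is therefore a session edge or an edge sharing neither object nor session, so $m'$ and $m$ can be put on different objects, which makes $\{m\}$ a per-object prefix. I will check this case split against the lattice of filters, including session-only and join cases, to confirm that the witness exists exactly when the inclusion fails.
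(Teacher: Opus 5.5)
Your proposal is correct and follows essentially the paper's own proof. It uses the same split of $\mathrm{Adm}$ into a closure constraint and an $O$-prefix constraint, and it gets sufficiency the same way: \cref{lem:o-entails-c}, closure read as a union of required parents, and a direct check of the $O$-prefix condition. Necessity uses the same two contrapositive witnesses: two parentless concurrent messages with the cut $\{m_2\}$, and a single edge $(m',m)$ of the missing type with $m',m$ in distinct $O_A$-classes and the cut $\{m\}$.

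One harmless slip: under $O_A \sqsupseteq_O O_B$ it is each $\pi_A$-class that is a union of $\pi_B$-classes, since the stronger scope is the coarser partition, not the reverse. Your three-case check of the $O_B$-prefix condition never actually uses that sentence, so nothing breaks.
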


    \begin{proof}
        Write $\mathrm{Adm}(A,G) = \mathrm{Adm}_C(C_A,G) \cap \mathrm{Adm}_O(O_A,G)$, the
        cuts that are $\mathit{deps}_{C_A}$-closed and an $O_A$-prefix, and likewise for
        $B$. We isolate two facts about the constituents, neither relying on
        causal arbitration.

        \textbf{(O1) $O$-monotonicity.} If $O_A \sqsupseteq_O O_B$ then
        $\mathrm{Adm}_O(O_A,G) \subseteq \mathrm{Adm}_O(O_B,G)$ for every $G$. As
        $\sqsupseteq_O$ is the chain
        $O(\mathrm{trivial}) \sqsubset O(\pi_{\mathrm{object}}) \sqsubset O(\mathrm{all})$,
        it suffices to check its two covers. For
        $O(\mathrm{all}) \sqsupseteq_O O(\pi_{\mathrm{object}})$: an
        $O(\mathrm{all})$-admissible cut is an initial segment of the single global order
        $\prec$, and the restriction of an initial segment to any object class is an
        initial segment of $\prec$ on that class---an $O(\pi_{\mathrm{object}})$-prefix.
        For $O(\pi_{\mathrm{object}}) \sqsupseteq_O O(\mathrm{trivial})$:
        $O(\mathrm{trivial})$ imposes no prefix constraint, so the inclusion is vacuous.

        \textbf{(C1) Closure is conjunctive.} A cut is $\mathit{deps}_f$- and
        $\mathit{deps}_g$-closed iff it is $\mathit{deps}_{f \sqcup g}$-closed, since
        $\mathit{deps}_{f \sqcup g}$ requires exactly the parent edges required by $f$ or
        by $g$. Consequently $\mathit{deps}_{C_B} \sqsubseteq h$ implies every
        $\mathit{deps}_h$-closed cut is $\mathit{deps}_{C_B}$-closed.

        \emph{Sufficiency.} Assume $O_A \sqsupseteq_O O_B$ and
        $\mathit{deps}_{C_B} \sqsubseteq C^{\mathrm{eff}}_A$. Fix $G$ and
        $V \in \mathrm{Adm}(A,G)$. By~(O1), $V \in \mathrm{Adm}_O(O_B,G)$, so $B$'s
        ordering constraint holds. For closure: $V$ is $\mathit{deps}_{C_A}$-closed by
        hypothesis and $O_A$-admissible, so by \cref{lem:o-entails-c}---the single appeal to
        causal arbitration---$V$ is $\kappa(O_A)$-closed; by~(C1) it is
        $C^{\mathrm{eff}}_A = \mathit{deps}_{C_A} \sqcup \kappa(O_A)$-closed, hence
        $\mathit{deps}_{C_B}$-closed. Thus $V \in \mathrm{Adm}(B,G)$, and as $G, V$ were
        arbitrary, $A \preceq B$.

        \emph{Necessity.} We prove the contrapositive: if the right-hand side fails, we
        exhibit one $G$ and a cut $V \in \mathrm{Adm}(A,G) \setminus \mathrm{Adm}(B,G)$,
        which by \cref{def:readable} refutes $A \preceq B$. A single counterexample
        suffices precisely because readability quantifies over all $G$.

        \emph{Case 1: $O_A \not\sqsupseteq_O O_B$.} Then $O_B$ enforces a prefix that $O_A$
        does not. Take $G$ with two messages $m_1 \prec m_2$ sharing a $B$-class but not an
        $A$-class, on fresh objects/sessions so $\mathit{deps}_{C_A}$ requires neither in
        the other's closure. Let $V = \{m_2\}$: it is $\mathit{deps}_{C_A}$-closed (no
        required parent) and $O_A$-admissible ($m_2$ alone is a prefix of its $A$-class),
        so $V \in \mathrm{Adm}(A,G)$; but within the shared $B$-class it holds $m_2$
        without its $\prec$-predecessor $m_1$, violating the $O_B$-prefix, so
        $V \notin \mathrm{Adm}(B,G)$.

        \emph{Case 2: $\mathit{deps}_{C_B} \not\sqsubseteq C^{\mathrm{eff}}_A$.} Then some
        dependency-edge type is required by $C_B$ but lies in neither
        $\mathit{deps}_{C_A}$ nor $\kappa(O_A)$; in particular $O_A \neq O(\mathrm{all})$,
        since $\kappa(O(\mathrm{all})) = \mathit{deps}_{\mathrm{explicit}}$ dominates every
        requirement. Take $G$ to be the single edge $(m', m)$ of that type, with $m', m$ in
        distinct $O_A$-classes---possible exactly because the edge is not a
        $\kappa(O_A)$-edge. Let $V$ be the $\subseteq$-least $A$-admissible cut containing
        $m$; since $(m', m) \notin C^{\mathrm{eff}}_A$, neither $A$'s closure nor its
        $\kappa(O_A)$-prefix draws $m'$ into $V$, so $m' \notin V$. Then
        $V \in \mathrm{Adm}(A,G)$ but is not $\mathit{deps}_{C_B}$-closed (the required
        parent $m'$ is absent), so $V \notin \mathrm{Adm}(B,G)$.

        In each case $A \not\preceq B$, completing necessity.
    \end{proof}

    \begin{theorem}[Factoring dichotomy]\label{thm:dichotomy}
        Under causal arbitration, $\preceq$ does \emph{not} factor
        into $\sqsupseteq_C \times \sqsupseteq_O$; the gap is exactly the
        $\kappa$-shortcuts of \cref{lem:o-entails-c}. Without it, $\preceq$ factors:
        $A \preceq B \iff \mathit{deps}_{C_A} \sqsupseteq \mathit{deps}_{C_B}$ and
        $O_A \sqsupseteq_O O_B$. Cross-axis coupling is the signature of causally coherent
        ordering.
    \end{theorem}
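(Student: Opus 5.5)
The plan is to read both halves of the dichotomy off \cref{thm:readability}, working separately with and without causal arbitration.

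\textbf{Non-factoring under causal arbitration.} By \cref{thm:readability}, $A \preceq B$ iff $O_A \sqsupseteq_O O_B$ and $\mathit{deps}_{C_B} \sqsubseteq \mathit{deps}_{C_A} \sqcup \kappa(O_A)$. The product order $\sqsupseteq_C \times \sqsupseteq_O$ instead demands $\mathit{deps}_{C_B} \sqsubseteq \mathit{deps}_{C_A}$. One witness pair suffices to separate them. I will take $A = (C(\mathit{deps}_{\mathrm{none}}), O(\mathrm{all}))$ and $B = (C(\mathit{deps}_{\mathrm{explicit}}), O(\mathrm{all}))$. Then $A \preceq B$ holds because $\kappa(O(\mathrm{all})) = \mathit{deps}_{\mathrm{explicit}}$, even though $\mathit{deps}_{\mathrm{none}} \not\sqsupseteq \mathit{deps}_{\mathrm{explicit}}$.

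To show that the gap is \emph{exactly} the $\kappa$-shortcuts, I compare the two order relations pair by pair. Every product-related pair is $\preceq$-related, since $\mathit{deps}_{C_A} \sqsubseteq C^{\mathrm{eff}}_A$. A pair lies in $\preceq$ but outside the product precisely when $\mathit{deps}_{C_B} \sqsubseteq \mathit{deps}_{C_A} \sqcup \kappa(O_A)$ while $\mathit{deps}_{C_B} \not\sqsubseteq \mathit{deps}_{C_A}$. In such a pair some edge type required by $C_B$ is supplied only by $\kappa(O_A)$, which is the content of \cref{lem:o-entails-c}. If the 12-configuration enumeration is wanted, it is a finite check.

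\textbf{Factoring without causal arbitration.} Here I reuse the proof of \cref{thm:readability} and identify its single appeal to causal arbitration, which is \cref{lem:o-entails-c}. Sufficiency of $\mathit{deps}_{C_A} \sqsupseteq \mathit{deps}_{C_B}$ together with $O_A \sqsupseteq_O O_B$ follows from (O1) and (C1), and neither of these uses arbitration. Necessity of the $O$-condition comes from Case~1 unchanged.

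The new step is necessity of the $C$-condition when $\prec$ need not refine causality. The key freedom is that $\mathrm{Adm}$ is scored against the resolved order $\prec$, and without arbitration $\prec$ may place a parent after its child. So if $\mathit{deps}_{C_B} \not\sqsubseteq \mathit{deps}_{C_A}$, I take a single edge $(m',m)$ of a type required by $C_B$ but not by $C_A$, and choose $\prec$ with $m \prec m'$. The cut $V = \{m\}$ is then $\mathit{deps}_{C_A}$-closed. It is also an $O_A$-prefix in every class, even $O(\mathrm{all})$, because $m$ is $\prec$-first. But $V$ is not $\mathit{deps}_{C_B}$-closed, so $V \in \mathrm{Adm}(A,G) \setminus \mathrm{Adm}(B,G)$.

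\textbf{Main obstacle.} I expect the hard part to be making this ``without arbitration'' reading precise. \cref{def:adm} fixes $\prec$ as a linear extension, so I will restate the admissibility sets with $\prec$ ranging over arbitrary total orders. Readability must then quantify over pairs $(G,\prec)$, or at least admit some non-refining $\prec$. Under that reading the counterexample above is legitimate and the dichotomy follows.
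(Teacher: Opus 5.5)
Your proposal is correct and follows the paper's proof. Under causal arbitration you read the condition off \cref{thm:readability}, obtain containment of the product order from $\mathit{deps}_{C_A} \sqsubseteq C^{\mathrm{eff}}_A$, and separate the two with one $\kappa$-witness (your $B = (\mathrm{explicit},\mathrm{all})$ serves as well as the paper's $(\mathrm{explicit},\mathrm{trivial})$); without arbitration you reuse (O1), (C1) and Case~1 just as the paper does. Your explicit $C$-necessity witness, which inverts the edge so that $m \prec m'$, is tighter than the paper's appeal to rerunning Case~2 with $\kappa \equiv \mathit{deps}_{\mathrm{none}}$: Case~2 places $m', m$ in distinct $O_A$-classes, which is impossible for $O_A = O(\mathrm{all})$ or for a same-object edge under $O(\pi_{\mathrm{object}})$; and your observation that \cref{def:adm} must be relaxed to non-refining $\prec$ makes explicit what the paper leaves implicit.
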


    \begin{proof}
        Under causal arbitration, \cref{thm:readability} gives
        $A \preceq B \iff O_A \sqsupseteq_O O_B \wedge \mathit{deps}_{C_B} \sqsubseteq
        \mathit{deps}_{C_A} \sqcup \kappa(O_A)$. The product order is
        $A \preceq^{\times} B \iff O_A \sqsupseteq_O O_B \wedge \mathit{deps}_{C_B}
        \sqsubseteq \mathit{deps}_{C_A}$. The two agree on the $O$ coordinate and differ
        only by replacing $\mathit{deps}_{C_A}$ with $\mathit{deps}_{C_A} \sqcup
        \kappa(O_A)$; since $x \sqsubseteq y \Rightarrow x \sqsubseteq y \sqcup z$, we have
        $\preceq^{\times} \subseteq \preceq$. The inclusion is strict whenever
        $\kappa(O_A)$ supplies a closure edge $\mathit{deps}_{C_A}$ lacks: with
        $A = (\mathrm{none}, \mathrm{all})$ and $B = (\mathrm{explicit}, \mathrm{trivial})$,
        $C^{\mathrm{eff}}_A = \kappa(O(\mathrm{all})) = \mathit{deps}_{\mathrm{explicit}}
        \sqsupseteq \mathit{deps}_{C_B}$ gives $A \preceq B$, while
        $\mathit{deps}_{C_A} = \mathit{deps}_{\mathrm{none}} \not\sqsupseteq
        \mathit{deps}_{\mathrm{explicit}}$ gives $A \not\preceq^{\times} B$. The gap is
        exactly the set of $\kappa$-edges (\cref{lem:o-entails-c,cor:nonfactor}).

        Drop causal arbitration. Then \cref{lem:o-entails-c} fails: the resolved order may
        place a causal parent $m'$ after its child, $m \prec m'$ though
        $m' \in \mathit{deps}^*(m)$, so an
        $O(\pi_X)$-prefix cut need not be $\mathit{deps}_X$-closed. Ordering then entails no
        closure, $\kappa \equiv \mathit{deps}_{\mathrm{none}}$, and $C^{\mathrm{eff}}_A =
        \mathit{deps}_{C_A}$. Facts~(O1) and~(C1) in the proof of \cref{thm:readability} did
        not use causal arbitration, so the same sufficiency/necessity argument with this
        $\kappa$ yields $A \preceq B \iff O_A \sqsupseteq_O O_B \wedge \mathit{deps}_{C_A}
        \sqsupseteq \mathit{deps}_{C_B}$---the product order (\cref{fig:readability-lww}).
        Hence the cross-axis coupling is present under causal arbitration and absent
        without it.
    \end{proof}

    \begin{remark}[The dichotomy is a cliff, not a slope]\label{rem:cliff}
        Causal arbitration enters all-or-nothing. Because $\preceq$ quantifies over
        \emph{all} $G$ (\cref{def:readable}), a $\kappa$-shortcut $A \preceq B$ survives
        only if every execution honours causal arbitration on the relevant edges: one $G$
        that inverts a single such edge already carries an $A$-admissible cut omitting a
        $\mathit{deps}_{C_B}$-required parent (the Case~2 witness of
        \cref{thm:readability}), refuting $A \preceq B$. So any positive rate of causal
        inversion collapses every $\kappa$-edge at once, returning the pure product order;
        there is no intermediate regime. The smooth object, for anyone wanting graceful
        degradation, is an \emph{$\varepsilon$-readability} relation relaxing $\forall G$ to
        ``all but an $\varepsilon$-fraction.'' The cliff is a property of the universal
        quantifier, not of the systems.
    \end{remark}

    \begin{figure}[t]
        \centering
        \includegraphics[width=\linewidth]{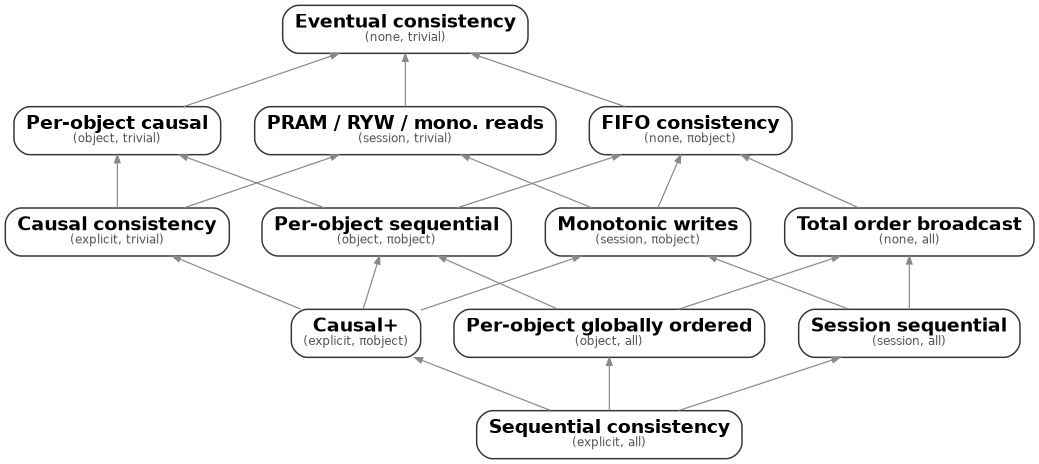}
        \caption{Readability when causal arbitration fails. With $\prec$ free to invert a
        causal edge, $\kappa$ is inert: no cross-axis shortcut holds, all $12$
            configurations are distinct, and $\preceq$ is the product order
            $\sqsupseteq_C \times \sqsupseteq_O$---the degenerate counterpart of the
            causal-arbitration lattice, and the second half of \cref{thm:dichotomy}. (An
            edge $A \to B$ means data written under $A$ is readable under $B$; the figure is
            rendered with physical-clock last-writer-wins as the causal violation.)}
        \label{fig:readability-lww}
    \end{figure}

    \begin{corollary}[Readability does not factor]\label{cor:nonfactor}
        $\preceq$ is strictly coarser than the product order
        $\sqsupseteq_C \times \sqsupseteq_O$: there exist $A \preceq B$ with
        $\mathit{deps}_{C_A} \not\sqsupseteq \mathit{deps}_{C_B}$. Every such pair arises
        through $\kappa$---the writer's ordering supplying closure its $C$ did not.
    \end{corollary}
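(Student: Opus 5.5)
The plan is to read \cref{cor:nonfactor} directly off \cref{thm:readability} and the first half of the proof of \cref{thm:dichotomy}; it needs no new construction. Work under causal arbitration (\cref{ass:causal-arb}), the standing hypothesis of the lattice. The corollary has two parts: strict coarseness, and the claim that every off-product pair arises through $\kappa$.

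\emph{Coarseness.} First show $\preceq^{\times} \subseteq \preceq$. Suppose $\mathit{deps}_{C_A} \sqsupseteq \mathit{deps}_{C_B}$ and $O_A \sqsupseteq_O O_B$. Since $\mathit{deps}_{C_A} \sqsubseteq \mathit{deps}_{C_A} \sqcup \kappa(O_A) = C^{\mathrm{eff}}_A$, we get $\mathit{deps}_{C_B} \sqsubseteq C^{\mathrm{eff}}_A$. Sufficiency in \cref{thm:readability} then gives $A \preceq B$.

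\emph{Strictness.} Use the witness $A = (\mathrm{none}, \mathrm{all})$ and $B = (\mathrm{explicit}, \mathrm{trivial})$. The $O$-condition holds because $O(\mathrm{all}) \sqsupseteq_O O(\mathrm{trivial})$. The closure condition holds because $C^{\mathrm{eff}}_A = \kappa(O(\mathrm{all})) = \mathit{deps}_{\mathrm{explicit}}$. So $A \preceq B$, yet $\mathit{deps}_{\mathrm{none}} \not\sqsupseteq \mathit{deps}_{\mathrm{explicit}}$. To make the witness independent of the theorem, exhibit it concretely: by \cref{lem:o-entails-c}, every $O(\mathrm{all})$-admissible cut is an initial segment of a linear extension, hence downward-closed under $\mathit{deps}^*$, hence $\mathit{deps}_{\mathrm{explicit}}$-closed.

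\emph{Attribution to $\kappa$.} Take any pair with $A \preceq B$ and $\mathit{deps}_{C_A} \not\sqsupseteq \mathit{deps}_{C_B}$. Necessity in \cref{thm:readability} gives $\mathit{deps}_{C_B} \sqsubseteq \mathit{deps}_{C_A} \sqcup \kappa(O_A)$. Since $\mathit{deps}_{C_A}$ alone does not cover $\mathit{deps}_{C_B}$, some edge type required by $C_B$ lies in $\kappa(O_A)$ but not in $\mathit{deps}_{C_A}$. That edge type is exactly a $\kappa$-shortcut: closure supplied by the writer's ordering, not by its filter.

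\emph{Main obstacle.} The delicate step is this attribution. It is sound only if "edge type" is well defined, i.e.\ if the filters form a lattice in which $\sqcup$ is the union of required edges. The inference "covered by the join but not by the left operand, so some required type lies in the right operand" needs joins to behave as unions of requirement sets. That is exactly fact~(C1) in the proof of \cref{thm:readability}, so I would cite it explicitly.

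I would also check the four filters against the three $\kappa$ values case by case. For example, $\mathit{deps}_{\mathrm{session}}$ is covered only when $\kappa(O_A) = \mathit{deps}_{\mathrm{explicit}}$. This confirms that the join $\mathit{deps}_{\mathrm{object}} \sqcup \mathit{deps}_{\mathrm{session}}$, which lies strictly below $\mathit{deps}_{\mathrm{explicit}}$, introduces no spurious pairs.
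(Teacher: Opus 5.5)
Your proposal is correct and follows the paper's route: the paper reads this corollary off the first half of the proof of \cref{thm:dichotomy}. There, $\preceq^{\times} \subseteq \preceq$ follows from $x \sqsubseteq y \Rightarrow x \sqsubseteq y \sqcup z$, strictness from the same witness $(\mathrm{none},\mathrm{all}) \preceq (\mathrm{explicit},\mathrm{trivial})$, and the attribution to $\kappa$ from the readability rule. Your explicit use of the necessity direction of \cref{thm:readability} together with fact~(C1), to place every off-product pair on a $\kappa(O_A)$-edge, just spells out what the paper compresses into ``the gap is exactly the set of $\kappa$-edges.''
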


    The coupling is one-directional. In this safety relation only the $O \to C$ edge
    appears: ordering supplies closure (\cref{lem:o-entails-c}), and it survives the
    removal of $R$ (\cref{rem:no-R}). Closure never supplies ordering, and the failure-side
    coupling of \cref{cor:incompleteness}---which lives in waiting, not in a static
    cut---does not appear here at all.

    \begin{corollary}[Closure is unobservable under a global order]\label{cor:c-blind}
        Under $O(\mathrm{all})$ the closure scope is read-irrelevant:
        $(\mathrm{none},\mathrm{all})$, $(\mathrm{object},\mathrm{all})$,
        $(\mathrm{session},\mathrm{all})$, and $(\mathrm{explicit},\mathrm{all})$ form a
        single $\preceq$-equivalence class---the strongest class, readable under every
        configuration.
    \end{corollary}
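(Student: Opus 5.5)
The plan is to read the corollary off \cref{thm:readability} directly, instantiating it at the four configurations $(\mathit{deps}_X,\mathrm{all})$ with $X \in \{\mathrm{none},\mathrm{object},\mathrm{session},\mathrm{explicit}\}$. The single computation that does all the work is the effective closure. By \cref{def:kappa}, $\kappa(O(\mathrm{all})) = \mathit{deps}_{\mathrm{explicit}}$, and $\mathit{deps}_{\mathrm{explicit}}$ is the top of the closure-filter order. Hence for every $X$,
\[
C^{\mathrm{eff}}_{(X,\mathrm{all})} = \mathit{deps}_X \sqcup \mathit{deps}_{\mathrm{explicit}} = \mathit{deps}_{\mathrm{explicit}},
\]
independent of $X$. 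This is the precise sense in which the closure coordinate is invisible under a global order.

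\emph{Step 1 (strongest, readable under everything).} Fix $A = (\mathit{deps}_X,\mathrm{all})$ and an arbitrary $B = (C_B,O_B)$. I check the two conditions of \cref{thm:readability}. The ordering condition $O(\mathrm{all}) \sqsupseteq_O O_B$ holds because $O(\mathrm{all})$ is the top of the chain $\sqsupseteq_O$. The closure condition $\mathit{deps}_{C_B} \sqsubseteq C^{\mathrm{eff}}_A = \mathit{deps}_{\mathrm{explicit}}$ holds because $\mathit{deps}_{\mathrm{explicit}}$ is the greatest filter. So $A \preceq B$ for every $B$, which is the clause ``readable under every configuration.''

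\emph{Step 2 (single equivalence class).} Applying Step~1 with $B$ ranging over the other three configurations of the form $(\cdot,\mathrm{all})$ gives mutual $\preceq$ among all four. Hence they form one $\preceq$-equivalence class.

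\emph{Step 3 (no other configuration joins the class).} This step is the only one needing care, and I would make ``strongest class'' precise by exhibiting any $B$ with $O_B \neq O(\mathrm{all})$ as strictly weaker. Such a $B$ satisfies $B \not\preceq (\mathrm{none},\mathrm{all})$, because the ordering condition $O_B \sqsupseteq_O O(\mathrm{all})$ fails. The necessity direction of \cref{thm:readability} (Case~1) then supplies the witnessing cut. So the class sits strictly above every other configuration.

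I expect no step to be a real obstacle. The one subtlety is to state explicitly that the corollary inherits the causal-arbitration hypothesis of \cref{thm:readability}. Without it, \cref{thm:dichotomy} makes $\kappa$ inert, the four configurations separate in the product order, and closure becomes observable again. I would note this in a sentence so the corollary is not misread as unconditional.
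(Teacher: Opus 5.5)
Your proposal is correct and follows the paper's own (implicit) derivation. The corollary is read off \cref{thm:readability}: $\kappa(O(\mathrm{all})) = \mathit{deps}_{\mathrm{explicit}}$ pins $C^{\mathrm{eff}}$ to the top filter whatever $C$ is, and $O(\mathrm{all})$ tops the $O$-chain. Your Step~3 (no configuration with $O_B \neq O(\mathrm{all})$ joins the class) and your note that the statement inherits the causal-arbitration hypothesis are both sound, and they make explicit what the paper leaves tacit.
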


    The collapse is countable. The $12 = 4 \times 3$ configurations reduce to $8$ classes
    under $\preceq$: at $O(\mathrm{trivial})$ ($\kappa = \mathit{deps}_{\mathrm{none}}$) the four $C$-scopes stay
    distinct; at $O(\pi_{\mathrm{object}})$, $\kappa = \mathit{deps}_{\mathrm{object}}$
    absorbs $\mathrm{none}$ into $\mathrm{object}$, leaving three; at $O(\mathrm{all})$ they
    collapse to one (\cref{cor:c-blind}). $4 + 3 + 1 = 8$. The eight are the
    $\kappa$-shortcuts made concrete---each is an order supplying closure its
    filter did not (\cref{fig:readability}).

    \begin{figure}[t]
        \centering
        \includegraphics[width=\linewidth]{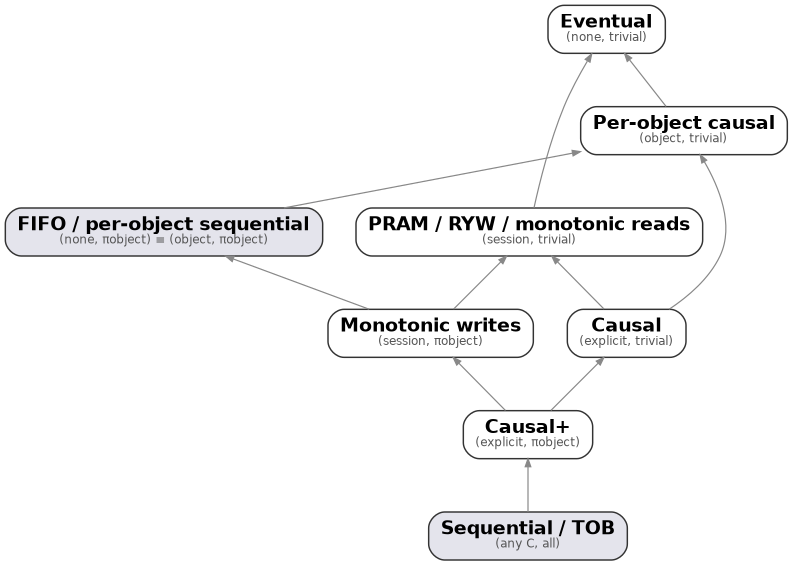}
        \caption{The readability order $\preceq$ as its $8$ equivalence classes. Each node
        is one class, and its subtitle gives the $(C,O)$ configurations the class
        contains: six are singletons, and the two shaded classes are the nontrivial
            $\kappa$-collapses---$(\mathrm{none},\pi_{\mathrm{object}}) \equiv
            (\mathrm{object},\pi_{\mathrm{object}})$, and all four $C$-scopes at
            $O(\mathrm{all})$---so the $12 = 4 \times 3$ configurations reduce to $8$. An
            edge $A \to B$ means data written under $A$ is readable under $B$ (a cover; $A$
            strictly stronger). Strongest (the $O(\mathrm{all})$ class) at the bottom;
            weakest $(\mathrm{none},\mathrm{trivial})$ at the top.}
        \label{fig:readability}
    \end{figure}

    \section{The Mergeability Relation}
    \label{sec:merge}

    Readability (\cref{sec:readability}) was asymmetric---source to target. Its symmetric
    counterpart asks when two divergent views \emph{reconcile}. Take two cuts $V_A, V_B$
    over the same DAG, each resolved under its own order $\prec_A, \prec_B$---two replicas,
    two frames---and ask whether the union $V_A \cup V_B$ is admissible under a target
    $M = (C_M, O_M)$. The configuration's two axes behave oppositely. \emph{Closure merges
    for free:} the union of two $\mathit{deps}$-closed sets is $\mathit{deps}$-closed, so
    $C$ never obstructs. \emph{Ordering is the whole story:} $V_A \cup V_B$ is
    $O_M$-admissible exactly when the two source orders, restricted to each $O_M$-class,
    are jointly acyclic---when they never disagree on a co-ordered pair.

    \begin{definition}[Merge]\label{def:merge}
        Cuts $V_A, V_B$ over $G$, resolved under $\prec_A, \prec_B$, \emph{merge} under
        $M = (C_M, O_M)$---written $\mathrm{merge}(V_A, V_B; M)$---if $V_A \cup V_B$ is
        $M$-admissible (\cref{def:adm}) under some order extending $\prec_A$ and $\prec_B$
        within each $O_M$-class.
    \end{definition}

    \begin{theorem}[Mergeability]\label{thm:merge}
        $\mathrm{merge}(V_A, V_B; M)$ holds if and only if
        \begin{enumerate}
            \item \textup{(closure---free)} $V_A$ and $V_B$ are each $C_M$-closed; and
            \item \textup{(order---binding)} within every $O_M$-class, $\prec_A$ and
            $\prec_B$ agree on the pairs they both order; equivalently,
            ${\prec_A} \cup {\prec_B}$ restricted to each $O_M$-class is acyclic.
        \end{enumerate}
    \end{theorem}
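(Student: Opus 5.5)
The proof splits along the two axes of $M$: closure first, then ordering. The closure clause is a lattice fact about downward-closed sets, and the ordering clause reduces to the standard linear-extension characterisation of a union of partial orders. Throughout, $V_A \cup V_B$ is judged against \cref{def:adm}, with the canonical $\prec$ replaced by an order $\prec_M$ extending $\prec_A$ and $\prec_B$ within each $O_M$-class, as \cref{def:merge} permits.

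\emph{Closure.} For sufficiency, the union of two $\mathit{deps}_{C_M}$-closed sets is closed. If $m \in V_A \cup V_B$, say $m \in V_A$, then every $m' \in \mathit{deps}_{C_M}(m)$ lies in $V_A$. For necessity, I will read clause~1 as the requirement that each source cut already be $C_M$-closed. This reading matters. A pair $V_A \not\ni m'$, $V_B \ni m'$ could make the union closed while $V_A$ alone is not, so clause~1 is not literally necessary for closure of the union. I will therefore state necessity under the convention that the merge inputs are admissible source views, or equivalently argue that a merge must be admissible for every interleaving of arrival. That convention is the step I expect to be delicate, and I would make it explicit rather than hide it.

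\emph{Ordering, sufficiency.} Fix an $O_M$-class $i$. Assume ${\prec_A} \cup {\prec_B}$ restricted to $\pi^{-1}(i)$ is acyclic. Its transitive closure is then a partial order, and by Szpilrajn it has a linear extension $\prec_M^i$. Concatenating these extensions across classes gives the order $\prec_M$ required by \cref{def:merge}. It remains to show that $(V_A \cup V_B) \cap \pi^{-1}(i)$ is a prefix of $\prec_M^i$. Here I take each $V_X \cap \pi^{-1}(i)$ to be a prefix of $\prec_X$, since the inputs are resolved cuts. I then extend with the elements outside the union placed last: choose the linear extension so that elements of $V_A \cup V_B$ precede all others in the class. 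This is legitimate because no $\prec_A$ or $\prec_B$ edge runs from an unseen element to a seen one, by the prefix property of each source. Topologically sorting the seen elements first therefore respects both orders.

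\emph{Ordering, necessity.} If the union of the two orders has a cycle within a class, no total order $\prec_M$ can extend both. Hence no witnessing order exists and the merge fails. Finally, the equivalence of ``agree on pairs they both order'' with acyclicity uses that each $\prec_X$ is total on its own cut. A cycle through elements of both cuts then shortcuts, within $\prec_A$ and within $\prec_B$, down to a length-two cycle $m_1 \prec_A m_2$, $m_2 \prec_B m_1$ on a shared pair. I will argue this shortcutting by induction on cycle length, collapsing consecutive same-order edges by transitivity. I expect this collapse, together with the convention issue in the closure clause, to be the main obstacles.
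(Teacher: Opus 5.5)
Your decomposition is the paper's: closure by monotonicity of downward-closure under union, ordering by ``a common extension exists iff ${\prec_A}\cup{\prec_B}$ is acyclic on each class,'' and a transitivity collapse of any cycle to one disagreeing pair. Your caveat on clause~1 is correct, and the paper's proof has the same hole. Its converse shows only that each $C_M$-parent a source omits must be supplied by the other source. That is weaker than each source being $C_M$-closed, so the ``only if'' for clause~1 does rest on your convention that the inputs are themselves $C_M$-admissible.

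Two steps do not go through as written. \emph{First, the prefix step.} You read $V_X \cap \pi^{-1}(i)$ as a prefix of a $\prec_X$ defined beyond $V_X$. Under that reading, the prefix property of $V_A$ forbids only $\prec_A$-edges entering $V_A$ from outside. It does not forbid an edge $x \prec_A y$ with $x \notin V_A \cup V_B$ and $y \in V_B \setminus V_A$, and such an edge makes ``seen elements first'' impossible.

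What excludes that edge is your hypothesis together with totality. If $\prec_B$ also totally orders the class, then $y \prec_B x$ would close a $2$-cycle with $x \prec_A y$. Hence $x \prec_B y$, and the $\prec_B$-prefix property puts $x \in V_B$. Without totality of both orders on the class, the step fails, and so does the theorem. Take one class $\{x,y\}$, with $x \prec_A y$ and $V_A = \varnothing$, and let $\prec_B$ order only $V_B = \{y\}$. The union of the orders is acyclic, yet $\{y\}$ is a prefix of no order extending $\prec_A$.

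So fix the domain convention explicitly, as you did for closure. One option is that each $\prec_X$ orders exactly $V_X$ within each class; this is the paper's implicit reading, where ``pairs they both order'' are the shared events. Then no source edge touches an unseen element, and placing unseen elements last is free. The other option is that both orders are total on the class; then acyclicity forces $\prec_A = \prec_B$ there, and the union of two prefixes of one order is a prefix. The paper just asserts that a prefix-preserving extension exists, so you must supply this yourself.

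\emph{Second, the collapse.} Induction by merging consecutive same-order edges stalls on a strictly alternating cycle such as $a \prec_A b \prec_B c \prec_A d \prec_B a$. This is realisable with $\prec_A$ being $c,d,a,b$ and $\prec_B$ being $b,c,d,a$, and it has no two consecutive edges from one source. The paper's claim that ``any cyclic walk contains a two-edge subpath within one source'' makes the same slip.

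Add the missing case. In an alternating cycle every vertex is an endpoint of both an $A$-edge and a $B$-edge, so each edge's endpoints are also ordered by the other source. If that source disagrees, you already have the $2$-cycle. If it agrees, relabel the edge and collapse, which shortens the cycle.
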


    \begin{proof}
        \emph{Closure.} If $V_A, V_B$ are $C_M$-closed, so is $V_A \cup V_B$: any
        $x \in V_A \cup V_B$ lies in some source, which already contains every parent
        $\mathit{deps}_{C_M}(x)$ demands, so those parents lie in the union. Closure is
        monotone under union, and~(1)---the readability $C$-condition
        (\cref{thm:readability}) on each source---is the only closure obligation, automatic
        for causally-closed prefixes. Conversely an $M$-admissible union is $C_M$-closed,
        and a source omitting a required parent the other does not supply is already
        inadmissible.

        \emph{Order.} $V_A \cup V_B$ is $O_M$-admissible iff within each $O_M$-class some
        total order extends both sources and keeps the visible set a prefix; such an order
        exists iff ${\prec_A} \cup {\prec_B}$ restricted to the class is acyclic. A cycle
        collapses to a contradicted pair: each source totally orders its class, so any
        cyclic walk contains a two-edge subpath $m_1 \prec_X m_2 \prec_X m_3$ within one
        source $X$, which forces $m_1 \prec_X m_3$ by transitivity and shortens the cycle;
        iterating leaves a $2$-cycle $m_1 \prec_A m_2$, $m_2 \prec_B m_1$ on a
        jointly-ordered pair. Acyclicity is thus exactly the pairwise agreement of~(2); a
        topological sort then witnesses the merged order, and a contradicted pair is the
        conflict.
    \end{proof}

    \begin{corollary}[Witness and conflict]\label{cor:merge-witness}
        When \cref{thm:merge} holds, a merged order $\prec_M$ is any per-class topological
        sort of ${\prec_A} \cup {\prec_B}$; when it fails, a minimal cycle---a single
        oppositely-ordered pair---names the two events that cannot occupy one frame. One
        acyclicity test decides both.
    \end{corollary}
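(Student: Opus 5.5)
The corollary follows from the order half of the proof of \cref{thm:merge}, read as a constructive procedure. I would fix an $O_M$-class $i$ and form the directed graph $D_i$ on the visible messages of that class, with an arc $m_1 \to m_2$ whenever $m_1 \prec_A m_2$ or $m_1 \prec_B m_2$. Everything then reduces to the standard fact that a finite digraph either admits a topological sort or contains a directed cycle. Running one depth-first search, or Kahn's algorithm, on each $D_i$ produces exactly one of these two outcomes. That is the ``one acyclicity test decides both'' clause.

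\emph{Witness direction.} Suppose \cref{thm:merge} holds, so every $D_i$ is acyclic. Let $\prec_M$ be any per-class topological sort, extended arbitrarily across classes.
\begin{itemize}
\item By construction $\prec_M$ extends $\prec_A$ and $\prec_B$ within each class.
\item The prefix condition holds because $V_A$ and $V_B$ are prefixes of their own orders. So any message preceding a member of the union, in either source order, is already in the union. Hence the union is downward-closed in $D_i$, and a topological sort can list the union's elements first.
\item Closure is supplied by condition~(1) of \cref{thm:merge}.
\end{itemize}
So $\prec_M$ witnesses $\mathrm{merge}(V_A,V_B;M)$ in the sense of \cref{def:merge}. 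Conversely, any witnessing order is a linear extension of $D_i$ and so is such a sort. That gives ``any'' in both directions.

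\emph{Conflict direction.} If the merge fails while closure holds, some $D_i$ has a directed cycle. I take a cycle of minimal length and reuse the shortening argument from the proof of \cref{thm:merge}. Each source order is transitive and total on the pairs it orders. So two consecutive arcs of the same source, $m_1 \prec_X m_2 \prec_X m_3$, can be replaced by the single arc $m_1 \prec_X m_3$. This strictly shortens the cycle, contradicting minimality unless the arcs alternate between sources. An alternating cycle of length above $2$ still admits a shortcut through totality of one source on a jointly ordered triple. So a minimal cycle has length $2$: a pair with $m_1 \prec_A m_2$ and $m_2 \prec_B m_1$. No single total order on the class can contain both, which is exactly ``two events that cannot occupy one frame.''

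\emph{Main obstacle.} The delicate step is this length-$2$ reduction when each source orders only its own cut, not the whole class. In that case a cross-source pair $m_1, m_3$ may be ordered by neither source, so the shortcut need not exist. I would handle it as follows.
\begin{itemize}
\item First, note that both sources are restrictions of total orders on their cuts.
\item Second, argue that in an alternating minimal cycle, some consecutive triple has all three elements in $V_A$, or all three in $V_B$, using the prefix property. Transitivity of that source then applies.
\end{itemize}
If that argument fails, I would weaken ``single oppositely-ordered pair'' to the statement that the minimal cycle is the conflict certificate. I would then recover the pair under the hypothesis that each $\prec_X$ totally orders its class, as the proof of \cref{thm:merge} already assumes.
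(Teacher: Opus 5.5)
Your proposal is correct and follows the paper's own route. The corollary is read off the order half of the proof of \cref{thm:merge}: a per-class topological sort of ${\prec_A}\cup{\prec_B}$ is the witness, and shortening a cycle down to an oppositely ordered pair gives the conflict. Your handling of alternating cycles is more careful than the paper's sketch, which asserts that every cycle contains a same-source two-step subpath. The obstacle you flag dissolves without the prefix property: each vertex of an alternating cycle is the head of one source's arc and the tail of the other's, so it lies in both cuts, where both orders are total; two orders total on the cycle's vertices yield a cycle only by disagreeing on some pair, and that pair is the required $2$-cycle.
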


    \begin{remark}[Cheap check]\label{rem:merge-cheap}
        Because each $O_M$-class is totally ordered within each source (\cref{def:fork}),
        the cycle check collapses to \emph{pairwise agreement on shared pairs}: any cycle
        would put a $2$-edge subpath inside one total order, forcing the closing edge and
        contradicting the other source on a jointly-touched pair. Local check, no global
        traversal.
    \end{remark}

    \begin{remark}[Closure merges; ordering breaks]\label{rem:merge-asymmetry}
        Readability is entangled across both axes (the $\kappa$-shortcut: $O$ supplies
        closure $C$ did not demand). Mergeability is \emph{not}: $C$ joins monotonically (a union of
        $C$-closed views is $C$-closed), so all merge fragility lives on $O$.
        \emph{Closure is a join-semilattice; order is fragile under join.}
    \end{remark}

    \begin{theorem}[The mergeable frontier]\label{thm:merge-frontier}
        Mergeability is $C$-invariant and falls monotonically as $O_M$ strengthens:
        \begin{itemize}
            \item $O_M = O(\mathrm{trivial})$: clause~(2) is vacuous---no class entangles
            two forks---so any two $C_M$-closed views merge.
            \item $O_M = O(\pi_{\mathrm{object}})$: mergeable iff $\prec_A, \prec_B$ agree
            within each object; cross-object disagreement is harmless.
            \item $O_M = O(\mathrm{all})$: mergeable iff the two frames agree
            globally---iff, on their shared events, they are the same frame.
        \end{itemize}
        Above $O(\mathrm{trivial})$, agreement is guaranteed only by a \emph{confluent}
        resolution---a deterministic, observer-independent tiebreak that makes
        $\prec_A = \prec_B$ on shared classes by construction; under any other tiebreak it
        holds only by chance. The always-mergeable region is therefore
        $O(\mathrm{trivial})$ together with the confluent resolutions at any $O$.
    \end{theorem}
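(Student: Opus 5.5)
The plan is to read the frontier directly off \cref{thm:merge}, whose clause~(1) never mentions $O_M$ and whose clause~(2) is the only place $O_M$ enters. First, $C$-invariance: for fixed $O_M$, changing $C_M$ only changes which sources are eligible via clause~(1), a per-source condition that the union inherits by \cref{rem:merge-asymmetry}. It never changes whether two eligible views merge, so the mergeability verdict on eligible pairs depends on $O_M$ alone.

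Second, monotone decline. Clause~(2) quantifies over all pairs inside a common $O_M$-class. Under the refinement chain of \cref{def:fork}, every $O(\pi_{\mathrm{object}})$-class lies inside the single $O(\mathrm{all})$ class, and $O(\mathrm{trivial})$ entangles no pairs. So the set of pairs on which $\prec_A,\prec_B$ must agree grows along the chain, and the set of mergeable pairs shrinks. The three bullets then follow by instantiating clause~(2). With no entangled pairs, the condition is vacuous. With object classes, it requires agreement within each object only, and a cross-object pair lies in no common class, so it is unconstrained. With the global class, agreement is required on every pair both sources order, i.e.\ on shared events, and then $\prec_A$ and $\prec_B$ coincide as total orders. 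By \cref{rem:merge-cheap}, all three checks reduce to pairwise agreement.

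Third, the confluence clause, which has two directions. \emph{Sufficiency:} if the tiebreak is a deterministic function of the message-intrinsic data in the class, independent of the observer, then both sources compute the same order on shared pairs, and clause~(2) holds at any $O_M$. \emph{Necessity} (``only by chance''): I would construct a witness. Take a two-message fork $m_1,m_2$ in one class, with $m_1$ and $m_2$ concurrent as in \cref{fig:lightcone}. Suppose the tiebreak depends on the observer, e.g.\ delivery order. Delivery-freedom (\cref{rem:no-delivery}) allows histories in which $A$ receives $m_1$ first and $B$ receives $m_2$ first. This yields $m_1\prec_A m_2$ and $m_2\prec_B m_1$, a conflicting pair, so by \cref{cor:merge-witness} the views do not merge.

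The main obstacle is this necessity step. ``Confluent'' must be made precise, say as: the order on a class is a function of the class's messages and their DAG edges alone. I then have to argue that any non-confluent rule admits two observers with identical relevant inputs but opposite orders on some concurrent pair. A rule that depends on observer state is only non-confluent if that dependency can flip a pair, so I would define non-confluence as ``there exist two reachable observer states resolving some pair oppositely.'' With that definition the witness above is immediate. This makes the ``only by chance'' claim precise as ``not guaranteed over all histories,'' in the same universal-quantifier spirit as \cref{rem:cliff}.
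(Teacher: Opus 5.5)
Your proposal is correct and follows the paper's proof essentially step for step. The steps line up as follows: $C$-invariance because clause~(1) of \cref{thm:merge} is a per-source condition that never mentions $O_M$; monotonicity because coarser $O_M$-classes constrain a superset of pairs; the three rungs by instantiating clause~(2); and the confluent frontier by the same definitional move you flag as the crux, taking non-confluence as the existence of two observation histories that order a pair oppositely, of which your delivery-order witness from \cref{fig:lightcone} is a concrete instance. The paper states $C$-invariance uniformly, noting that causally-closed prefixes are $\mathit{deps}_{C_M}$-closed for every $C_M$ so clause~(1) drops out, where you instead restrict to eligible pairs.
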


    \begin{proof}
        Every clause reads off \cref{thm:merge}, whose clause~(1) is $C$-free and whose
        clause~(2) is the only binding condition.

        \emph{$C$-invariance.} A causally-closed prefix is $\mathit{deps}_{C_M}$-closed for
        every $C_M$, since $\mathit{deps}_{C_M} \sqsubseteq \mathit{deps}_{\mathrm{explicit}}$
        demands no parent causal closure does not. So clause~(1) holds at every $C_M$ and
        drops out; merge turns on clause~(2), which never mentions $C_M$.

        \emph{Monotonicity.} If $O_M \sqsubseteq O_M'$ then each $O_M'$-class is a union of
        $O_M$-classes, so the pairs clause~(2) constrains at $O_M$ are a subset of those it
        constrains at $O_M'$. Agreement at the stronger target therefore implies agreement
        at the weaker: $\mathrm{merge}(V_A,V_B;M') \Rightarrow \mathrm{merge}(V_A,V_B;M)$.

        \emph{The three rungs.} At $O(\mathrm{trivial})$ no class entangles two forks, so
        clause~(2) ranges over no pairs and holds vacuously---any two $C_M$-closed views
        merge. At $O(\pi_{\mathrm{object}})$ the classes are the per-object fork sets, so
        clause~(2) is agreement within each object and is blind to cross-object pairs. At
        $O(\mathrm{all})$ the single class is the whole shared event set, so clause~(2) is
        agreement on every shared pair: $\prec_A$ and $\prec_B$ are one order on what both
        observers hold.

        \emph{The confluent frontier.} Fix $O_M \sqsupset O(\mathrm{trivial})$. If the
        resolution is confluent---each class ordered by a function of its messages
        alone---then two observers holding the same class order it identically, so
        $\prec_A = \prec_B$ there and clause~(2) holds for every divergence: always
        mergeable. If it is not confluent, some class admits two observation histories that
        order a pair oppositely; the views realizing those histories jointly populate the
        class and violate clause~(2), so the configuration is not always mergeable. Hence
        above $O(\mathrm{trivial})$ a configuration is always mergeable iff its resolution
        is confluent, and with $O(\mathrm{trivial})$---always mergeable
        unconditionally---this is the stated region.
    \end{proof}

    This is the formal version of a familiar operational fact: two divergent
    last-writer-wins replicas anti-entropy into an eventually-consistent or per-object
    state, but never into a single global order. (Linearizability demands more
    still---real-time precedence---and is taken up in \cref{sec:boundary}.) It is the
    relativity reading one step further: divergent simultaneity across frames reconciles
    freely as long as no observer demands global simultaneity. $O(\mathrm{all})$ is that
    demand, and it is the one that cannot be met after the fact.

    The incompleteness \emph{structure} (\cref{cor:incompleteness}) and this merge
    frontier both fall out of the same $(C,O)$ filter/resolution shape.

    \subsection{Corollaries of the Merge Rule}
    \label{sec:merge-corollaries}

    \begin{corollary}[Partition forces availability-or-ordering]\label{cor:partition}
        A partition yields divergent frames; under $O > \mathrm{trivial}$ they may be
        unmergeable; a protocol that meets an unmergeable divergence must either decline
        to answer (lose availability) or weaken $O$ (relax ordering).
    \end{corollary}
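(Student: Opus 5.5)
The plan is to read \cref{cor:partition} off \cref{thm:merge} and \cref{thm:merge-frontier} in three steps. Each step corresponds to one clause of the statement: divergence under partition, possible unmergeability above $O(\mathrm{trivial})$, and the forced dichotomy.

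\emph{Step 1 (partition yields divergent frames).} Model a partition as two groups of observers with no delivery between them; this is permitted by the delivery-freedom of \cref{rem:no-delivery}. Take a common ancestor $m$ seen by both sides, and let each side create a write $m_A$, $m_B$ in the same $O$-class (same object, or any pair under $O(\mathrm{all})$). These writes are concurrent by \cref{ax:observation}, exactly the fork of \cref{fig:lightcone}. Each side must order its class to answer under its own $O$, so each carries a frame $\prec_A$, $\prec_B$. Once the partition heals, the question is whether $V_A \cup V_B$ merges under the common target $M$.

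\emph{Step 2 (they may be unmergeable).} Exhibit an execution in which the frames disagree on a co-ordered pair. The tiebreak is invisible to $O$ (\cref{sec:fork}), and the resolution need not be confluent. Each side places its own write last, so $m_B \prec_A m_A$ once $A$ learns of $m_B$, while $m_A \prec_B m_B$. More simply, take any non-confluent resolution, which by the last clause of \cref{thm:merge-frontier} admits two histories ordering a pair oppositely. This pair gives a $2$-cycle in ${\prec_A}\cup{\prec_B}$ inside one $O_M$-class, so clause~(2) of \cref{thm:merge} fails and $\mathrm{merge}(V_A,V_B;M)$ is false (\cref{cor:merge-witness}). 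The hypothesis $O > \mathrm{trivial}$ is needed precisely here: at $O(\mathrm{trivial})$, clause~(2) is vacuous.

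\emph{Step 3 (the dichotomy).} Suppose a protocol meets an unmergeable divergence, meaning it holds $V_A, V_B$ with conflicting frames under target $(C_M,O_M)$. Any answer consistent with $M$ must come from an $M$-admissible state containing both contradicted events, and \cref{thm:merge} says no such state exists. The remaining options fall into two cases. In the first, the protocol does not answer (or answers only after discarding one branch, which is a form of declining for the lost write); this is the loss of availability. In the second, it answers from a target $M'$ with weaker $O$, small enough that the conflicting pair no longer shares a class. By the monotonicity clause of \cref{thm:merge-frontier} this is the only lever, since clause~(1) is $C$-invariant, so weakening closure never helps. Those are exactly the two horns of the statement.

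The main obstacle is making Step 3 exhaustive. We must argue that "answer anyway at the same $M$" is impossible and not merely unfaithful, and that options such as waiting are covered. Following \cref{rem:no-R}, I would treat waiting as $R$, which cannot change admissibility. Unbounded waiting for a resolution that cannot arrive therefore counts as declining to answer, and the argument reduces entirely to the static admissibility failure established in Step 2.
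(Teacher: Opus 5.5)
Your proposal is correct and follows the route the paper intends. The corollary has no separate proof in the paper: it is read off clause~(2) of \cref{thm:merge} together with the confluent-frontier, $C$-invariance, and monotonicity clauses of \cref{thm:merge-frontier}, which is how you assemble Steps~1--3. One small repair: in your explicit witness, have each side place its own write \emph{first} (arrival order) rather than last. ``Own write last'' makes the writer's earlier cut $\{m, m_A\}$ violate the visibility-respects-order condition of \cref{def:fork}, whereas arrival order satisfies it and still yields the $2$-cycle; your fallback to an arbitrary non-confluent resolution already covers this.
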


    \begin{remark}[Relativity correspondence---safe version only]\label{rem:relativity}
        A $\kappa$-clean global $O(\mathrm{all})$ order exists iff
        $2\varepsilon \le d_{\min}$ (\cref{thm:clockbound}); below threshold only
        per-frame orders survive---the merge r\'egime. This ties the clock bound, the
        merge relation, and the title into one line. The spacetime / global-time-function
        reading is explicitly \emph{interpretation}.
    \end{remark}

    \section{The Consistency Ratchet and \texorpdfstring{$\kappa$}{kappa}-Scars}
    \label{sec:ratchet-scars}

    \Cref{sec:readability} worked the clean pole of the $\kappa$ map: an order that
    refines causality is readable downward, and the structural slack it leaves---ordering
    supplying closure the writer's filter never demanded---is a $\kappa$-shortcut. This
    section works the scarred pole. Two results follow. The \emph{ratchet}: a consistency
    level lost on a datum under reading-and-rewriting is never regained. And \emph{Detection
        $=$ Prevention}: a system that took the $\kappa$-shortcut, or never refined causality at
    all, leaves a permanent scar the moment its order contradicts the causal one---and
    detecting that scar costs exactly the causal record the shortcut let it discard.

    \subsection{The Consistency Ratchet}
    \label{sec:ratchet}

    Readability (\cref{def:readable}) is a statement about \emph{reading}: data written
    under $A$ can be read under any $B$ with $A \preceq B$. The dynamic consequence is about
    \emph{reading and writing back}, and it runs the other way: a value read at a weaker
    level and re-written can never be recovered at its original level. The readability order
    is a one-way ratchet under read--write.

    \paragraph*{Value migration.} Distinguish two write-backs. A \emph{version append}
    writes $m_1$ with a dependency edge to the existing $m_0$; by Axiom~3 and transitivity
    $m_1$'s causal past then includes all of $m_0$'s, and nothing is lost. A \emph{value
    migration} instead reads the value of $m_0$ and writes it as a fresh datum---into a new
    store, a new key, a cache, an export---carrying provenance only for what the reader
    observed. This is what migration and ETL tooling do, and it is where the ratchet bites.
    A level-$B$ reader is obligated to observe only the minimal $B$-admissible cut
    $V_B(m_0)$: $m_0$ together with exactly the closure $B$ forces. By Axiom~3 a write may
    depend only on what its creator observed, so the migrated datum can attach provenance for
    $V_B(m_0)$ and no more.

    \begin{theorem}[Consistency ratchet]\label{thm:ratchet}
        Let $m_0$ have native level $A$ and be value-migrated by a level-$B$ reader to a
        datum $m_1$. Then:
        \begin{enumerate}
            \item \emph{(Ceiling.)} $m_1$ is readable at $B$ and, in the worst case, at no
            level stronger than $B$: the strongest level whose provenance $m_1$ is guaranteed
            to carry is exactly $B$.
            \item \emph{(Loss.)} The provenance dropped is
            $\mathrm{closure}_A(m_0) \setminus \mathrm{closure}_B(m_0)$, nonempty whenever
            $A \succ B$ in $m_0$'s neighborhood.
            \item \emph{(Irreversibility.)} The drop is permanent. Edges are fixed at
            creation (Axioms~1,~3), so no later write can add the missing provenance to
            $m_1$; and once the source is decommissioned or its surplus history
            garbage-collected, the dropped messages are unrecoverable (\cref{thm:retention}).
        \end{enumerate}
    \end{theorem}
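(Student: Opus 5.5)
I would prove the three clauses in order, each from a short list of ingredients. The first is Axiom~3, which says the provenance of $m_1$ is exactly what its creator observed. The second is the minimal-cut description of a level-$B$ reader, $V_B(m_0)$. The third is the readability rule, \cref{thm:readability}, with its Case~2 witness.

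\emph{Ceiling.} The lower half is direct. The reader holds a $B$-admissible cut $V_B(m_0)$ when it writes $m_1$, so by Axiom~3 $m_1$ carries edges to all of $V_B(m_0)$. Any cut containing $m_1$ together with that provenance is therefore $B$-admissible, and $m_1$ is readable at $B$.

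For the upper half (``in the worst case, at no level stronger than $B$'') I construct a worst-case history. Take a level $B' \succ B$, meaning $B' \preceq B$ but $B \not\preceq B'$. \Cref{thm:readability} then fails in one of two ways: either on the $O$ coordinate, or because some dependency-edge type is required by $C_{B'}$ but lies outside $C^{\mathrm{eff}}_B$.

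I reuse the necessity witnesses from that proof. In the closure case, I place an edge $(m', m_0)$ of the missing type into $m_0$'s past. The minimal $B$-cut then omits $m'$, so the migrated datum carries no path to $m'$. The cut $\{m_1\}\cup V_B(m_0)$ is $B$-admissible but not $B'$-admissible. The $O$ case is handled the same way with Case~1, choosing a $\prec$-predecessor in a shared $B'$-class.

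The obstacle here is making precise what ``readable at level $B'$'' means for a fresh datum, since $m_1$ sits in a new store. I would define it as admissibility of the cut generated by $m_1$'s recorded provenance. I would also state explicitly that the worst case ranges over the histories $G$, exactly as the universal quantifier in \cref{def:readable} does.

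\emph{Loss.} By Axiom~3 the migrated provenance is exactly $V_B(m_0)=\mathrm{closure}_B(m_0)$, so what is dropped is $\mathrm{closure}_A(m_0)\setminus\mathrm{closure}_B(m_0)$. Nonemptiness is the same Case~2 or Case~1 witness, now read locally. ``$A\succ B$ in $m_0$'s neighborhood'' means $m_0$'s actual past contains an element that $A$ forces and $B$ does not, and that element is the dropped message.

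\emph{Irreversibility.} Edges into $m_1$ are fixed at its creation: Axiom~3 ties them to what the creator saw at $t_{\mathrm{create}}(m_1)$, and nothing later adds an in-edge. A later write $m_2$ can depend on both $m_1$ and the dropped messages. But that is a version append to a new vertex, not a repair of $m_1$, and any reader of $m_1$ alone still sees the gap.

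If the source is decommissioned or garbage-collected, even such an $m_2$ cannot be formed. Then no algorithm can reconstruct the dropped edges, by the indistinguishability argument of \cref{thm:retention}: two histories that differ only on the discarded metadata are indistinguishable from the remaining state.

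I expect the Ceiling's upper half to be the main work. The other two clauses are essentially restatements of the axioms.
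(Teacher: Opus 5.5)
Your proposal is correct and is essentially the paper's own proof. The paper likewise takes $m_1$'s recorded past to be $V_B(m_0)=\mathrm{closure}_B(m_0)$ by Axiom~3, derives Ceiling and Loss from the inclusion $\mathrm{closure}_X(m_0)\supseteq\mathrm{closure}_B(m_0)$ for stronger $X$ (strict when extra parents or predecessors are forced), and derives Irreversibility from creation-time edges plus \cref{thm:retention}. The one difference is that the paper settles your definitional obstacle directly, by declaring $m_1$ honestly readable at $X$ iff $\mathrm{closure}_X(m_0)\subseteq V_B(m_0)$. Your reuse of the Case~1/Case~2 witnesses of \cref{thm:readability} therefore only spells out the worst-case history behind the paper's phrase ``$X$ is not guaranteed.''
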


    \begin{proof}
        $m_1$'s recorded past is $V_B(m_0) = \mathrm{closure}_B(m_0)$. A reader at level $X$
        can read $m_1$ honestly iff $\mathrm{closure}_X(m_0) \subseteq V_B(m_0)$. For $X = B$
        this holds; for $X \succ B$, $\mathrm{closure}_X(m_0) \supseteq \mathrm{closure}_B(m_0)$
        with the inclusion strict exactly when $A \succ B$ forces additional parents or
        predecessors, so $X$ is not guaranteed---giving (1) and (2). For (3): adding the
        missing messages to $V_B(m_0)$ after the fact would require the migrator to have
        observed them, contradicting that it read at $B$; and reconstructing them later
        requires they were retained, which \cref{thm:retention} does not guarantee once they
        are unreferenced.
    \end{proof}

    \begin{corollary}[Composition]\label{cor:ratchet-compose}
        A pipeline of value-migrations through levels $B_1, B_2, \dots, B_k$ from a source of
        level $A$ produces output of level $A \sqcup B_1 \sqcup \cdots \sqcup B_k$---the join
        (weakest hop) in the readability order. The ratchet only turns down: re-reading a
        later stage at a stronger level cannot recover provenance an earlier stage did not
        observe.
    \end{corollary}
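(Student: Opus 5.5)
The plan is induction on the pipeline length $k$, with \cref{thm:ratchet} as the one-step case. In the readability order, $X \preceq Y$ means data at level $X$ is readable at $Y$, so $Y$ is weaker or equal. The join $A \sqcup B$ is therefore the strongest level readable from both, that is, the weaker of the two. I would first record that the $8$ classes of \cref{fig:readability} form a join-semilattice under $\preceq$. This follows from \cref{thm:readability}: a candidate join is obtained by meeting the $O$-coordinates along the chain $\sqsupseteq_O$ and meeting the effective closures. One then checks that this candidate is realised by some class, and where the order is not a lattice the statement is read as the least upper bound among classes. The claim to prove becomes: after hops $B_1,\dots,B_j$, the strongest level guaranteed for the output is $L_j = A \sqcup B_1 \sqcup \cdots \sqcup B_j$.

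The base case $j=0$ is $L_0 = A$. For the step, suppose the stage-$j$ datum $m_j$ is guaranteed exactly at level $L_j$, and a level-$B_{j+1}$ reader value-migrates it. By Axiom~3 the reader's write can carry provenance only for what it observed, namely the minimal $B_{j+1}$-admissible cut of $m_j$, and that cut cannot exceed the provenance $m_j$ itself carries. Hence the new provenance is the intersection of $\mathrm{closure}_{B_{j+1}}$ with what level $L_j$ guarantees.

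Translating that intersection into a level is the main obstacle. Intersecting admissible-cut obligations corresponds to taking the weaker requirement coordinatewise: the $O$-prefix condition at the weaker $O$, and the effective closure at the meet of $C^{\mathrm{eff}}$. I would try to show that this coordinatewise weakening equals the $\preceq$-join by applying the two conditions of \cref{thm:readability} in both directions. The $\kappa$-coupling complicates this step. For example, joining $(\mathrm{none},\mathrm{all})$ with $(\mathrm{explicit},\mathrm{trivial})$ must be computed on effective closures, not raw $C$, otherwise the result lands in the wrong class. This is where I expect to lean on \cref{lem:o-entails-c}.

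Two properties of the upper bound need checking. First, $L_{j+1}$ is readable from the output, which follows from the ceiling clause of \cref{thm:ratchet}. Second, nothing strictly stronger is guaranteed. For this I would use the worst-case witness of the Loss clause: a DAG in which each dropped parent or predecessor is genuinely absent. By transitivity, a loss at any earlier hop persists.

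The final sentence of the statement, that the ratchet only turns down, I would take directly from the Irreversibility clause of \cref{thm:ratchet}. Edges are fixed at creation, so re-reading a later stage at a stronger level $X$ observes only what that stage recorded. By monotonicity of the join, the recorded level is already at most $L_k$, so the guarantee cannot rise above it.
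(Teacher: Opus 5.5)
Your induction is the paper's proof made explicit. The paper iterates \cref{thm:ratchet}: each stage caps its output at $B_j$ and cannot observe provenance its input never carried, so the carried level is ``the meet of the per-hop ceilings, i.e.\ the join'' in $\preceq$. It then reads the last sentence off monotonicity of joins, as you do.

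The step that fails is the one you put first. The $8$ classes of \cref{fig:readability} do \emph{not} form a join-semilattice under $\preceq$, and your fallback, ``the least upper bound among classes,'' need not exist either. Take a source $A = (\mathrm{session}, \pi_{\mathrm{object}})$ (row~25 of \cref{tab:coverage}) migrated through a causal reader $B = (\mathrm{explicit}, \mathrm{trivial})$ (row~32). By \cref{thm:readability}, $X$ is an upper bound of both iff two conditions hold. First, $O_X = O(\mathrm{trivial})$, forced by $B$. Second, $\mathit{deps}_{C_X} \sqsubseteq C^{\mathrm{eff}}_A = \mathit{deps}_{\mathrm{object}} \sqcup \mathit{deps}_{\mathrm{session}}$, forced by $A$; this excludes $\mathrm{explicit}$, because that join lies strictly below $\mathit{deps}_{\mathrm{explicit}}$. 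So the upper bounds are $(\mathrm{none},\mathrm{trivial})$, $(\mathrm{object},\mathrm{trivial})$ and $(\mathrm{session},\mathrm{trivial})$. The last two are incomparable minimal upper bounds, so there is no least one. Your coordinatewise candidate, $O(\mathrm{trivial})$ with effective closure $\mathit{deps}_{\mathrm{object}} \sqcup \mathit{deps}_{\mathrm{session}}$, is exactly what intersecting the two obligations produces. But no configuration realizes it, since $C$ ranges over only four filters. For this pipeline your hypothesis ``guaranteed exactly at level $L_j$'' has no referent, and identifying the intersected provenance with a $\preceq$-join breaks down.

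The paper's two-line proof rests on the same unchecked ``i.e.\ the join,'' so this is a hole in the statement as much as in your plan. But you assert the lattice property as a lemma, and it is false. Any one of three repairs lets your argument go through.
(i) Restrict to pipelines with a weakest hop, i.e.\ one level that is a $\preceq$-upper bound of all the others. This is what the statement's parenthetical presupposes, and it covers every worked example after the corollary (sequential through eventual, through causal, through PRAM).
(ii) Close the $C$-axis under joins by admitting $\mathit{deps}_{\mathrm{object}} \sqcup \mathit{deps}_{\mathrm{session}}$ as a filter. Given $O_A \sqsupseteq_O O_B$, monotonicity of $\kappa$ makes $\mathit{deps}_{C_B} \sqsubseteq C^{\mathrm{eff}}_A$ equivalent to $C^{\mathrm{eff}}_B \sqsubseteq C^{\mathrm{eff}}_A$. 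So $\preceq$ is the reversed product order on pairs $(O, C^{\mathrm{eff}})$ with $C^{\mathrm{eff}} \sqsupseteq \kappa(O)$. These pairs are closed under coordinatewise meet, because $\kappa(O_A \sqcap O_B) = \kappa(O_A) \sqcap \kappa(O_B)$ on the chain. Your coordinatewise weakening is then exactly the $\preceq$-join, and the induction runs as written.
(iii) Drop joins altogether. Your intersection step, together with the Case~1/Case~2 witnesses of \cref{thm:readability} (each a minimal admissible cut), shows without any lattice assumption that the output is guaranteed readable at $X$ iff $X$ is a common $\preceq$-upper bound of $A, B_1, \dots, B_k$.
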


    \begin{proof}
        Iterate \cref{thm:ratchet}: stage $j$ caps its output at $B_j$, and a stage cannot
        observe provenance its input never carried, so the carried level after $k$ hops is the
        meet of the per-hop ceilings, i.e.\ the join $A \sqcup B_1 \sqcup \cdots \sqcup B_k$ in
        $\preceq$. Joins are order-monotone, so no later (stronger) hop lowers it back toward
        $A$.
    \end{proof}

    \Cref{thm:ratchet,cor:ratchet-compose} make these losses exact rather than empirical.
    Migrating sequential data ($C(\mathrm{explicit}), O(\mathrm{all})$) through an eventual
    reader ($C(\mathrm{none}), O(\mathrm{trivial})$) drops to the join---eventual---losing both
    order and closure; through a causal reader ($C(\mathrm{explicit}), O(\mathrm{trivial})$) it
    keeps closure but loses order, since $O(\mathrm{trivial})$ is the weaker hop. Migration
    through $(C(\mathrm{none}), O(\mathrm{all}))$ loses nothing: by \cref{cor:c-blind} that
    configuration is read-equivalent to sequential, so it is not a weaker hop at all. A
    three-stage pipeline that weakens to PRAM and then re-requests $O(\mathrm{all})$ stays at
    PRAM by \cref{cor:ratchet-compose}: the final stage cannot recover what the middle stage
    discarded.

    \paragraph*{Consequences for migration tooling.} All three are corollaries of ``the
    weakest hop sets the ceiling.'' To preserve level $A$, \emph{every} stage must read at a
    level $\preceq A$: a single weaker hop---a cache, an eventually-consistent replica read, a
    dump taken without causal metadata---caps the result permanently, even if every other
    stage is sequential. You cannot migrate \emph{up}: re-importing weak data into a strong
    store yields a strong store holding weak-provenance data, since a store's configuration is
    the ceiling for \emph{new} writes, not a property retrofitted onto migrated values. And the
    damage is silent: by Axiom~3 the migrated datum is internally well-formed at every
    level---the loss is visible only against the source, so provenance must be checked
    \emph{before} the source is retired, not after. The permanence in clause~(3) is conditional
    on that record being discarded: an $R(\mathrm{absent})$ log retains the full ordered or
    closed history un-applied---the ``unread log'' configurations of \cref{tab:coverage} (a
    write-ahead log before recovery, a replicated log before apply)---so a later reader can
    re-derive the lost level from it until the log itself is collected.

    \subsection{Consistency Scars and the Detection\texorpdfstring{\,$=$\,}{ = }Prevention Bound}
    \label{sec:scars}

    The ratchet (\cref{thm:ratchet}) says a consistency level, once lost on a datum, is not
    regained by further reads and writes. We now ask what \emph{creates} such a loss in a
    running system, and---more sharply---whether a system can \emph{detect} that it has
    happened. The answer is an impossibility, and it turns on \emph{which} record is kept:
    detecting a scar means checking the resolved order $\prec$ against the causal order
    $\mathit{deps}^*$, so it is retaining $\mathit{deps}^*$---not storing $\prec$ itself,
    however completely---that confers detection. A log holding the full resolved order can
    replay it and still cannot tell whether that order inverted causality. And retaining
    $\mathit{deps}^*$ is exactly the capability that would have prevented the scar.

    \begin{definition}[$\kappa$-scar]\label{def:scar}
        By $C$ and Axiom~3 (\cref{ax:observation}) the causal DAG is never violated---no
        observer holds $m$ without $\mathit{deps}^*(m)$---so $\mathit{deps}^*$ is intact in
        every reachable state. A scar lives one level up, in the resolved order $\prec$
        (\cref{def:two-orders}) a system imposes on top of the DAG: the serialization induced
        by a physical timestamp, a sequence counter, or a commit position. Fix a system that
        \emph{intends} causal arbitration (\cref{ass:causal-arb})---its $\prec$ is meant to
        refine the causal order. A \emph{$\kappa$-scar} is a durably stored pair $(m', m)$ with
        $m' \in \mathit{deps}^*(m)$ yet $m \prec m'$: the resolved order has placed a causal
        ancestor \emph{after} its descendant. This is a property of $\prec$ against the DAG,
        not of delivery---$m'$ still arrived before $m$; the system merely \emph{ordered} them
        backwards, a transient breach of its own invariant that the store now records
        permanently. A history is \emph{$\kappa$-clean} when $\prec$ is a linear extension of
        the causal order (\cref{ass:causal-arb}).
    \end{definition}

    A $\kappa$-scar does not heal under normal operation. The offending $\prec$-position is
    durably committed (\cref{def:scar}), and by \cref{thm:ratchet} no read--write transformation
    restores a position lost to a weaker order. A scar is removed only by re-deriving $\prec$
    from retained causal history---a migration, itself bounded by the retention bound
    (\cref{thm:retention}) and impossible once the surplus history is collected. Scars are thus
    written by any \emph{seam} at which a second arbitration authority, not causally
    synchronized with the first, touches causally-linked data: a clock excursion under
    physical-timestamp ordering (\cref{thm:clockbound}), a sequence-counter reset on failover,
    the merge of two independently-ordered partitions. This is the exposure the $\kappa$-shortcut buys: a system that leaned on $\prec$ to
    stand in for a retained causal record has nothing left to check $\prec$ against.

    \begin{theorem}[Detection $=$ Prevention, under endogenous arbitration]\label{thm:detect-prevent}
        Assume \emph{endogenous arbitration}: the system's order is a function only of the
        state it retains, taking no environmental input---in particular, no reading of a
        synchronized physical clock. The following are equivalent:
        \begin{enumerate}
            \item the system can decide, for any of its histories, whether that history is
            $\kappa$-clean (\emph{detect} scars);
            \item the system retains the causal-dependency relation $\mathit{deps}^*$ on its
            messages;
            \item the system can emit a $\kappa$-clean order (\emph{prevent} scars).
        \end{enumerate}
        In particular, no endogenous system can detect a $\kappa$-scar it could not have
        prevented.
    \end{theorem}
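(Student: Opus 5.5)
I would prove the three-way equivalence as a cycle, (2)$\Rightarrow$(3), (2)$\Rightarrow$(1), and then close it with (1)$\Rightarrow$(2) and (3)$\Rightarrow$(2). The last two arrows carry the content, and both use the indistinguishability template from the proof of \cref{thm:retention}. Throughout, endogenous arbitration is used in one way only: the order the system emits, and any verdict it returns, are functions of its retained state. So two histories with identical retained state receive the same order and the same verdict.

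\emph{Easy directions.} For (2)$\Rightarrow$(3), the system holds $\mathit{deps}^*$ on its messages, which is a finite partial order on what it has seen. It emits a linear extension of it, for instance a topological sort with any deterministic tiebreak. This is exactly a $\kappa$-clean order in the sense of \cref{def:scar}. For (2)$\Rightarrow$(1), given a history and its stored $\prec$, the system checks every retained pair with $m' \in \mathit{deps}^*(m)$ and reports a scar iff some such pair has $m \prec m'$. Both $\prec$ and $\mathit{deps}^*$ are retained, so this is decidable.

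\emph{(1)$\Rightarrow$(2).} I argue the contrapositive. Suppose the system fails to retain some causal relation, so there is a pair $m', m$ whose ancestry cannot be recovered from the retained state. Following \cref{thm:retention}, I build two histories $G$ and $G'$ that agree on every retained item, including the stored resolved order with $m \prec m'$. They differ only in the discarded edges: in $G$ we have $m' \in \mathit{deps}^*(m)$, and in $G'$ the two messages are concurrent. If the discarded edge is a single one, $G'$ simply omits it. If the gap is routed through an intermediate vertex, I reuse the $m_0$ construction. The stored pair is then a $\kappa$-scar in $G$, while $G'$ is $\kappa$-clean at this pair. I also have to keep $G'$ clean elsewhere, which I get by choosing the rest of the history as a linear extension consistent with $\prec$. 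Endogeneity gives the same verdict on both histories, so one of them is misjudged, contradicting (1). The point to stress, matching the section's preamble, is that storing $\prec$ completely does not help: $\prec$ is identical in $G$ and $G'$.

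\emph{(3)$\Rightarrow$(2).} Again contrapositive, with the same pair of histories, now in the orientation $m' \to m$ in $G$ and $m \to m'$ in $G'$. They differ only in the direction of the discarded edge, exactly as in the $(G), (G')$ pair of \cref{thm:retention}. The emitted order is a function of the identical retained state, so it orders $\{m, m'\}$ the same way in both. It is therefore unfaithful, and hence scarred, in one of them. A randomized tiebreak fails with positive probability in one world, as in \cref{thm:retention}.

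The main obstacle is making ``retains $\mathit{deps}^*$'' precise enough for the contrapositives to produce a clean witness pair. I would define it as: the retained state determines $\mathit{deps}^*$ restricted to the messages it holds. Failure then means there are two DAGs consistent with the state that differ on some causal pair. The delicate part is showing that such a difference can always be realized by histories that are still legal under Axioms 1–3, and in which the other pairs stay clean, so that the verdict or order differs only because of this one pair. I would first try the minimal three-message construction from \cref{thm:retention} and extend it by an arbitrary $\prec$-consistent linear extension. The final clause of the theorem, that no endogenous system detects a scar it could not prevent, is then immediate from (1)$\Rightarrow$(2)$\Rightarrow$(3).
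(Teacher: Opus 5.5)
Your $(2)\Rightarrow(3)$, $(2)\Rightarrow(1)$ and $(1)\Rightarrow(2)$ coincide with the paper's proof. Its $(1)\Rightarrow(2)$ is the same indistinguishability step, and like yours it tacitly reads (1) as deciding cleanliness of an \emph{arbitrary} stored order placing $m \prec m'$. Read literally as ``its own histories,'' a system whose emitted order is always clean detects by simply answering ``clean.'' You depart at $(3)\Rightarrow(2)$. The paper argues that placing $m$ after its ancestors requires knowing $\mathit{deps}^*(m)$ at placement time and keeping those sets for later descendants; you instead use an orientation-flipped version of the $(G),(G')$ pair of \cref{thm:retention}. That step has a genuine gap, and it is not the legality issue you flag.

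Your contrapositive needs two histories with identical retained state in which $m$ and $m'$ are causally ordered in \emph{opposite} directions. Failure of (2), as you define it, only yields two consistent DAGs that \emph{differ} on some pair. That difference may be $m' \in \mathit{deps}^*(m)$ versus $m, m'$ concurrent, with the orientation never in doubt. Such ambiguity is harmless to prevention, since emitting $m' \prec m$ is clean in both worlds.

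Retained state can in fact fix every orientation while leaving $\mathit{deps}^*$ undetermined. Attach to each message a Lamport scalar, one more than the largest scalar its creator had seen, and order by (scalar, node id). This is endogenous: it is a sequence counter, one of the sources of $\prec$ named in \cref{def:scar}. Every edge of $E$ strictly increases the scalar, so the order is a linear extension of $\mathit{deps}^*$ and hence $\kappa$-clean. Yet the scalars cannot separate concurrency from precedence. So (3) holds while (2) fails, and no choice of witness histories repairs your step. The paper's ``must know $\mathit{deps}^*(m)$'' fails at the same point, since an upper bound on the ancestors' positions suffices to place $m$.

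To close this direction you need a hypothesis that forces orientation ambiguity. One option is that the retained causal information consists only of edges of $E$ and is lost by discarding a vertex's metadata wholesale, as with $m_0$ in \cref{thm:retention}, where $m_1 \to m_0 \to m_2$ and $m_2 \to m_0 \to m_1$ leave the same state. Without such a hypothesis, $(3)\Rightarrow(2)$ is not provable as stated.
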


    \begin{remark}[Clock-driven arbitration is the boundary, not a counterexample]\label{rem:detect-clock}
        The endogenous hypothesis is load-bearing, and it excludes exactly the physical-clock
        systems. When $2\varepsilon \le d_{\min}$ (\cref{thm:clockbound}), a clock-timestamp
        order is $\kappa$-clean---it \emph{prevents} scars, (3)---while retaining no
        $\mathit{deps}^*$, so it has no (2). That is not a counterexample to the equivalence but
        a system outside its scope: it buys ordering from an \emph{environmental} clock reading
        rather than from retained state, which is why the hypothesis names retained state
        specifically (\cref{cor:auditability}).
    \end{remark}

    \begin{proof}
        $(1)\Rightarrow(2)$. Deciding $\kappa$-cleanliness requires, for each stored pair,
        knowing whether $m' \in \mathit{deps}^*(m)$ (\cref{def:scar}). Suppose the retained
        state does not determine $\mathit{deps}^*$. Then there are two histories $G_1, G_2$
        with identical retained state but differing on some pair: $m' \in \mathit{deps}^*(m)$
        in $G_1$, $m' \notin \mathit{deps}^*(m)$ in $G_2$. A single stored order with
        $m \prec m'$ is a scar in $G_1$ and clean in $G_2$, so no procedure on the retained
        state is correct for both. Hence detection requires retaining $\mathit{deps}^*$. (This
        is the indistinguishability argument of \cref{thm:retention} applied to detection
        rather than resolution.)

        $(2)\Rightarrow(3)$. Given $\mathit{deps}^*$, the arbitration may emit any linear
        extension of $\mathit{deps}^*$ (a topological order), which is $\kappa$-clean by
        construction.

        $(3)\Rightarrow(2)$. To place each message after all of its causal ancestors, the
        arbitration must, at the moment it places $m$, know $\mathit{deps}^*(m)$. Absent an
        environmental input---the endogenous hypothesis---the only source of that knowledge is
        retained state: a clock reading would supply a $\kappa$-clean placement without
        $\mathit{deps}^*$ (the boundary case of \cref{rem:detect-clock}), but the hypothesis
        excludes it, so knowing $\mathit{deps}^*(m)$ requires retaining it. Placements are
        spread across the whole run, and any later descendant of $m$ must in turn be placed
        after $m$ and its ancestors; the ancestor sets therefore cannot be discarded after a
        single placement---they must stay available as each new message arrives. Maintaining
        $\mathit{deps}^*(m)$ for every $m$ throughout the run is exactly retaining the relation
        $\mathit{deps}^*$.

        $(2)\Rightarrow(1)$. Given $\mathit{deps}^*$, scanning the stored order for a pair with
        $m \prec m'$ and $m' \in \mathit{deps}^*(m)$ decides cleanliness.
    \end{proof}

    \begin{corollary}[Auditability inherits the metadata bound]\label{cor:auditability}
        Retaining $\mathit{deps}^*$ on multi-valued registers requires per-message metadata of
        size $\Omega(\min\{n, s\} \cdot \lg k)$ (\cref{cor:retention-quant}, after~\cite{attiya2015}).
        Hence a system that achieves causal arbitration by synchronized physical clocks---meeting
        the bounded-skew condition of \cref{thm:clockbound} rather than retaining
        $\mathit{deps}^*$---cannot detect its own $\kappa$-scars: it orders causally while its
        clock bound holds, but retains no evidence to audit afterward. Of the two ways to
        arbitrate causally---retained metadata, or a clock tight enough to preserve
        $\mathit{hb}$ (\cref{sec:clocks})---only the metadata is auditable; the clock orders but
        leaves nothing to check against.
    \end{corollary}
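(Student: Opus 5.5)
The statement to prove is \cref{cor:auditability}. I would treat it as two claims glued together. The first is a metadata lower bound imported from the quantitative retention result. The second is a non-auditability claim for clock-arbitrated systems, which follows from the (1)$\Rightarrow$(2) direction of \cref{thm:detect-prevent}.

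\textbf{The size bound.} I would not reprove the $\Omega(\min\{n,s\}\cdot\lg k)$ bound. Clause (2) of \cref{thm:detect-prevent} says auditing requires retaining $\mathit{deps}^*$. Retaining $\mathit{deps}^*$ on multi-valued registers is exactly the regime \cref{cor:retention-quant} bounds, following Attiya et al.~\cite{attiya2015}: distinct causal pasts must map to distinct retained states. The only check is that the encoding of $\mathit{deps}^*$ demanded by the detection argument is at least as fine as the one the cited bound assumes. It is, because the indistinguishability proof of (1)$\Rightarrow$(2) forces the retained state to separate any two histories that differ on a single ancestor relation.

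\textbf{Clock systems cannot audit.} Take a system whose causal arbitration comes from synchronized clocks with $2\varepsilon \le d_{\min}$ (\cref{thm:clockbound}), and which by hypothesis does not retain $\mathit{deps}^*$. The step (1)$\Rightarrow$(2) of \cref{thm:detect-prevent} uses only the retained state, not how $\prec$ was produced. So it applies outside the endogenous hypothesis as well. Because the system does not retain $\mathit{deps}^*$, there are two histories $G_1, G_2$ with the same retained state that differ on whether $m' \in \mathit{deps}^*(m)$. Both can carry the same stored order with $m \prec m'$. In $G_1$ this pair is a scar (reachable once the skew bound is exceeded); in $G_2$ it is clean. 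No audit procedure run on the retained state is correct in both, so the system cannot detect its own $\kappa$-scars.

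\textbf{Main obstacle: consistency with \cref{rem:detect-clock}.} I expect this step to be the hardest, because that remark places clock systems outside the scope of the theorem. I would argue that only (3)$\Rightarrow$(2) uses endogeneity. Detection depends only on retained state, so the clock contributes nothing once a history has been recorded: a past clock reading is not evidence of hb-respect after a skew excursion. This gives the dichotomy between the two ways of arbitrating causally. Retained metadata satisfies (2) and hence (1). A tight clock achieves (3) without (2), and therefore without (1).
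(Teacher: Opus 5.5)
Your proposal is correct and matches the paper's reasoning. The paper gives no separate proof of \cref{cor:auditability}: it reads the corollary off the indistinguishability direction (1)$\Rightarrow$(2) of \cref{thm:detect-prevent}, together with \cref{rem:detect-clock} (a clock supplies (3) without (2)) and the bound imported in \cref{cor:retention-quant}. Your explicit observation that only (3)$\Rightarrow$(2) uses the endogenous hypothesis is exactly what allows (1)$\Rightarrow$(2) to be applied to clock-arbitrated systems, a step the paper leaves implicit.
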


    \begin{remark}[The downgrade species needs a check at the claimed level]\label{rem:downgrade}
        \Cref{def:scar} is the \emph{wrong-order} species: a causal inversion, which a
        $\mathit{deps}^*$-equipped monitor catches. A transient downgrade from $O(\mathrm{all})$
        to a weaker $O$ leaves the causal order intact---it is $\kappa$-clean---yet fails to
        make the total-order decision $O(\mathrm{all})$ requires. Two observers may then commit
        dependent writes premised on contradictory orderings of one concurrent pair
        $\{a, b\}$---one history forces $a$ before $b$, the other $b$ before $a$. The committed
        orders, unioned over the disputed pairs, contain a cycle ($a \to b \to a$) that no
        single total order can extend, so no global serialization exists and none ever will (the
        commits are durable): a \emph{frozen order cycle}. This \emph{missing-order} scar is
        invisible to $\kappa$-detection (the causal check passes) and is exposed only by
        verifying the claimed level directly. Verifying sequential consistency of a history is
        NP-complete~\cite{gibbonskorach1997}; continuous online detection of missing-order
        scars at the sequential level is therefore intractable in the worst case. The cheap
        monitor cannot see them and the complete monitor cannot run continuously---a
        detectability gap with no point that is both fast and sound.
    \end{remark}

    \begin{remark}[Scope]\label{rem:detect-scope}
        \Cref{thm:detect-prevent} concerns \emph{certain} detection of \emph{all}
        $\kappa$-scars. Probabilistic monitors that sample a subset of dependency edges can
        flag some scars with sub-$\Omega(\min\{n,s\}\lg k)$ state, trading coverage for misses;
        the equivalence bounds only complete detection.
    \end{remark}

    \section{The Single-Observer Boundary: Linearizability Is Not One System}
    \label{sec:boundary}

    LCC characterizes \emph{single-observer} consistency: each observer reads its order off
    its own DAG---the causal order from $C$, the in-filter resolution from $O$
    (\cref{sec:fork}). Linearizability is the standard case that escapes this frame. It is a
    single global total order that additionally respects \emph{real-time} precedence:
    Burckhardt's \emph{returns-before} relation $\mathit{rb}$, where $e_1\,\mathit{rb}\,e_2$
    when $e_1$'s response precedes $e_2$'s invocation in real time~\cite{burckhardt2014}. We
    show it is irreducibly a \emph{composite} of two message-passing systems, for two
    independent reasons: the global order cannot be assembled from local order-scopes
    (\cref{thm:globality}), and the real-time relation is not a function of any one observer's
    DAG (\cref{cor:linearizability}).

    \begin{theorem}[Order-scope globality]\label{thm:globality}
        Let order-scopes $\pi_1, \dots, \pi_m$ act on a message set $M$, and let the
        \emph{co-class graph} $\Gamma$ have vertex set $M$ with an edge $\{a,b\}$ whenever
        $\pi_j(a) = \pi_j(b)$ for some $j$. The scopes realize a total order on $M$ if and only
        if that order is a Hamiltonian path in $\Gamma$ along which every edge agrees with the
        within-class order of the scope that induces it. Consequently:
        \begin{enumerate}
            \item $O(\mathrm{all})$ totally orders every $M$: its co-class graph is complete.
            \item Connectivity is necessary: if the scopes totally order $M$, the join
            $\pi_1 \vee \cdots \vee \pi_m$ is a single class on $M$.
            \item If the scopes are proper and \emph{cross-cutting}---some pair $a,b$ has
            $\pi_j(a) \neq \pi_j(b)$ for every $j$---they leave $\{a,b\}$ unordered. A global
            order cannot be assembled from cross-cutting proper scopes.
        \end{enumerate}
    \end{theorem}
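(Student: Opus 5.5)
I would first fix what ``the scopes realize a total order'' means, reading \cref{def:fork}: each $\pi_j$ supplies, per class, a total order $\prec_{j,i}$, and the only ordering information the scopes carry is the union $R = \bigcup_{j,i} \prec_{j,i}$ of these within-class relations. Only co-class pairs, the edges of $\Gamma$, are ever directly ordered. A total order $\lessdot$ on $M$ is \emph{realized} when the relations in $R$ force it. Concretely, $\lessdot$ must agree with every $\prec_{j,i}$, and $\lessdot$ must be the transitive closure of $R$. The characterization then reduces to a statement about when a union of directed co-class edges has a transitive closure that is a total order.

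For the biconditional I would argue both directions through covers. \emph{($\Leftarrow$)} Suppose $\lessdot$ lists $M$ as $a_1,\dots,a_{|M|}$, each consecutive pair $\{a_k,a_{k+1}\}$ is a $\Gamma$-edge, and that edge is ordered $a_k \prec a_{k+1}$ by its inducing scope. Then the transitive closure of these path edges is already the total order $\lessdot$. Agreement of the remaining $R$-edges with $\lessdot$ follows from consistency, which I take as part of the hypothesis, so $\lessdot$ is realized. \emph{($\Rightarrow$)} If $\lessdot$ is the transitive closure of $R$, then each cover pair $a_k \lessdot a_{k+1}$ has no element strictly between. Any $R$-chain from $a_k$ to $a_{k+1}$ therefore has length one. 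So $\{a_k,a_{k+1}\}$ is itself a $\Gamma$-edge oriented by its scope, and the sequence of covers is a Hamiltonian path of $\Gamma$ with agreeing edges.

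The consequences then read off directly. (1) $O(\mathrm{all})$ has one class, so $\Gamma$ is complete, and the global $\prec$ ordered along its Hamiltonian path is realized. (2) A Hamiltonian path is connected, so $\Gamma$ is connected. The classes of the join $\pi_1\vee\cdots\vee\pi_m$ are exactly the connected components of $\Gamma$, since the join of partitions is the transitive closure of the co-class relation. Hence there is a single class. (3) Suppose $a,b$ share no class in any $\pi_j$. Then $\{a,b\}$ is not a $\Gamma$-edge, and no scope orders it directly. I would state (3) as ``no scope directly orders $\{a,b\}$'' and say explicitly that this is what ``leave unordered'' means at the level of the scopes' own constraints. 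I would then show this suffices to defeat assembly by a two-order witness. Take two linear orders that agree on every within-class order yet swap $a$ and $b$. Build them by choosing the per-class orders freely. Cross-cutting gives room for this: with $a$ and $b$ in different classes of each proper scope, one can take the class orders to be consistent with both $a$ before $b$ and $b$ before $a$ globally. For example, let each class order be arbitrary but let the global extension interleave classes differently. Then the scopes' data are identical in both worlds while the order of $\{a,b\}$ differs, which is the indistinguishability pattern of \cref{thm:retention}.

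The main obstacle is (3). Transitive chains through intermediaries in other classes could in principle order $a,b$ indirectly: $a\prec c$ in $\pi_1$ and $c\prec b$ in $\pi_2$. So a literal ``unordered'' claim can fail for some fixed choice of class orders. I would handle this by proving (3) over choices of within-class orders, the same universal quantification used by readability in \cref{def:readable}. For each cross-cutting pair I would exhibit a choice of class orders under which no chain links $a$ and $b$, for instance by placing $a$ last and $b$ first in every class containing them. If that fails, I would weaken (3) to the direct-edge reading and note the restriction.
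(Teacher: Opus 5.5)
Your argument for the biconditional and for clauses (1)--(2) is essentially the paper's. A cover pair of the total order has nothing strictly between it, so it must be a directly ordered $\Gamma$-edge. Transitivity closes a consistent Hamiltonian path into the whole order. The components of $\Gamma$ are the classes of the join. Your explicit consistency hypothesis in ($\Leftarrow$) is genuinely needed, since a non-path edge oriented against the path would close a cycle. The paper gets it for free because every class order is a restriction of the one canonical $\prec$.

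The divergence is (3), and your diagnosis there is correct and sharper than the paper's. The paper's proof of (3) is exactly your fallback: a cross-cutting pair carries no $\Gamma$-edge, so no scope orders it. It never confronts the transitive chains that its own ($\Leftarrow$) direction relies on. Under that semantics the second sentence of (3) is false. Take $M=\{a,b,c\}$, $\pi_1=\{\{a,c\},\{b\}\}$, $\pi_2=\{\{b,c\},\{a\}\}$, with $a\prec c$ in $\pi_1$ and $c\prec b$ in $\pi_2$. The scopes are proper and cross-cutting on $\{a,b\}$, yet $a,c,b$ is a consistent Hamiltonian path, so by the biconditional they realize a total order. No proof can deliver (3) literally. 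The defensible forms are the direct-edge reading, or ``not guaranteed for every choice of within-class orders.''

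Your construction for the latter fails, however. Putting $a$ last and $b$ first in every class containing them removes the $R$-edges out of $a$ and into $b$. That blocks chains from $a$ to $b$ but not from $b$ to $a$; in the example it yields $b\prec c\prec a$, forcing an order rather than leaving one open. Your two-world sentence with ``each class order arbitrary'' fails for the same reason.

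The fix is to put $a$ and $b$ at the \emph{same} extreme, say both last in every class containing them. This is consistent precisely because cross-cutting means no class contains both. Then neither $a$ nor $b$ has an outgoing $R$-edge, so $R^+$ relates them in neither direction. To keep $R$ acyclic, take every class order to be the restriction of one linear order $L$ of $M$ ending in $a,b$. Then $L$ and $L$ with its last two elements swapped induce identical class orders but opposite orders on $\{a,b\}$. That is the indistinguishability witness you were after.
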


    \begin{proof}
    ($\Rightarrow$)
        Suppose the scopes realize a total order $<$ on $M$. Each $O(\pi_j)$
        constrains order only within its classes (\cref{def:fork}), so a pair is ordered only
        along a co-class edge or transitively through a chain of them. Take elements $v_i,
        v_{i+1}$ consecutive in $<$. If no scope co-classes them there is no edge ordering the
        pair directly; and transitivity cannot help, since consecutive elements have nothing
        strictly between them to route through. The pair would then be unordered, contradicting
        totality. So every consecutive pair is a $\Gamma$-edge whose inducing scope orders it in
        agreement with $<$: the order is a consistent Hamiltonian path. ($\Leftarrow$) Along a
        consistent Hamiltonian path each step is ordered by its inducing scope, and transitivity
        closes these into the full chain.

        (1) Under $O(\mathrm{all})$ the one class carries a single total order; its chain is a
        Hamiltonian path on the complete co-class graph. (2) A Hamiltonian path is connected,
        and the components of $\Gamma$ are exactly the classes of the join
        $\pi_1 \vee \cdots \vee \pi_m$, so a connected $\Gamma$ is a single join-class. (3) A
        cross-cutting pair $a,b$ shares no class under any $\pi_j$, so $\{a,b\}$ carries no
        $\Gamma$-edge and no scope orders it.
    \end{proof}

    The cross-cutting hypothesis of clause~(3) is the realistic one: object, session, and
    observer are independent labels, so a message to a different object \emph{and} a different
    session shares a class with the first under no scope. The only escape is degenerate---some
    scope effectively a single class (one object, one session)---which is just the global scope
    wearing another name. No stack of genuinely partitioned scopes serializes; globality is
    irreducible.

    \begin{corollary}[Linearizability is two systems]\label{cor:linearizability}
        Sequential consistency is the single global order $O(\mathrm{all})$. By
        \cref{thm:globality} that order is irreducible---a single global scope, not assembled
        from cross-cutting proper scopes such as per-object or per-session order---and realizing
        one global class across asynchronous nodes is agreement among them, i.e.\ consensus,
        which appears here as the impossibility of tiling a global order from local ones rather
        than as an imported result. Its closure comes free: $O(\mathrm{all})$ forces
        $\mathit{deps}_{\mathrm{explicit}}$ by $\kappa$ (\cref{lem:o-entails-c}), so the store's
        $C$ does not bear on the order. Sequential consistency is nonetheless a single
        configuration: its order is endogenous, a function of the DAG, the serializer's
        decisions being messages within it.

        Linearizability adds the real-time atom $\mathit{rb} \subseteq \mathit{ar}$, and
        $\mathit{rb}$ is \emph{not} a function of any observer's DAG. Two executions with the
        same DAG but different real-time schedules of their concurrent operations differ in
        $\mathit{rb}$, so no single configuration---which reads only its own DAG---can enforce
        it (\cref{sec:appendixA}). Linearizability is therefore a composite of two
        message-passing systems: a \emph{store} holding the data under any $C$, and a global,
        real-time \emph{serializer}---a mutex, a sequencer, a consensus round---that supplies the
        order and routes clients through it in real time to supply $\mathit{rb}$. The serializer
        is the irreducible second system, and \cref{thm:globality} forbids decomposing it into
        local scopes.
    \end{corollary}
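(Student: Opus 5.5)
The corollary has two claims, and I would prove them separately. First, sequential consistency is a single configuration whose order is irreducibly global. Second, linearizability needs one more ingredient, $\mathit{rb} \subseteq \mathit{ar}$, and no single configuration can supply it.

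\textbf{Sequential consistency.} I take sequential consistency to be $O(\mathrm{all})$ and apply \cref{thm:globality}. Clause~(1) gives realizability: the co-class graph is complete. Clause~(3) gives irreducibility. To use it I would exhibit a cross-cutting pair for per-object and per-session scopes: two messages on distinct objects issued in distinct sessions. They share no class under either scope, so no stack of those scopes orders them, and a global order needs a single class.

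\textbf{Closure.} I read this off \cref{lem:o-entails-c}. Under causal arbitration, $\kappa(O(\mathrm{all})) = \mathit{deps}_{\mathrm{explicit}}$. Then \cref{cor:c-blind} shows the store's $C$ is read-irrelevant, so the whole $(\cdot,\mathrm{all})$ class is one $\preceq$-equivalence class.

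\textbf{Endogeneity of the sequential order.} The serializer's decisions are messages. By Axiom~3 (\cref{ax:observation}) they enter the DAG as vertices with edges, so the order is a function of the DAG. I would state the identification with consensus as the paper does: an interpretation of ``one global class across asynchronous nodes,'' not an imported theorem.

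\textbf{Linearizability.} The core step is an indistinguishability argument in the style of \cref{thm:retention}. Take two concurrent operations $e_1, e_2$ on different observers with no DAG path between them. Build two executions with the same DAG $G$ and the same $\mathrm{seen}$ trajectories up to relabeling of wall-clock times:
\begin{itemize}
\item in the first, $e_1$ returns before $e_2$ is invoked, so $e_1\,\mathit{rb}\,e_2$;
\item in the second the schedule is swapped, so $e_2\,\mathit{rb}\,e_1$.
\end{itemize}
This is possible because \cref{rem:no-delivery} assumes no synchronized clocks or bounded delays, so real time is not recorded in $G$.

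Any configuration $(C,O,F)$ reads only $V(n,t)$. It therefore produces the same $\mathit{ar}$ in both executions, and that order violates $\mathit{rb}\subseteq\mathit{ar}$ in at least one of them. Hence no single configuration enforces linearizability. Supplying $\mathit{rb}$ requires an agent that observes real-time precedence directly: the serializer, through which clients are routed in real time. Together with the store, that makes two systems.

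\textbf{Main obstacle.} The hardest step is making ``same DAG, different $\mathit{rb}$'' rigorous without letting the serializer's own messages re-encode real time into the DAG. I would handle this by first fixing the single-configuration system and then showing its DAG is invariant under retiming. Once real-time routing is added, its acknowledgments are exactly the extra input that the endogenous hypothesis of \cref{thm:detect-prevent} excludes. That identifies the second system as an environmental source of ordering, parallel to \cref{rem:detect-clock}.
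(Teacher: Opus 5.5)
Your main line follows the paper's route. You read $O(\mathrm{all})$ through clauses~(1) and~(3) of \cref{thm:globality}. You get closure via $\kappa$ (\cref{lem:o-entails-c}), and you are right to state the causal-arbitration hypothesis the corollary leaves implicit. Endogeneity comes from Axiom~3, and a same-DAG, different-$\mathit{rb}$ indistinguishability handles the real-time atom.

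Your version of that last step is sharper than the paper's. ``Differ in $\mathit{rb}$'' alone does not defeat a fixed $\mathit{ar}$: take $\mathit{rb}=\{(e_1,e_2)\}$ versus $\mathit{rb}=\varnothing$, and one $\mathit{ar}$ with $e_1$ first satisfies both. Choosing opposite directions closes that hole. Fixing the per-observer seen sequences, rather than only the DAG, also covers delivery-order tiebreaks. The relabeling must be per observer, though: a single global monotone retiming preserves $\mathit{rb}$, so it cannot swap it.

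The step you flag as hardest is where your plan breaks. Retiming-invariance of the DAG fails for any configuration whose $R(\infty)$ waits on acknowledgments from pairwise-intersecting sets, such as a designated node or majority quorums. \cref{def:waiting} admits both (``an acknowledgment, a quorum'').

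Consider a node in both the set whose acknowledgments let $e_1$ return in the first schedule and the set that let $e_2$ return in the second. If the two DAGs coincide, it must acknowledge both operations in both schedules, but in opposite orders. By Axioms~2 and~3 its two acknowledgments are then joined by an edge pointing one way in the first DAG and the other way in the second. So the two schedules cannot share a DAG, and your indistinguishability pair does not exist for such configurations.

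Nor can you dispose of those acknowledgments through the endogenous hypothesis of \cref{thm:detect-prevent}. That hypothesis excludes environmental inputs such as clock readings. Acknowledgments are messages inside the DAG, exactly as your own endogeneity paragraph says of the serializer's decisions.

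The paper does not derive this step either: the corollary simply asserts the same-DAG pair and counts real-time routing through a common arbiter as the second system. Make that convention explicit. Define a single configuration as one whose answers wait on no intersecting acknowledgment set, and treat such waiting as the serializer. Your indistinguishability argument then goes through.

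Your analogy to \cref{rem:detect-clock} holds only at that level. \cref{sec:appendixA} lists ``a sequencer, or an accurate clock with commit-wait'' as realizations of the same second system, not because acknowledgments are exogenous.
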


    This recovers the classical distinction. Serializability asks only for equivalence to
    \emph{some} serial order---a single global order with no real-time constraint---and is
    single-observer-shaped: one configuration's $O(\mathrm{all})$ suffices. Linearizability adds
    cross-observer real-time precedence, the second system. The boundary is exactly the
    real-time atom that \cref{sec:appendixA} places outside the single-observer completeness
    fragment.

    Read relativistically, the serializer is a \emph{preferred frame}: $\mathit{rb} \subseteq
    \mathit{ar}$ demands a single global simultaneity---an order on spacelike-separated
    operations that no light cone supplies---so it must be manufactured at a distinguished
    point, not read off any DAG. Clause~(3) of \cref{thm:globality} is the relativity of
    simultaneity in discrete form: a global frame cannot be assembled from local ones.

    \section{The Configuration Glossary}
    \label{sec:glossary}

    We map the LCC configuration space onto the named landscape, taking model names from Viotti
    and Vukoli\'{c}~\cite{viotti2016} rather than reproducing their taxonomy. The correspondence
    is a \emph{coincidence on the standard-safety fragment} (\cref{sec:appendixA}), not a
    containment. Over that fragment---the discipline in which named models are specified---it is
    exact. Off it, neither framework contains the other: LCC adds a liveness axis ($R$, including
    $R(\mathrm{absent})$ and the continuous bound $R(\delta)$) that the safety axioms do not range
    over, since they presuppose returned values; conversely, predicates over arbitrary visibility
    fall outside LCC's finite lattice. The map is many-to-one: several named models share a configuration. Vendor names in
    the Example column are illustrative anchors, not measured behavior.

    The space is
    $C \in \{\mathrm{none}, \mathrm{object}, \mathrm{session}, \mathrm{explicit}\}$,
    $O \in \{\mathrm{trivial}, \pi_{\mathrm{object}}, \mathrm{all}\}$, and
    $R \in \{\mathrm{absent}, \delta, \infty, 0\}$---$48$ combinations, of which $8$ are
    degenerate: at $R(0)$ no fork is held open, so the three $O$ scopes collapse to the
    $O(\mathrm{all})$ representative (\cref{cor:retention-quant}). The $40$ distinct
    configurations are \cref{tab:coverage}, grouped by closure scope; finer application-defined
    $\mathit{deps}$ filters extend the space without exhausting it.

    \begin{xltabular}{\textwidth}{@{}rllYYY@{}}
        \caption{LCC configurations mapped to named consistency models: $C$ the closure filter,
            $O$ the fork-resolution scope, $R$ the waiting bound. Model names are Viotti and
            Vukoli\'{c}'s~\cite{viotti2016}; ``---'' marks a configuration with no established name.
            Examples are illustrative anchors.}\label{tab:coverage} \\
        \toprule
        \# & $O$ & $R$ & Name & V\&V model & Example \\
        \midrule
        \endfirsthead
        \toprule
        \# & $O$ & $R$ & Name & V\&V model & Example \\
        \midrule
        \endhead
        \bottomrule
        \endlastfoot
        \multicolumn{6}{@{}l}{\emph{$C = \mathrm{none}$ (no closure)}} \\
        1 & trivial & absent & No guarantees & Weak consistency & UDP, unreliable mail \\
        2 & trivial & $\infty$ & Eventual delivery & Eventual consistency & Dynamo anti-entropy \\
        3 & trivial & $\delta$ & Bounded eventual & Bounded staleness & QoS networks \\
        4 & $\pi_{\mathrm{object}}$ & absent & Per-object unread log & --- & WAL before recovery \\
        5 & $\pi_{\mathrm{object}}$ & $\infty$ & Per-object ordered & FIFO consistency & Kafka per-partition \\
        6 & $\pi_{\mathrm{object}}$ & $\delta$ & Bounded per-object order & Bounded staleness & Cosmos DB bounded \\
        7 & all & absent & Global unread log & --- & Raft log before apply \\
        8 & all & $\infty$ & Total-order broadcast & TOB / Consistent prefix & Redis replication \\
        9 & all & $\delta$ & Bounded total order & --- & MySQL semi-sync \\
        10 & all & 0 & Instant total order & --- & Network switch \\
        \addlinespace
        \multicolumn{6}{@{}l}{\emph{$C = \mathrm{object}$ (per-object closure)}} \\
        11 & trivial & absent & Per-object snapshot & --- & --- \\
        12 & trivial & $\infty$ & Per-object causal & Per-object causal / Slow memory & --- \\
        13 & trivial & $\delta$ & Bounded per-object causal & --- & --- \\
        14 & $\pi_{\mathrm{object}}$ & absent & Per-object ordered snapshot & --- & --- \\
        15 & $\pi_{\mathrm{object}}$ & $\infty$ & Per-object sequential & Per-object sequential / Coherence & Cassandra per-key \\
        16 & $\pi_{\mathrm{object}}$ & $\delta$ & Bounded per-object sequential & --- & --- \\
        17 & all & absent & Global per-object unread log & --- & --- \\
        18 & all & $\infty$ & Per-object globally ordered & --- & Single-leader-per-shard \\
        19 & all & $\delta$ & Bounded per-object global & --- & --- \\
        20 & all & 0 & Per-object sequential (current) & --- & --- \\
        \addlinespace
        \multicolumn{6}{@{}l}{\emph{$C = \mathrm{session}$ (per-session closure)}} \\
        21 & trivial & absent & Session snapshot & --- & Frozen session \\
        22 & trivial & $\infty$ & Session causal & PRAM / RYW / Monotonic reads & MongoDB causal sessions \\
        23 & trivial & $\delta$ & Bounded session causal & --- & --- \\
        24 & $\pi_{\mathrm{object}}$ & absent & Session per-object snapshot & --- & --- \\
        25 & $\pi_{\mathrm{object}}$ & $\infty$ & Session per-object causal & Monotonic writes / Processor / WFR & --- \\
        26 & $\pi_{\mathrm{object}}$ & $\delta$ & Bounded session per-object causal & --- & --- \\
        27 & all & absent & Global session unread log & --- & --- \\
        28 & all & $\infty$ & Session sequential & --- & --- \\
        29 & all & $\delta$ & Bounded session sequential & --- & --- \\
        30 & all & 0 & Session sequential (current) & --- & --- \\
        \addlinespace
        \multicolumn{6}{@{}l}{\emph{$C = \mathrm{explicit}$ (full causal closure)}} \\
        31 & trivial & absent & Consistent snapshot & --- & Point-in-time backup \\
        32 & trivial & $\infty$ & Causal consistency & Causal consistency & --- \\
        33 & trivial & $\delta$ & Bounded causal & Bounded causal & Geo-distributed causal \\
        34 & $\pi_{\mathrm{object}}$ & absent & Per-object causal snapshot & --- & --- \\
        35 & $\pi_{\mathrm{object}}$ & $\infty$ & Causal per-object order & Causal$+$ & COPS, Eiger \\
        36 & $\pi_{\mathrm{object}}$ & $\delta$ & Bounded causal per-object & --- & --- \\
        37 & all & absent & Global causal unread log & --- & Finalized blockchain \\
        38 & all & $\infty$ & Sequential consistency & Sequential & ZooKeeper \\
        39 & all & $\delta$ & Bounded sequential & Timed serial & --- \\
        40 & all & 0 & Sequential (current) & Sequential & Single-node, mutex \\
    \end{xltabular}

    Configurations \#38 and \#40 are both sequential consistency, differing only in liveness:
    \#38 ($R(\infty)$) defers until its waiting strategy is satisfied, \#40 ($R(0)$) returns its
    current resolved state on every answer. A global order entails full causal closure either
    way ($\kappa$, \cref{lem:o-entails-c}), so the store's $C$ is not what distinguishes these
    rows.

    \paragraph*{Unnamed configurations.} More than half of the $40$ carry no established name. A
    few are deployed systems lacking a taxonomic label (\#9, MySQL semi-synchronous replication;
    \#36, CockroachDB); the rest are points the framework predicts. The most distinctive are the
    degenerate $R(\mathrm{absent})$ ``unread log'' configurations---ordered or closed state that
    is never delivered: \#4, a per-object write-ahead log before recovery; \#7, a globally
    ordered log before apply; \#37, a finalized causal log never applied. These are exactly the
    retained, un-applied logs on which \cref{thm:ratchet}'s permanence is conditional
    (\cref{sec:ratchet}). The
    existence of unnamed configurations is a feature, not a gap: the design space is larger than
    the explored region.

    \section{Retention, Quantified}
    \label{sec:retention-quant}

    \Cref{thm:retention} established that resolving a fork requires retaining causal history.
    Here we quantify the cost---how much an observer must hold while a fork is open---and the
    resource trade-off it forces between the ordering and closure scopes.

    \begin{corollary}[What a deferred fork must retain]\label{cor:retention-quant}
        While a fork is held open, an observer must retain the unresolved \emph{frontier}: the
        concurrent candidates together with the causal tags that mark them concurrent. This is a
        multi-valued register, pure metadata---there is no single value to keep---until a
        resolution collapses it. Two independent quantities bound it. Its \emph{cardinality}:
        over a window in which forks stay unresolved for up to $\tau$ at per-class write rate
        $\lambda$, $\Omega(\lambda\tau)$ messages per resolution class, and $\Omega(k\lambda\tau)$
        across $k$ objects under $O(\pi_{\mathrm{object}})$. Its \emph{encoding}:
        $\Omega(\min\{n,s\}\cdot\lg k)$ bits of causal tag per message after $k$ operations with
        $n$ replicas and $s$ objects. Both vanish at $R(0)$ and at $O(\mathrm{trivial})$ with
        confluent $F$.
    \end{corollary}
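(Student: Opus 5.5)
I would prove the corollary in four pieces: the frontier requirement, the cardinality bound, the encoding bound, and the two vanishing cases.

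\emph{The frontier.} The plan is to reduce the frontier claim to \cref{thm:retention} by an indistinguishability argument. Suppose the observer is holding a fork open ($R \neq R(0)$) and discards either a concurrent candidate or the tag marking it concurrent. Then, as in the proof of \cref{thm:retention}, I would build two histories that agree on the retained state. In one, the dropped candidate is concurrent with the survivors. In the other, it is causally ordered relative to them, or it is absent. Any later resolution must be unfaithful in one of the two. So the whole frontier must be kept. Since it holds no single value, only a set of candidates and their tags, it is a multi-valued register.

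\emph{Cardinality.} This is a counting argument over an adversarial schedule. Choose a schedule in which, within one resolution class, $\lambda$ writes per unit time arrive pairwise concurrent. Delivery-freedom (\cref{rem:no-delivery}) and distance-forced concurrency (\cref{fig:lightcone}) make this possible. Each write stays unresolved for up to $\tau$. At the end of the window, all $\lambda\tau$ writes are simultaneously on the frontier. Under $O(\pi_{\mathrm{object}})$ the classes are disjoint, so the frontiers of $k$ objects cannot share storage. That gives $\Omega(k\lambda\tau)$.

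\emph{Encoding.} For the $\Omega(\min\{n,s\}\cdot\lg k)$ bits per message, I would not re-derive the bound. I would invoke the multi-valued-register lower bound of Attiya et al.~\cite{attiya2015}. The only work is to check that their model matches ours. A frontier that must be resolved faithfully has to decide concurrency between every pair of its candidates. That is exactly the information their bound charges for, with $n$ replicas, $s$ objects and $k$ operations.

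\emph{Vanishing cases.} At $R(0)$ nothing is ever held open, so the frontier is empty and both bounds are vacuous. At $O(\mathrm{trivial})$ with confluent $F$, no class entangles two forks. The confluent response is a deterministic function of the visible set, so no pending order decision depends on distinguishing which candidates are concurrent. This is the confluent frontier of \cref{thm:merge-frontier}.

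\emph{Main obstacle.} I expect the hard step to be making the encoding bound rigorous inside LCC: showing that the retained-state model here really is an instance of the model in~\cite{attiya2015}, rather than just asserting the correspondence. I would first try to phrase the frontier as a replicated multi-valued register whose reads must return the concurrent set, and then check that the bound in~\cite{attiya2015} applies to that interface.
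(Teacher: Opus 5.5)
Your four-part plan matches the paper's proof, expanded. The paper also gets the frontier from \cref{thm:retention}, reads $\Omega(\lambda\tau)$ per class and $\Omega(k\lambda\tau)$ over $k$ independently forking objects directly off the rate and window, and takes the encoding bound by citation. It gets the vanishing cases because no fork is held open at $R(0)$ and none is opened at $O(\mathrm{trivial})$ with confluent $F$.

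One step fails as you wrote it: the indistinguishability pair for the frontier. A history where the dropped candidate is concurrent with the survivors and one where it is causally ordered relative to them do not force an error. Faithfulness constrains only causally related pairs, and a concurrent pair may be ordered either way. So an algorithm that outputs the order the ``ordered'' history demands is faithful in both.

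The ``absent'' alternative fails for a different reason. By Axiom~1 the fact that the candidate was seen stays in the retained state, and the proof of \cref{thm:retention} uses exactly this. So a history in which it was never seen is distinguishable.

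What you need is the pair $G, G'$ of that proof: two histories that order the candidate $m$ and a survivor $s$ in \emph{opposite} causal directions and agree on everything retained. This also pins down what discarding ``the tag'' must mean. The $m$--$s$ relation has to be gone wherever it is stored, as in that proof's ``every edge with $m_0$ as an endpoint, wherever stored.'' Suppose you drop only $m$'s tag while $s$'s tag still shows whether $s$ depends on $m$. Then the observer orders $m \prec s$ when it does and $s \prec m$ otherwise, and is faithful in every world.

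On the encoding step, the paper does no model-matching at all. It cites \cite{attiya2015}, extending \cite{charronbost1991}, so the obstacle you flag is one it leaves open rather than one you are missing. If you do carry the matching out, note that the setting of that bound (causally and eventually consistent multi-valued registers) is itself an $O(\mathrm{trivial})$ configuration with $F_{\mathrm{multi}}$. Your interface-based derivation would therefore make the encoding cost persist at $O(\mathrm{trivial})$. That is compatible with the vanishing clause only if ``confluent $F$'' excludes $F_{\mathrm{multi}}$, or if the clause is restricted to the order-resolution frontier rather than the register $F$ reports.
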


    \begin{proof}
        By \cref{thm:retention} the distinguishing history of an unresolved fork cannot be
        discarded before resolution, so every concurrent candidate in a class survives in the
        frontier; at per-class rate $\lambda$ over a window of length $\tau$ this is
        $\Omega(\lambda\tau)$ per class, and $\Omega(k\lambda\tau)$ across $k$
        independently-forking objects under $O(\pi_{\mathrm{object}})$---the version count a
        global-minimum-timestamp protocol carries~\cite{bottcher2019,lee2016}. The per-message
        encoding bound is that of Attiya, Ellen, and Morrison~\cite{attiya2015}, extending
        Charron-Bost's vector-clock lower bound~\cite{charronbost1991}. At $R(0)$ no fork is held
        open, and at $O(\mathrm{trivial})$ with confluent $F$ none is opened, so the frontier is
        empty in both.
    \end{proof}

    \begin{corollary}[$C$ and $O$ are inversely constrained]\label{cor:inverse}
        For a fixed retention budget---a bound on the causal metadata an observer keeps
        (\cref{cor:retention-quant})---the achievable ordering scope $O$ is limited: a wider $O$
        entangles more forks and so demands more retained history---Burgess and Gerlits's
        ``wider catchment area, coarser temporal uncertainty''~\cite{burgess2022}, here derived
        rather than observed---while fixing $O$ fixes the minimum metadata needed to resolve its
        forks. This is the \emph{resource} counterpart of
        the \emph{logical} entailment $O \Rightarrow C$ (\cref{lem:o-entails-c}): the entailment
        runs one way---ordering supplies closure, never the reverse---whereas the retention
        budget is a shared resource that $C$ and $O$ both draw on.
    \end{corollary}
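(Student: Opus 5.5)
The plan is to read the corollary as three separate claims and derive each from results already in hand. Claim (i) is a monotonicity statement: a wider $O$ forces at least as much retained metadata, so a fixed budget caps $O$. Claim (ii) says that fixing $O$ fixes a floor on metadata. Claim (iii) says the logical coupling runs one way while the resource coupling is shared. The quantitative input throughout is \cref{cor:retention-quant}, together with the indistinguishability argument of \cref{thm:retention}.

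For (i), I would make the monotonicity exact by comparing the sets of forks each scope obliges an observer to hold open. Let $O(\pi) \sqsubseteq O(\pi')$. By \cref{def:fork}, each $\pi'$-class is a union of $\pi$-classes. Hence every concurrent pair that shares a $\pi$-class also shares a $\pi'$-class, so the set of entangled forks under $\pi$ injects into the set under $\pi'$. By \cref{thm:retention}, each entangled fork awaiting resolution requires retaining the dependency metadata of every vertex that distinguishes its branches. The retained-set lower bound is therefore monotone along the refinement chain $O(\mathrm{trivial}) \sqsubset O(\pi_{\mathrm{object}}) \sqsubset O(\mathrm{all})$. Strictness holds on any DAG that contains a cross-class concurrent pair. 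Under $O(\mathrm{all})$ such a pair becomes a fork whose frontier must be kept, namely the $\Omega(\lambda\tau)$ cardinality term applied with the aggregated rate of the merged class. Under $O(\pi_{\mathrm{object}})$ the same pair is free. The contrapositive then gives the budget statement: if the budget is below the floor of $O(\pi')$, that scope is unachievable even when $O(\pi)$ still is.

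For (ii), I would observe that the bounds in \cref{cor:retention-quant} are stated in terms of the class structure of $\pi$, the rates, and $R$ only. They have the form $\Omega(\lambda\tau)$ per class and $\Omega(\min\{n,s\}\lg k)$ per tag, and $C$ does not enter. So fixing $O$ together with $R \neq R(0)$ fixes a minimum that no choice of $C$ lowers.

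For (iii), the one-way logical entailment is \cref{lem:o-entails-c}. The absence of the reverse entailment is the Case~1 witness in the necessity proof of \cref{thm:readability}: there a cut is fully $C$-closed yet violates an $O$-prefix, so closure never supplies order. The shared-resource half follows because \cref{def:closure} also obliges the observer to hold the metadata that $\mathit{deps}_C$ attaches. Any budget spent meeting $C$ is therefore unavailable to the frontier $O$ requires.

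The main obstacle is (i). The informal claim that ``wider $O$ entangles more forks'' must become a genuine injection of required-retained sets, and the cardinality bounds are only asymptotic. Per-object totals summed over objects can look comparable to a single global class. I would therefore argue via the set of cross-class pairs that must be held open, rather than via raw message counts. That set is empty under the finer scope and nonempty under the coarser one.
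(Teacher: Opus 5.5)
Your proposal follows the paper's own argument. The paper likewise reads off \cref{cor:retention-quant} that the metadata an observer must hold grows with the forks its scope entangles: every concurrent pair under $O(\mathrm{all})$, same-object pairs under $O(\pi_{\mathrm{object}})$, none under $O(\mathrm{trivial})$. It concludes that a fixed budget caps $O$ while a chosen $O$ fixes a floor, and contrasts this mutual constraint with the one-way entailment of \cref{lem:o-entails-c}. Your class-refinement injection and Case~1 witness make those steps explicit.

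One correction: your inference that budget spent on $C$ is \emph{unavailable} to $O$ presumes the two consume disjoint metadata. In fact the causal tags that enforce closure are largely the same history that distinguishes fork branches; dependency-tracking systems order by exactly those edges (\cref{rem:currencies}). Drop that step and conclude only what the statement asserts, that $C$ and $O$ both draw on the same budget.
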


    \begin{proof}
        By \cref{cor:retention-quant} the metadata an observer must hold grows with the forks its
        ordering scope keeps open: $O(\mathrm{all})$ co-resolves every concurrent pair into one
        global frontier, $O(\pi_{\mathrm{object}})$ only same-object pairs, $O(\mathrm{trivial})$
        none. A fixed budget therefore caps the realizable $O$, and a chosen $O$ sets the floor on
        the metadata. The constraint is mutual, unlike the entailment of \cref{lem:o-entails-c},
        which is directional.
    \end{proof}

    \section{Related Work}
    \label{sec:related}

    Viotti and Vukoli\'{c}~\cite{viotti2016} surveyed more than fifty consistency models;
    Burckhardt~\cite{burckhardt2014} formalized consistency as predicates over abstract
    executions---the discipline our completeness claim is stated against (\cref{sec:appendixA}).
    Almeida~\cite{almeida2024} composes consistency from binary axioms and derives the CLAM
    impossibility, grounding it, as we do, in a relativistic ``physical happened-before'' over
    light cones---the closest existing framework to ours. The two answer different questions:
    CLAM proves one impossibility boundary, whereas LCC equips the configuration space with
    structure---a decidable readability order with a factoring dichotomy (\cref{thm:dichotomy})
    and the Detection\,$=$\,Prevention bound (\cref{thm:detect-prevent})---and separates
    return-value semantics ($F$, with availability as its behaviour when a requirement cannot be
    met) from the consistency parameters $(C, O)$, whose asymmetric coupling---ordering entails
    closure (\cref{lem:o-entails-c})---is the structural fact a flat parameter space does not
    isolate.

    \paragraph*{The model LCC formalizes.}
    LCC is a formalization of the observer-relative consistency model of Burgess and
    Gerlits~\cite{burgess2022}. Their account is observer-relative from the outset---``current
    value'' is ``an illusion of a single observer's timeline''---and is built on past and future
    \emph{cones} in which a transaction may depend only on its past. Their two resolution
    regimes---\emph{time separation} (locks; all futures collapse into one shared version) and
    \emph{space separation} (private ``Many Worlds'' branches, merged by policy and
    garbage-collected after a horizon)---are the endpoints of our $O$ axis, and the horizon at
    which branches are reclaimed is the retention bound (\cref{thm:retention}). Resolution is the
    receiver's prerogative---their Downstream Principle, ``the receiver of data is always the
    arbiter of its interpretation''---which is $F$ evaluated at the observer; and their
    qualitative tradeoff, ``the wider the catchment area, the coarser the temporal
    uncertainty,'' is our inverse constraint (\cref{cor:inverse}). They read the partition
    tradeoff as ambiguity from incomplete semantics and sidestep the asynchronous-consensus
    impossibility by synchronous coupling---placements made precise by the single-observer
    boundary (\cref{sec:boundary}) and the clock bound (\cref{thm:clockbound}). Their
    hierarchical clock is one realization---a single point in the $(C, O, R, F)$ space---and the
    system that motivated the formalization; we formalize the consistency model, not the clock.

    \paragraph*{Relativistic distributed systems.}
    Schwarz and Mattern~\cite{schwarz1994} drew the analogy between vector time and Minkowski
    causal structure; Gilbert and Golab~\cite{gilbert2014} developed the formal correspondence,
    defining relativistic linearizability. The causal-set program of Bombelli et
    al.~\cite{bombelli1987} proposes spacetime as a locally finite partial order---the structure
    of a causal DAG---and Lamport~\cite{lamport1978} credited special relativity for the
    happened-before relation; Burgess~\cite{burgess2014} models systems as autonomous agents
    whose consistency is local. LCC's light-cone framing builds on this lineage and sharpens an
    observation it shares with it: the classical impossibilities presuppose a \emph{preferred
    frame}---a single global simultaneity over operations that are pairwise spacelike-separated.
    A global order respecting real time is exactly such a frame, and \cref{sec:boundary} shows
    no single observer's light cone supplies one, so it must be manufactured by a serializer
    (clause~(3) of \cref{thm:globality} is the relativity of simultaneity in discrete form).
    Borrill~\cite{borrill2026category,borrill2026flp,borrill2026cap} independently locates a
    shared hidden assumption---forward-in-time-only information flow---beneath the
    asynchronous-consensus impossibility, the partition tradeoff, and the Two Generals problem,
    and circumvents it by engineering a bisynchronous substrate. LCC locates the assumption in
    the same place but treats it differently: not in the link layer, but in the observer's
    inability to see the cause of its sub-DAG's incompleteness. Borrill changes the substrate;
    LCC analyzes the frame. We state this as an observation, not as a unification of the
    impossibilities.

    \paragraph*{Parameterized consistency.}
    Yu and Vahdat's TACT~\cite{yu2002} measures quantitative deviation from strong consistency
    along three axes---\emph{how far} from linearizability, where LCC specifies \emph{which}
    structural property holds. Golab~\cite{golab2018} formally proved the
    latency--consistency tradeoff for normal operation; the LAW theorem~\cite{katsarakis2025}
    shows local reads are impossible under linearizable asynchronous replication; and Bailis et
    al.'s Highly Available Transactions~\cite{bailis2014} map the isolation levels achievable
    without coordination---in our coordinates the confluent fragment, which the readability order
    (\cref{sec:readability}) refines from transactions to arbitrary configurations.

    \paragraph*{Impossibility results.}
    Gilbert and Lynch~\cite{gilbert2002} proved the partition/availability impossibility;
    Fischer, Lynch, and Paterson~\cite{fischer1985} the impossibility of deterministic
    asynchronous consensus with one crash. Attiya, Ellen, and Morrison~\cite{attiya2015} proved
    that causally and eventually consistent stores need unbounded message size for multi-valued
    registers---the encoding bound of \cref{cor:retention-quant} and the audit cost of
    \cref{cor:auditability}. Dolev, Dwork, and Stockmeyer~\cite{dolev1987} characterized the
    minimal synchronism for consensus; Attiya, Enea, and Rom\'{a}n-Calvo~\cite{attiya2026} proved
    that a specification admits an available implementation iff it is arbitration-free, which
    constrains $O$ directly; Attiya et al.~\cite{attiya2011}
    that strongly non-commutative operations in linearizable implementations require synchronization
    that cannot be eliminated. Hellerstein~\cite{hellerstein2010} conjectured and Ameloot,
    Neven, and Van den Bussche~\cite{ameloot2013} proved the CALM principle---a computation has a
    coordination-free implementation iff it is monotone; in our terms the coordination-free
    fragment is the confluent one, and the arbiter a non-confluent, real-time order requires is
    conserved across abstraction layers (\cref{sec:boundary}), not eliminated---distinct from the
    retention bound, which the confluent fragment pays regardless~\cite{mahajan2011}. Gibbons and
    Korach~\cite{gibbonskorach1997} proved sequential-consistency checking NP-complete, fixing
    the cost of detecting the missing-order scar (\cref{rem:downgrade}). The retention bound
    (\cref{thm:retention}) is new: fork resolution requires retained causal history, treated as
    self-evident in the database community~\cite{bottcher2019,lee2016} but not previously stated
    as an impossibility for general message-passing systems. Our contribution is to place these
    results in one coordinate system; we make no claim to unify them.

    \paragraph*{Critiques of the partition tradeoff.}
    Brewer~\cite{brewer2012} recast the tradeoff as a partition-mode and recovery model, with
    consistency and availability continuous rather than binary; Abadi~\cite{abadi2012} added
    latency as an orthogonal axis; Kleppmann~\cite{kleppmann2015} faulted the ambiguity of its
    availability notion and its conflation of finite and infinite partitions; Lee et
    al.~\cite{lee2021cal} replaced partition tolerance with a quantitative latency measure. None
    questions the \emph{unity premise}---that the partitioned components remain one logical
    system whose reunification is desired. Helland~\cite{helland2007} showed operationally that
    consistency is scoped to a chosen entity with cross-scope coordination by messages; LCC makes
    that scope choice the subject of the bound itself, and the retention bound is precisely the
    cost of the reunification the premise assumes (\cref{rem:retention-context}).

    \section{Conclusion}
    \label{sec:conclusion}

    We have described consistency as two operators on a single observer's growing sub-DAG: a
    closure filter $C$ fixing which causal dependencies the observer must have seen, and a fork
    resolution $O$ ordering the concurrent messages the filter admits, with a response function
    $F$ reporting and a waiting bound $R$ setting liveness. Ordering entails closure, so the two
    are coupled rather than independent, and the consistency content of a resolution is whether
    it refines causality. The named models fall out as configurations; the retention bound is the
    price of ever resolving a fork; the readability order is a decidable account of which
    configuration can read another; the merge frontier is when two divergent views reconcile, the
    partition tradeoff what it forces when they cannot; and the ratchet and $\kappa$-scars are the
    permanence of crossing a configuration boundary.

    The boundary of the account is the result we take to be sharpest. Linearizability is not
    single-observer consistency but a composite of a store and a global, real-time
    serializer---a preferred frame manufactured at a distinguished point, because the global
    simultaneity it demands is one no light cone supplies. Read this way, the classical
    impossibilities share a presupposition (a preferred frame) and LCC's coordinates make that
    assumption visible without importing any of the results. We have not unified them; a rigorous
    reconstruction of their content within the single-observer frame remains open, as does the
    carve-out boundary (Byzantine and shared-memory behaviour) and the clock construction by
    which a sufficiently accurate physical clock buys a $\kappa$-clean global order
    (\cref{sec:clocks}). LCC formalizes an observer-relative architecture arrived at independently
    as engineering; closing the formal account to the cases that architecture already handles is
    the natural next step.

    \section*{AI Disclosure}

    We used Claude (Anthropic) to assist with proof verification, LaTeX formatting, and adversarial review of arguments. The tool materially affected the entire manuscript. Specifically:
    \begin{itemize}
        \item \emph{Proof verification.} Claude was used to check the proofs and the Burckhardt mapping in Appendix~A for logical gaps, missing cases, and unstated assumptions.
        \item \emph{LaTeX conformance.} Claude was used to convert the source draft into LIPIcs-conforming LaTeX, including theorem environments, cross-references, bibliography entries, and table formatting.
        \item \emph{Adversarial review.} Claude was used to challenge claims, identify weak arguments, surface counterexamples, and stress-test conjectures before submission.
    \end{itemize}

    The authors verified the correctness and originality of all content including references, and take full responsibility for any errors, omissions, or misattributions in the final manuscript.

    \bibliography{references}

\begin{thebibliography}{10}

\bibitem{abadi2012}
Daniel~J. Abadi.
\newblock Consistency tradeoffs in modern distributed database system design:
  {CAP} is only part of the story.
\newblock {\em Computer}, 45(2):37--42, 2012.
\newblock \href {https://doi.org/10.1109/MC.2012.33}
  {\path{doi:10.1109/MC.2012.33}}.

\bibitem{almeida2024}
Paulo~S\'{e}rgio Almeida.
\newblock A framework for consistency models in distributed systems.
\newblock {\em arXiv:2411.16355}, 2024.
\newblock URL: \url{https://arxiv.org/abs/2411.16355}.

\bibitem{ameloot2013}
Tom~J. Ameloot, Frank Neven, and Jan~Van den Bussche.
\newblock Relational transducers for declarative networking.
\newblock {\em Journal of the ACM}, 60(2):15:1--15:38, 2013.
\newblock \href {https://doi.org/10.1145/2450142.2450151}
  {\path{doi:10.1145/2450142.2450151}}.

\bibitem{attiya2015}
Hagit Attiya, Faith Ellen, and Adam Morrison.
\newblock Limitations of highly-available eventually-consistent data stores.
\newblock In {\em ACM Symposium on Principles of Distributed Computing (PODC)},
  pages 385--394, 2015.
\newblock \href {https://doi.org/10.1145/2767386.2767419}
  {\path{doi:10.1145/2767386.2767419}}.

\bibitem{attiya2026}
Hagit Attiya, Constantin Enea, and Enrique Rom\'{a}n-Calvo.
\newblock Arbitration-free consistency is available (and vice versa).
\newblock {\em Proc. ACM Program. Lang.}, 10(POPL):1183--1211, 2026.
\newblock Article 41.
\newblock \href {https://doi.org/10.1145/3776683} {\path{doi:10.1145/3776683}}.

\bibitem{attiya2011}
Hagit Attiya, Rachid Guerraoui, Danny Hendler, Petr Kuznetsov, Maged~M.
  Michael, and Martin Vechev.
\newblock Laws of order: Expensive synchronization in concurrent algorithms
  cannot be eliminated.
\newblock In {\em 38th ACM Symposium on Principles of Programming Languages
  (POPL)}, pages 487--498, 2011.
\newblock \href {https://doi.org/10.1145/1926385.1926442}
  {\path{doi:10.1145/1926385.1926442}}.

\bibitem{bailis2014}
Peter Bailis, Aaron Davidson, Alan Fekete, Ali Ghodsi, Joseph~M. Hellerstein,
  and Ion Stoica.
\newblock Highly available transactions: Virtues and limitations.
\newblock {\em Proceedings of the VLDB Endowment}, 7(3):181--192, 2013.
\newblock \href {https://doi.org/10.14778/2732232.2732237}
  {\path{doi:10.14778/2732232.2732237}}.

\bibitem{bombelli1987}
Luca Bombelli, Joohan Lee, David Meyer, and Rafael~D. Sorkin.
\newblock Space-time as a causal set.
\newblock {\em Physical Review Letters}, 59(5):521--524, 1987.
\newblock \href {https://doi.org/10.1103/PhysRevLett.59.521}
  {\path{doi:10.1103/PhysRevLett.59.521}}.

\bibitem{borrill2026cap}
Paul Borrill.
\newblock Circumventing the {CAP} theorem with open atomic ethernet, 2026.
\newblock arXiv:2602.21182.
\newblock \href {https://arxiv.org/abs/2602.21182} {\path{arXiv:2602.21182}}.

\bibitem{borrill2026flp}
Paul Borrill.
\newblock Circumventing the {FLP} impossibility result with open atomic
  ethernet, 2026.
\newblock arXiv:2602.20444.
\newblock \href {https://arxiv.org/abs/2602.20444} {\path{arXiv:2602.20444}}.

\bibitem{borrill2026category}
Paul Borrill.
\newblock What distributed computing got wrong: The category mistake that
  turned design choices into laws of nature, 2026.
\newblock arXiv:2602.18723.
\newblock \href {https://arxiv.org/abs/2602.18723} {\path{arXiv:2602.18723}}.

\bibitem{bottcher2019}
Jan B{\"o}ttcher, Viktor Leis, Thomas Neumann, and Alfons Kemper.
\newblock Scalable garbage collection for in-memory {MVCC} systems.
\newblock {\em Proc. VLDB Endow.}, 13(2):128--141, 2019.
\newblock \href {https://doi.org/10.14778/3364324.3364328}
  {\path{doi:10.14778/3364324.3364328}}.

\bibitem{brewer2012}
Eric~A. Brewer.
\newblock {CAP} twelve years later: How the ``rules'' have changed.
\newblock {\em Computer}, 45(2):23--29, 2012.
\newblock \href {https://doi.org/10.1109/MC.2012.37}
  {\path{doi:10.1109/MC.2012.37}}.

\bibitem{burckhardt2014}
Sebastian Burckhardt.
\newblock Principles of eventual consistency.
\newblock {\em Foundations and Trends in Programming Languages},
  1(1--2):1--150, 2014.
\newblock \href {https://doi.org/10.1561/2500000011}
  {\path{doi:10.1561/2500000011}}.

\bibitem{burgess2014}
Mark Burgess and Jan~A. Bergstra.
\newblock {\em Promise Theory: Principles and Applications}.
\newblock 2014.

\bibitem{burgess2022}
Mark Burgess and Andras Gerlits.
\newblock Continuous integration of data histories into consistent namespaces.
\newblock 03 2022.
\newblock \href {https://doi.org/10.13140/RG.2.2.17170.53444}
  {\path{doi:10.13140/RG.2.2.17170.53444}}.

\bibitem{charronbost1991}
Bernadette Charron-Bost.
\newblock Concerning the size of logical clocks in distributed systems.
\newblock {\em Information Processing Letters}, 39(1):11--16, 1991.
\newblock \href {https://doi.org/10.1016/0020-0190(91)90055-M}
  {\path{doi:10.1016/0020-0190(91)90055-M}}.

\bibitem{dolev1987}
Danny Dolev, Cynthia Dwork, and Larry Stockmeyer.
\newblock On the minimal synchronism needed for distributed consensus.
\newblock {\em Journal of the ACM}, 34(1):77--97, 1987.
\newblock \href {https://doi.org/10.1145/7531.7533}
  {\path{doi:10.1145/7531.7533}}.

\bibitem{fischer1985}
Michael~J. Fischer, Nancy~A. Lynch, and Michael~S. Paterson.
\newblock Impossibility of distributed consensus with one faulty process.
\newblock {\em Journal of the ACM}, 32(2):374--382, 1985.
\newblock \href {https://doi.org/10.1145/3149.214121}
  {\path{doi:10.1145/3149.214121}}.

\bibitem{gibbonskorach1997}
Phillip~B. Gibbons and Ephraim Korach.
\newblock Testing shared memories.
\newblock {\em SIAM Journal on Computing}, 26(4):1208--1244, 1997.
\newblock \href {https://doi.org/10.1137/S0097539794279614}
  {\path{doi:10.1137/S0097539794279614}}.

\bibitem{gilbert2014}
Seth Gilbert and Wojciech Golab.
\newblock Making sense of relativistic distributed systems.
\newblock In {\em 28th International Symposium on Distributed Computing
  (DISC)}, volume 8784 of {\em LNCS}, pages 361--375, 2014.
\newblock \href {https://doi.org/10.1007/978-3-662-45174-8_25}
  {\path{doi:10.1007/978-3-662-45174-8_25}}.

\bibitem{gilbert2002}
Seth Gilbert and Nancy Lynch.
\newblock Brewer's conjecture and the feasibility of consistent, available,
  partition-tolerant web services.
\newblock {\em SIGACT News}, 33(2):51--59, 2002.
\newblock \href {https://doi.org/10.1145/564585.564601}
  {\path{doi:10.1145/564585.564601}}.

\bibitem{golab2018}
Wojciech Golab.
\newblock Proving {PACELC}.
\newblock {\em ACM SIGACT News}, 49(1):73--81, 2018.
\newblock \href {https://doi.org/10.1145/3197406.3197420}
  {\path{doi:10.1145/3197406.3197420}}.

\bibitem{helland2007}
Pat Helland.
\newblock Life beyond distributed transactions: An apostate's opinion.
\newblock In {\em Proc.\ 3rd Biennial Conf.\ on Innovative Data Systems
  Research (CIDR)}, pages 132--141, 2007.

\bibitem{hellerstein2010}
Joseph~M. Hellerstein.
\newblock The declarative imperative: Experiences and conjectures in
  distributed logic.
\newblock {\em ACM SIGMOD Record}, 39(1):5--19, 2010.
\newblock \href {https://doi.org/10.1145/1860702.1860704}
  {\path{doi:10.1145/1860702.1860704}}.

\bibitem{katsarakis2025}
Antonios Katsarakis, Vasilis Gavrielatos, Emmanouil Giortamis, Pramod Bhatotia,
  Aleksandar Dragojevi\'{c}, Boris Grot, Vijay Nagarajan, and Panagiota
  Fatourou.
\newblock The {LAW} theorem: Local reads and linearizable asynchronous
  replication.
\newblock {\em Proceedings of the VLDB Endowment}, 18(9):2831--2845, 2025.
\newblock \href {https://doi.org/10.14778/3746405.3746411}
  {\path{doi:10.14778/3746405.3746411}}.

\bibitem{kleppmann2015}
Martin Kleppmann.
\newblock A critique of the {CAP} theorem, 2015.
\newblock arXiv:1509.05393.
\newblock \href {https://arxiv.org/abs/1509.05393} {\path{arXiv:1509.05393}}.

\bibitem{lamport1978}
Leslie Lamport.
\newblock Time, clocks, and the ordering of events in a distributed system.
\newblock {\em Communications of the ACM}, 21(7):558--565, 1978.
\newblock \href {https://doi.org/10.1145/359545.359563}
  {\path{doi:10.1145/359545.359563}}.

\bibitem{lee2021cal}
Edward~A. Lee, Soroush Bateni, Shaokai Lin, Marten Lohstroh, and Christian
  Menard.
\newblock Quantifying and generalizing the {CAP} theorem, 2021.
\newblock arXiv:2109.07771.
\newblock \href {https://arxiv.org/abs/2109.07771} {\path{arXiv:2109.07771}}.

\bibitem{lee2016}
Juchang Lee, Hyungyu Shin, Chang~Gyoo Park, Seongyun Ko, Jaeyun Noh, Yongjae
  Chuh, Wolfgang Stephan, and Wook-Shin Han.
\newblock Hybrid garbage collection for multi-version concurrency control in
  {SAP HANA}.
\newblock In {\em Proceedings of the 2016 International Conference on
  Management of Data (SIGMOD)}, pages 1307--1318. ACM, 2016.
\newblock \href {https://doi.org/10.1145/2882903.2903734}
  {\path{doi:10.1145/2882903.2903734}}.

\bibitem{lloyd2011}
Wyatt Lloyd, Michael~J. Freedman, Michael Kaminsky, and David~G. Andersen.
\newblock Don't settle for eventual: Scalable causal consistency for wide-area
  storage with {COPS}.
\newblock In {\em 23rd ACM Symposium on Operating Systems Principles (SOSP)},
  pages 401--416, 2011.
\newblock \href {https://doi.org/10.1145/2043556.2043593}
  {\path{doi:10.1145/2043556.2043593}}.

\bibitem{mahajan2011}
Prince Mahajan, Lorenzo Alvisi, and Mike Dahlin.
\newblock Consistency, availability, and convergence.
\newblock Technical Report TR-11-22, Department of Computer Science, University
  of Texas at Austin, 2011.

\bibitem{schwarz1994}
Reinhard Schwarz and Friedemann Mattern.
\newblock Detecting causal relationships in distributed computations: In search
  of the holy grail.
\newblock {\em Distributed Computing}, 7(3):149--174, 1994.
\newblock \href {https://doi.org/10.1007/BF02277859}
  {\path{doi:10.1007/BF02277859}}.

\bibitem{viotti2016}
Paolo Viotti and Marko Vukoli\'{c}.
\newblock Consistency in non-transactional distributed storage systems.
\newblock {\em ACM Computing Surveys}, 49(1):19:1--19:34, 2016.
\newblock \href {https://doi.org/10.1145/2926965} {\path{doi:10.1145/2926965}}.

\bibitem{yu2002}
Haifeng Yu and Amin Vahdat.
\newblock Design and evaluation of a conit-based continuous consistency model
  for replicated services.
\newblock {\em ACM Transactions on Computer Systems}, 20(3):239--282, 2002.
\newblock \href {https://doi.org/10.1145/566340.566342}
  {\path{doi:10.1145/566340.566342}}.

\end{thebibliography}

    \appendix

    \section{Completeness over Burckhardt's Standard Safety Axioms}
    \label{sec:appendixA}

    We prove the completeness claim that \cref{sec:boundary} and \cref{sec:glossary} invoke:
    over the \emph{standard-safety fragment} of Burckhardt's abstract-execution
    framework~\cite{burckhardt2014}, every consistency predicate maps to an LCC configuration and
    every configuration to a predicate, the two coinciding exactly on that fragment
    (\cref{thm:completeness}); off it, neither contains the other
    (\cref{cor:expressiveness-full}).

    \subsection{Burckhardt's Framework (Summary)}

    An abstract execution is a tuple $A = (E, \mathit{op}, \mathit{rval}, \mathit{rb},
    \mathit{ss}, \mathit{vis}, \mathit{ar})$: $E$ is a finite set of events; $\mathit{op}: E \to
    \mathrm{Op}$ labels each with its operation; $\mathit{rval}: E \to \mathrm{Val}$ assigns a
    return value; $\mathit{rb} \subseteq E \times E$ is the \emph{returns-before} relation
    ($e_1\,\mathit{rb}\,e_2$ if $e_1$'s response precedes $e_2$'s invocation in real time, an
    irreflexive partial order); $\mathit{ss}$ the \emph{same-session} equivalence; $\mathit{vis}$
    the \emph{visibility} relation ($e_1\,\mathit{vis}\,e_2$: $e_2$ can see $e_1$); and
    $\mathit{ar}$ the \emph{arbitration} relation, a total order resolving conflicts. Session
    order is $\mathit{so} \stackrel{\mathrm{def}}{=} \mathit{rb} \cap \mathit{ss}$, and
    happens-before $\mathit{hb} \stackrel{\mathrm{def}}{=} (\mathit{so} \cup \mathit{vis})^+$.
    Consistency models are conjunctions of axioms over these relations---causal visibility
    ($\mathit{hb} \subseteq \mathit{vis}$), causal arbitration ($\mathit{hb} \subseteq
    \mathit{ar}$), single order ($\mathit{vis} = \mathit{ar}$ modulo incomplete operations),
    real-time ($\mathit{rb} \subseteq \mathit{ar}$), and a return-value axiom.

    \subsection{The Standard-Axiom Fragment}
    \label{sec:std-fragment}

    A \emph{standard predicate} is a conjunction of axiom atoms, at most one from each group,
    drawn from Burckhardt's Figure~5.1:
    \begin{itemize}
        \item \textbf{Visibility} (one of): $\textsc{vis-none}$ ($\mathit{vis}$ unconstrained);
        $\textsc{vis-object}$ ($\mathit{vis}$ contains every same-object causal ancestor);
        $\textsc{vis-session}$ ($\mathit{so} \subseteq \mathit{vis}$); $\textsc{vis-causal}$
        ($\mathit{hb} \subseteq \mathit{vis}$).
        \item \textbf{Arbitration} (one of): $\textsc{ar-none}$; $\textsc{ar-object}$
        ($\mathit{ar}$ total within each object); $\textsc{ar-all}$ ($\mathit{ar}$ a single total
        order).
        \item \textbf{Real-time} (optional, \emph{excluded from the completeness claim}):
        $\textsc{rt}$ ($\mathit{rb} \subseteq \mathit{ar}$)---not a single configuration; see the
        real-time paragraph below.
        \item \textbf{Single-order} (optional): $\textsc{single}$ ($\mathit{vis} = \mathit{ar}$ on
        complete operations).
        \item \textbf{Return value} (one of): $\textsc{rval}_{f}$ for $f \in \{\mathrm{latest},
        \mathrm{any\text{-}conc}, \mathrm{computed}, \mathrm{multi}, \mathrm{any}\}$.
    \end{itemize}
    The arbitration group is the \emph{scope} of conflict resolution and maps to $O$. Causal
    arbitration $\mathit{hb} \subseteq \mathit{ar}$ is \emph{not} a scope atom: it constrains
    whether the resolved order \emph{refines} causality, which LCC carries as the
    causal-arbitration regime on $\prec$ (\cref{ass:causal-arb}), orthogonal to the $O$-scope and
    covered both with it and without it (last-writer-wins) by \cref{thm:dichotomy}. Write
    $\mathcal{F}_{\mathrm{std}}$ for this fragment; every named model in the Viotti--Vukoli\'{c}
    survey~\cite{viotti2016} is specified by a member. Completeness is claimed over the
    \emph{safety} predicates---those without $\textsc{rt}$, which is handled separately below as
    it is not a single configuration. Every visibility atom constrains $\mathit{vis}$ at one of
    the four granularities $\{\mathrm{none}, \text{object}, \text{session}, \text{causal}\}$---the
    four values of $C$---and every arbitration atom partitions $E$ into one of
    $\{\text{trivial}, \text{object}, \text{all}\}$. No standard axiom references a finer
    visibility relation; that restriction is what makes a finite $C$ and the three $O$-scopes
    sufficient. Burckhardt's $\mathit{ar}$ is a single global order, subsumed by the maximal
    $O(\mathrm{all})$; session enters through visibility ($\mathit{so} \subseteq \mathit{vis}$,
    i.e.\ $C(\mathrm{session})$), with no corresponding arbitration atom.

    \begin{remark}[Scope of the claim]\label{rem:completeness-scope}
        Completeness is over the safety predicates of $\mathcal{F}_{\mathrm{std}}$---the
        discipline in which named consistency models are specified---where every visibility
        constraint is one of the four $C$-scopes. A predicate that constrains $\mathit{vis}$ at a
        finer granularity falls outside the fragment, and outside the finite $(C, O, R)$ lattice
        this paper characterises.
    \end{remark}

    \subsection{The Mapping}

    \begin{definition}[Execution correspondence]\label{def:theta}
        An LCC execution $\chi$ is a causal DAG $G = (M, E)$ with a resolved order $\prec$ (a
        linear extension of $G$) and an assignment of each message to a session. It induces the
        abstract execution $\Theta(\chi) = (E_{\mathrm{B}}, \mathit{op}, \mathit{rval},
        \mathit{rb}, \mathit{ss}, \mathit{vis}, \mathit{ar})$ on $E_{\mathrm{B}} = M$ by:
        $\mathit{ar} = {\prec}$; $e_1\,\mathit{vis}\,e_2$ iff $e_1$ is in the seen-set of $e_2$'s
        issuing observer at the instant $e_2$ is created; $\mathit{ss}$ the same-session relation;
        $\mathit{rb}$ the real-time response-before-invocation order; and $\mathit{rval}$ the
        value $F$ returns. Conversely every abstract execution $A$ over $E$ equals $\Theta(\chi)$
        for some $\chi$ (put each event on its session timeline, take $\prec = \mathit{ar}$ and
        $\mathit{deps}$ read off $\mathit{vis}$); $\Theta$ is a bijection on executions over a
        fixed $E$. We write $A \models \sigma$ for ``$A = \Theta(\chi)$ with $\chi$ satisfying
        configuration $\sigma$,'' and $\llbracket \sigma \rrbracket = \{A : A \models \sigma\}$.
    \end{definition}

    \paragraph*{Mapping 1: $\mathit{vis} \to C(\mathit{deps})$.}
    \begin{center}
        \begin{tabular}{@{}ll@{}}
            \toprule
            Burckhardt $\mathit{vis}$ axiom                           & LCC $C(\mathit{deps})$ \\
            \midrule
            $\mathit{vis}$ unconstrained                              & $C(\mathrm{none})$     \\
            $\mathit{so} \subseteq \mathit{vis}$ (session visibility) & $C(\mathrm{session})$  \\
            $\mathit{hb} \subseteq \mathit{vis}$ (causal visibility)  & $C(\mathrm{explicit})$ \\
            $\mathit{vis}$ restricted to same object                  & $C(\mathrm{object})$   \\
            \bottomrule
        \end{tabular}
    \end{center}
    In Burckhardt, $e_1\ \mathit{vis}\ e_2$ means ``$e_2$ sees $e_1$.'' In LCC, $m_1 \in
    \mathit{deps}(m_2) \wedge \mathrm{seen}(n, m_2, t) \implies \mathrm{seen}(n, m_1, t)$:
    $\mathit{vis}$ constrains what must be observed before an observation is valid, and every
    $\mathit{deps}$ filter defines such a constraint.

    \begin{lemma}[Visibility collapse]\label{lem:vis-c}
        Under $\Theta$ (\cref{def:theta}), each visibility atom is equivalent to the matching
        closure scope: for every execution $\chi$ and $X \in \{\mathrm{none}, \mathrm{object},
        \mathrm{session}, \mathrm{causal}\}$,
        \[
            \Theta(\chi) \models \textsc{vis-}X \iff \chi \models C(\mathit{deps}_X),
        \]
        with $\mathit{deps}_{\mathrm{causal}} = \mathit{deps}_{\mathrm{explicit}}$. The one
        non-causal standard visibility pattern---prefix-closed visibility, in which each observer
        sees a $\prec$-prefix---is $C(\mathit{deps}) \wedge O(\mathrm{all})$, and conversely.
    \end{lemma}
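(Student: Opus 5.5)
The plan is to unfold both sides through $\Theta$ (\cref{def:theta}) and compare the two conditions atom by atom. The dictionary is fixed: $\mathit{vis}$ is the seen-set at creation time, and $\mathit{deps}$ is read off $\mathit{vis}$. Under this dictionary, ``$e_1\,\mathit{vis}\,e_2$'' and ``$e_1$ was seen by $e_2$'s creator when $e_2$ was created'' are the same fact. By Axiom~3 (\cref{ax:observation}), that fact is also the same as $(e_1,e_2)\in E$. So the proof will treat each visibility atom as a statement of the form ``every $X$-related ancestor of $e_2$ is in $\mathit{vis}^{-1}(e_2)$.'' It will treat each closure scope as ``every $\mathit{deps}_X$-parent of a seen message is seen.'' Showing that these coincide will go through the downward-closure reading of \cref{def:closure}: closing under immediate filtered parents, recursively, gives closure under their transitive composition.

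The four cases go in increasing difficulty. \textsc{vis-none}$\iff C(\mathrm{none})$ is vacuous on both sides. For \textsc{vis-session}, $\mathit{so}\subseteq\mathit{vis}$ says each earlier same-session event is visible to a later one. Going forward, $C(\mathit{deps}_{\mathrm{session}})$ holds at the creator's seen-set at creation time. Since a session is one observer timeline, Axioms~1 and~2 make earlier session events seen, which gives $\mathit{so}\subseteq\mathit{vis}$. Going backward, $\mathit{so}\subseteq\mathit{vis}$ places every same-session predecessor among the dependencies, so recursive closure is exactly $C(\mathit{deps}_{\mathrm{session}})$. \textsc{vis-object} is handled the same way with same-object causal ancestors. \textsc{vis-causal} is the substantive case. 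Here $\mathit{hb}=(\mathit{so}\cup\mathit{vis})^+$, and each generator of $\mathit{hb}$ is a DAG edge: $\mathit{vis}$-edges by Axiom~3, $\mathit{so}$-edges by creator visibility. So $\mathit{hb}=\mathit{deps}^*$, and $\mathit{hb}\subseteq\mathit{vis}$ becomes downward closure under $\mathit{deps}^*$, which is $C(\mathit{deps}_{\mathrm{explicit}})$.

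For the prefix-closed clause, one direction assumes $C(\mathit{deps})\wedge O(\mathrm{all})$. Then each seen-set is a $\prec$-prefix by the single-class prefix condition of \cref{def:fork}, so visibility is prefix-closed. For the converse, prefix-closed visibility is literally the $O(\mathrm{all})$ condition. Because $\prec$ is a linear extension of $G$ (\cref{def:theta}, causal arbitration), \cref{lem:o-entails-c} then supplies $\mathit{deps}_{\mathrm{explicit}}$-closure, so $C(\mathit{deps})$ holds for every filter. By \cref{cor:c-blind} the choice of $C$ is immaterial here.

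I expect the main obstacle to be the \textsc{vis-causal} case, specifically the identification $\mathit{hb}=\mathit{deps}^*$. The subtle inclusion is $\mathit{so}\subseteq E^*$. It needs that a session's successive events are issued by one observer, so that Axioms~2 and~1 keep each earlier event seen and Axiom~3 then adds the edge. If sessions could span observers, this would fail, so I would make that assumption explicit via the session assignment in \cref{def:theta}. I would also check that $\mathit{vis}$ being defined at creation time, rather than at a later read, matches the static-cut notion of closure. I would try to settle that point by appeal to Axiom~1 monotonicity.
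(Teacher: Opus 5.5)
There is a genuine gap. It sits in the step you flagged as the main obstacle, and your proposed repair is what breaks. Both your \textsc{vis-session} case and the $\mathit{so}\subseteq E^*$ half of $\mathit{hb}=\mathit{deps}^*$ rest on ``a session is one observer's timeline,'' and the paper explicitly rejects that. By \cref{ax:creator} and \cref{rem:seen-retained}, a reconnected client or a separate process is a distinct observer. The paragraph right after this lemma says a session outliving a reconnection spans observers, so $\mathit{so}\subseteq\mathit{vis}$ is ``the work of $C(\mathrm{session})$ (or any stronger closure): a non-trivial consistency level, not a free property of locality.'' \Cref{lem:conjunction} relies on exactly that point.

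Writing the assumption into \cref{def:theta} would restrict a lemma stated for every $\chi$. It also breaks the session case on its own terms. Your forward direction obtains $\mathit{so}\subseteq\mathit{vis}$ from Axioms~1--2 alone and never uses $C(\mathit{deps}_{\mathrm{session}})$. So the argument shows \textsc{vis-session} holds of every such execution, collapsing it to \textsc{vis-none}. The biconditional would then force $C(\mathit{deps}_{\mathrm{session}})$ to hold automatically too, and it does not. An observer that receives the session's later write $b$ but never the earlier $a$ (\cref{rem:no-delivery}) violates it, while the issuer's $a\,\mathit{vis}\,b$ still holds.

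Your backward step is where this slips through. ``$\mathit{so}\subseteq\mathit{vis}$ places every same-session predecessor among the dependencies'' establishes only $a\in\mathit{deps}_{\mathrm{session}}(b)$. That is a fact about the filter's content, not the obligation that every observer holding $b$ also holds $a$. For the same reason, Axiom~1 cannot reconcile creation-time $\mathit{vis}$ with closure: monotonicity preserves what the issuer already saw, but says nothing about later arrivals or about other observers.

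The paper does not try to rebuild $R_X$ from $E$ via the axioms. Its proof is one uniform dictionary step. It identifies $R_X$ (empty, same-object causal ancestors, $\mathit{so}$, $\mathit{hb}$) with the relation $\mathit{deps}_X$ picks out. It then reads $R_X\subseteq\mathit{vis}$ directly as ``every $R_X$-ancestor of an observed event is observed,'' which is downward closure of every seen-set under $\mathit{deps}_X$, and reads it back the same way. On that reading the four cases are a single argument. $\mathit{so}$ is covered by $\mathit{hb}$-closure because $\mathit{deps}_{\mathrm{explicit}}$ lies above $\mathit{deps}_{\mathrm{session}}$ in the filter order, so the closure coordinate supplies it, not locality. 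Drop the single-observer hypothesis and the $\mathit{hb}=\mathit{deps}^*$ detour in favor of that identification.

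Your prefix-closed clause is fine as it stands. In the converse you derive closure from the prefix condition via \cref{lem:o-entails-c}, using that $\prec$ is a linear extension of $G$. That is a slightly more explicit justification than the paper's, which takes the pattern to be a $\mathit{deps}$-closed $\prec$-prefix from the outset.
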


    \begin{proof}
    ($\Rightarrow$)
        Suppose $\Theta(\chi) \models \textsc{vis-}X$, i.e.\ $R_X \subseteq
        \mathit{vis}$, where $R_X$ is the corresponding ancestor relation (empty; same-object
        causal ancestors; $\mathit{so}$; $\mathit{hb}$). Under $\Theta$, $e_1\,\mathit{vis}\,e_2$
        holds iff $e_1$ is in $e_2$'s observer's seen-set when $e_2$ is created. So $R_X \subseteq
        \mathit{vis}$ asserts that every $R_X$-ancestor of an observed event is
        observed---precisely downward-closure of every seen-set under $\mathit{deps}_X$, the
        $C(\mathit{deps}_X)$ constraint. ($\Leftarrow$) $C(\mathit{deps}_X)$-closure of every
        seen-set yields $R_X \subseteq \mathit{vis}$ under the same identification. For
        prefix-closed visibility: under $O(\mathrm{all})$ every seen-set is a $\prec$-prefix
        (\cref{def:adm}) and $C(\mathit{deps})$ makes it $\mathit{deps}$-closed, so $\mathit{vis}$
        is exactly a $\mathit{deps}$-closed $\prec$-prefix; conversely such a $\mathit{vis}$ is an
        $O(\mathrm{all})$-admissible, $C$-closed cut.
    \end{proof}

    \emph{Detail.} Session order $\mathit{so} = \mathit{rb} \cap \mathit{ss}$ makes same-session
    operations always returns-before ordered, but a session is not in general a single observer:
    by Creator Visibility (\cref{ax:creator}) a separate connection, thread, or process is a
    distinct observer, so a session outliving a reconnection spans observers. Creator Visibility
    supplies an observer's \emph{own} writes only---single-observer read-your-writes---not a
    same-session predecessor produced by another of the session's observers. The guarantee
    $\mathit{so} \subseteq \mathit{vis}$ is therefore the work of $C(\mathrm{session})$ (or any
    stronger closure): a non-trivial consistency level, not a free property of locality
    (\cref{ax:creator}), met in the $C$ coordinate without $R(\delta)$. The cross-session
    real-time guarantee $\mathit{rb} \subseteq \mathit{ar}$ is not a single configuration but a
    composite of two message-passing systems, treated next.

    \paragraph*{Mapping 2: $\mathit{ar} \to O(\pi)$.}
    \begin{center}
        \begin{tabular}{@{}ll@{}}
            \toprule
            Burckhardt $\mathit{ar}$ axiom        & LCC $O(\pi)$               \\
            \midrule
            $\mathit{ar}$ absent (no arbitration) & $O(\mathrm{trivial})$      \\
            $\mathit{ar}$ per-object total order  & $O(\pi_{\mathrm{object}})$ \\
            $\mathit{ar}$ global total order      & $O(\mathrm{all})$          \\
            \bottomrule
        \end{tabular}
    \end{center}
    The translation is direct: $\mathit{ar}$ restricted to an equivalence class $k$ corresponds
    to $O(\pi)$ with $\pi$ mapping events to $k$. Burckhardt's $\mathit{ar}$ is a single global
    order; LCC's $O$ reads as a \emph{fork-entanglement} scope (\cref{sec:fork}), of which a
    single global order is the maximal value $O(\mathrm{all})$. The three granularities---none,
    per-object, global---are exactly $O(\mathrm{trivial})$, $O(\pi_{\mathrm{object}})$,
    $O(\mathrm{all})$; session is not among them, entering through visibility instead.

    \paragraph*{The real-time axiom $\mathit{rb} \subseteq \mathit{ar}$ is out of scope.}
    Burckhardt's $\mathit{rb}$ is a real-time order: $e_1\ \mathit{rb}\ e_2$ when $e_1$'s response
    precedes $e_2$'s invocation. Realising $\mathit{rb} \subseteq \mathit{ar}$ takes a second
    message-passing system: two executions with the same DAG but different real-time schedules of
    concurrent operations differ in $\mathit{rb}$, so no single $(C, O, R, F)$
    configuration---which reads only its own DAG---can enforce it. It is a \emph{composite}---the
    $O(\mathrm{all})$ order together with a separate coordination system that distributes the
    ordering and routes clients through it (a sequencer, or an accurate clock with
    commit-wait)---and a composite of more than one message-passing system is more than the
    single system LCC describes (\cref{sec:boundary}). The completeness claim is therefore over
    the standard \emph{safety} axioms; the real-time axiom is deferred.

    \paragraph*{Mapping 3: RVAL $\to F$.}
    \begin{center}
        \begin{tabular}{@{}ll@{}}
            \toprule
            Burckhardt RVAL                          & LCC $F$                              \\
            \midrule
            RVAL returns last write in $\mathit{ar}$ & $F_{\mathrm{latest}}$                \\
            RVAL returns any concurrent write        & $F_{\mathrm{any\text{-}concurrent}}$ \\
            RVAL returns deterministic merge         & $F_{\mathrm{computed}}$              \\
            RVAL returns set of concurrent values    & $F_{\mathrm{multi}}$                 \\
            RVAL unconstrained                       & $F_{\mathrm{anything}}$              \\
            \bottomrule
        \end{tabular}
    \end{center}
    RVAL and $F$ operate on the same inputs (visible operations and an ordering) and produce the
    same output; the mapping is direct.

    \subsection{Completeness Proof}

    Let $\varphi$ send each standard axiom atom to its parameter constraint:
    $\varphi(\textsc{vis-}X) = C(\mathit{deps}_X)$; $\varphi(\textsc{ar-}Y) = O(\pi_Y)$ (with
    $\textsc{ar-all} \mapsto O(\mathrm{all})$, $\textsc{ar-none} \mapsto O(\mathrm{trivial})$);
    $\varphi(\textsc{single}) = O(\mathrm{all})$ paired with the matching $C$; and
    $\varphi(\textsc{rval}_f) = F_f$. The real-time atom $\textsc{rt}$ is excluded, as above.

    \begin{lemma}[Per-axiom equivalence]\label{lem:per-axiom}
        For every standard safety axiom atom $\alpha$ and every execution $\chi$, $\Theta(\chi)
        \models \alpha \iff \chi \models \varphi(\alpha)$.
    \end{lemma}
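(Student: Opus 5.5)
The proof goes atom by atom, by cases on the five groups of the standard fragment with the real-time atom $\textsc{rt}$ excluded. Each case unfolds $\Theta$ (\cref{def:theta}) and checks that the Burckhardt relation and the LCC constraint name the same set of pairs. Three facts from \cref{def:theta} carry the argument: $\mathit{vis}$ is read off seen-sets at creation time, $\mathit{ar} = {\prec}$, and $\mathit{rval}$ is the value $F$ returns. Because $\Theta$ is a bijection over a fixed $E$, each case reduces to identifying the two relations under this dictionary. No semantic transfer is needed.

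\emph{Visibility atoms.} These are exactly \cref{lem:vis-c}, which already gives $\Theta(\chi) \models \textsc{vis-}X \iff \chi \models C(\mathit{deps}_X)$ for all four $X$. I will cite it directly.

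\emph{Arbitration atoms.} For $\textsc{ar-}Y$, I will show that $\mathit{ar}$ restricted to each $\pi_Y$-class is a total order if and only if $\chi$ carries a total order $\prec_i$ on each class of the kind \cref{def:fork} demands. Since $\mathit{ar} = {\prec}$, the forward direction is immediate. The delicate part is the visibility-respecting clause of \cref{def:fork}: LCC's $O(\pi)$ also requires seen-sets to be per-class prefixes, whereas $\textsc{ar-}Y$ alone only asks for totality. I will handle this with the paper's own reading of $O$ as a scope of fork entanglement (\cref{sec:fork}). The prefix condition is stated relative to the $C$-filtered sub-DAG, and it is evaluated at the canonical $\prec$ of \cref{def:adm}. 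Under $\Theta$, the prefix condition is then what Burckhardt's $\mathit{ar}$ expresses once $\mathit{vis}$ is read off seen-sets. For $\textsc{ar-none}$ both sides are vacuous.

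\emph{The remaining atoms.} For $\textsc{single}$ ($\mathit{vis} = \mathit{ar}$ on complete operations), I will reduce to the prefix-closed clause of \cref{lem:vis-c}. $\mathit{vis} = \mathit{ar}$ says that each event sees exactly its $\prec$-predecessors, so every seen-set is a $\prec$-prefix, which \cref{lem:vis-c} identifies with $C(\mathit{deps}) \wedge O(\mathrm{all})$ at the matching $C$. For $\textsc{rval}_f$, the equivalence is definitional, by Mapping~3 and $\mathit{rval} = F$. The one check needed here is that the LCC state $\sigma(n,t^*)$ determines the same inputs RVAL reads, namely the visible events and $\mathit{ar}$. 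This holds because, by \cref{rem:no-R}, $R$ only fixes $t^*$.

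\emph{Main obstacle.} I expect the hardest step to be the arbitration case, specifically the mismatch between Burckhardt's bare totality of $\mathit{ar}$ and LCC's extra requirement that visibility respect $\prec_i$. If the reading described above does not close the gap, my fallback is to strengthen the identification in $\varphi$: pair $\textsc{ar-}Y$ with the prefix-respecting visibility that Burckhardt's framework imposes implicitly through $\mathit{vis} \subseteq \mathit{ar}$-consistency of return values. I would then verify that this pairing is still a bijection on the fragment.
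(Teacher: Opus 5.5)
\textbf{Verdict: genuine gap in the arbitration case.} Your case split is the paper's own. Visibility atoms go by \cref{lem:vis-c}, $\textsc{single}$ by its prefix case, and the arbitration and return-value atoms by the identifications $\mathit{ar} = {\prec}$ and $\mathit{rval} = F$ under $\Theta$. The gap is in the arbitration case, exactly where you suspected, and your proposed bridge does not hold.

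The claim that ``under $\Theta$, the prefix condition is then what Burckhardt's $\mathit{ar}$ expresses once $\mathit{vis}$ is read off seen-sets'' is false. $\textsc{ar-}Y$ constrains $\mathit{ar}$ alone. Since $\Theta$ sets $\mathit{ar} = {\prec}$, which is a linear extension of $G$ and hence total on all of $M$, every $\Theta(\chi)$ satisfies $\textsc{ar-object}$ and $\textsc{ar-all}$ outright. The visibility-respecting clause of \cref{def:fork}, by contrast, can fail:
\begin{itemize}
\item Take two concurrent writes $a, b$ to one object by different observers, each of which holds its own write (\cref{ax:creator}) for some interval before the other's arrives.
\item Whichever write $\prec$ places second is then held by its creator without its $\prec$-predecessor.
\item So $\chi \not\models O(\pi_{\mathrm{object}})$ and $\chi \not\models O(\mathrm{all})$, for every choice of $\prec$, although $\Theta(\chi)$ satisfies the corresponding atom.
\end{itemize}
The direction you call immediate, $\textsc{ar-}Y \Rightarrow O(\pi_Y)$, is therefore precisely the one the prefix clause blocks. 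Invoking the canonical $\prec$ of \cref{def:adm} only fixes \emph{which} order the prefixes are measured against, not that seen-sets are prefixes of it.

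Your fallback does not repair this either:
\begin{itemize}
\item It changes $\varphi$, so it proves a different lemma.
\item Burckhardt's framework imposes no prefix discipline on $\mathit{vis}$ except through specific atoms such as $\textsc{single}$. Even $\mathit{vis} \subseteq \mathit{ar}$, were it imposed, says only that what is visible is $\mathit{ar}$-earlier, not that what is $\mathit{ar}$-earlier is visible, which is what a prefix needs.
\item Attaching a visibility constraint to $\varphi(\textsc{ar-}Y)$ would break the coordinatewise argument of \cref{lem:conjunction}, which needs arbitration atoms to touch only the resolved order.
\item $\varphi$ is not meant to be a bijection in the first place; \cref{thm:completeness} records that it is not injective.
\end{itemize}

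\textbf{How the paper handles this case.} The paper's proof never engages the clause you worried about. It declares the per-class total order on $\mathit{ar}$ to be ``the identical object'' $O(\pi_Y)$. In effect it reads $\varphi(\textsc{ar-}Y)$ as the ordering component of $O(\pi_Y)$ only, with prefix-shaped visibility entering solely through $\textsc{single}$ and the prefix case of \cref{lem:vis-c}. This is the same reading \cref{lem:conjunction} relies on when it says arbitration constrains only the resolved order.

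To make your proof go through, adopt that reading explicitly; then the arbitration case is immediate in both directions, as the paper says. If you instead insist on the full \cref{def:fork}, the example above shows the equivalence is false for $\textsc{ar-object}$ and $\textsc{ar-all}$, so no argument closes the case. What is missing is a correction to the statement or to $\varphi$, not a cleverer proof.

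\textbf{A smaller point.} In the return-value case, \cref{rem:no-R} does not supply the check you assign it. What you need is that the read's $\mathit{vis}$-set under $\Theta$ is $V(n,t^{*})$. That is a convention about when $\Theta$ places the read event, not a consequence of that remark.
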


    \begin{proof}
        The arbitration and return-value atoms are immediate: under $\Theta$, $\mathit{ar}$
        \emph{is} $\prec$ and $\mathit{rval}$ \emph{is} $F$'s output, so a constraint on
        $\mathit{ar}$ (a per-class total order) or on $\mathit{rval}$ transfers verbatim to the
        identical object $O(\pi_Y)$ or $F_f$. The visibility atoms are \cref{lem:vis-c}.
        $\textsc{single}$ ($\mathit{vis} = \mathit{ar}$ on complete operations) holds under
        $\Theta$ iff every observed prefix equals an initial segment of $\prec$, which is exactly
        $O(\mathrm{all})$ admissibility together with the closure that pins $\mathit{vis}$ to that
        prefix (\cref{lem:vis-c}, prefix case).
    \end{proof}

    \begin{lemma}[Conjunction]\label{lem:conjunction}
        For standard safety atoms $\alpha_1, \dots, \alpha_k$ (at most one per group),
        $\Theta(\chi) \models \bigwedge_i \alpha_i \iff \chi \models \bigwedge_i
        \varphi(\alpha_i)$, and $\bigwedge_i \varphi(\alpha_i)$ is a single configuration
        $\sigma$.
    \end{lemma}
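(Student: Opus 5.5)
The plan is to reduce the conjunction to the per-atom case, then check that the images assemble into one $(C,O,F)$ triple. For the first claim I would use that satisfaction of a conjunction is pointwise on both sides. $\Theta(\chi)\models\bigwedge_i\alpha_i$ holds iff $\Theta(\chi)\models\alpha_i$ for every $i$. By \cref{lem:per-axiom} that is iff $\chi\models\varphi(\alpha_i)$ for every $i$, which is $\chi\models\bigwedge_i\varphi(\alpha_i)$. Two facts keep this a pure rewriting. The bijection $\Theta$ of \cref{def:theta} is fixed once $\chi$ is fixed. And each per-axiom equivalence is stated for every execution, so no interaction between atoms enters at this stage.

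For the second claim I would split by axiom group, using the hypothesis of at most one atom per group. A visibility atom contributes exactly one $C(\mathit{deps}_X)$. An arbitration atom contributes exactly one $O(\pi_Y)$. A return-value atom contributes exactly one $F_f$. Any group left absent is filled by its default: $C(\mathrm{none})$, $O(\mathrm{trivial})$, or $F_{\mathrm{anything}}$. These defaults are the images of the unconstrained atoms $\textsc{vis-none}$, $\textsc{ar-none}$ and $\textsc{rval}_{\mathrm{any}}$, so adding them changes nothing semantically. This gives a triple $\sigma=(C,O,F)$. Since $\textsc{rt}$ is excluded by hypothesis, the only remaining issue is $\textsc{single}$.

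The main obstacle is $\textsc{single}$, because its image $O(\mathrm{all})$ overlaps the arbitration coordinate. I would argue that $O(\mathrm{all})$ is the top of the $O$-chain, so its conjunction with any $O(\pi_Y)$ is again $O(\mathrm{all})$. The justification is (O1) from the proof of \cref{thm:readability}: $\mathrm{Adm}_O(O(\mathrm{all}),G)\subseteq\mathrm{Adm}_O(O_Y,G)$, so intersecting the two leaves the $O(\mathrm{all})$ constraint. The phrase ``paired with the matching $C$'' in $\varphi(\textsc{single})$ would be read as the $C$ already contributed by the visibility atom, or $C(\mathrm{none})$ if there is none. The prefix case of \cref{lem:vis-c} then shows that this $C$ together with $O(\mathrm{all})$ is exactly the combination that pins $\mathit{vis}$ to a prefix. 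So $\textsc{single}$ only overwrites the $O$ coordinate to its maximum and introduces no second $C$.

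Finally, I would note that $C$, $O$ and $F$ constrain disjoint objects: seen-sets, per-class prefixes of $\prec$, and returned values. Their conjunction is therefore precisely the configuration $\sigma=(C,O,F)$, with no residual cross-term. \cref{thm:orthogonality} supports that $F$ is independent of the $(C,O)$ skeleton. Any $\kappa$-coupling between $C$ and $O$ is a consequence inside $\sigma$, not an extra constraint, so it does not break the claim that the conjunction is a single configuration.
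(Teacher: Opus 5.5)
Your proof is correct, but it is organised differently from the paper's. The paper gets both claims from one observation. Each group constrains a distinct primitive (seen-sets, resolved order, selector), so the $\varphi$-images occupy distinct coordinates of $\sigma$ and \cref{lem:per-axiom} applies coordinatewise. Its one explicit check is the derived relation $\mathit{so} = \mathit{rb} \cap \mathit{ss}$: it confirms that $\textsc{vis-session}$ is met in the $C$ coordinate and does not bring the excluded real-time relation into conflict with the $O$ or $F$ atoms.

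You instead separate the two claims. The equivalence is pure logic: satisfaction of a conjunction is pointwise, and \cref{lem:per-axiom} holds for every $\chi$. So that claim needs no disjointness or derived-relation check, and the $\mathit{so}$ concern is already absorbed into the per-atom equivalence for $\textsc{vis-session}$.

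For the single-configuration claim, you catch something the paper's ``distinct coordinates'' sentence passes over: $\varphi(\textsc{single}) = O(\mathrm{all})$ lands in the same coordinate as any arbitration atom. Your appeal to (O1) is exactly what makes $\textsc{single} \wedge \textsc{ar-}Y$ collapse to one $O$ value: $O(\mathrm{all})$ is the top of the chain, so intersecting admissible-cut sets leaves the $O(\mathrm{all})$ set. On that point your version is the more careful one. The paper's version buys an explicit statement that the session atom does not reintroduce $\mathit{rb}$.

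Two small precision points. First, safety atoms leave $R$ unconstrained (\cref{rem:no-R}), so ``a single configuration'' should be read as a single $(C,O,F)$ with $R$ free; the paper has the same looseness. Second, your reading of ``paired with the matching $C$'' as the visibility atom's $C$ is an interpretive choice that does real work. If $\textsc{single}$ contributed its own $C$, its join with the visibility atom's $C$ could be $\mathit{deps}_{\mathrm{object}} \sqcup \mathit{deps}_{\mathrm{session}}$, which is not one of the four $C$ values.
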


    \begin{proof}
        Each group constrains a \emph{distinct} primitive of $\chi$: visibility the seen-sets
        ($C$), arbitration the resolved order ($O$), return value the selector ($F$). The
        $\varphi$-images therefore land in distinct coordinates of $\sigma = F(C, O, R)$ and
        never conflict, so their conjunction is one configuration and \cref{lem:per-axiom}
        applies coordinatewise. The one derived relation to check is session order $\mathit{so} =
        \mathit{rb} \cap \mathit{ss}$, which enters only through $\textsc{vis-session}$
        ($\mathit{so} \subseteq \mathit{vis}$): it is met by $C(\mathrm{session})$ in the $C$
        coordinate (\cref{ax:creator}: session visibility is a consistency level, not free), so it
        never conflicts with the $O$ or $F$ atoms. No other pair of standard safety atoms shares a
        derived relation.
    \end{proof}

    \begin{theorem}[Completeness over the standard safety axioms]\label{thm:completeness}
        For every \emph{safety} predicate $P \in \mathcal{F}_{\mathrm{std}}$
        (\cref{sec:std-fragment}, one without the real-time atom $\textsc{rt}$) the configuration
        $\Phi(P) = \bigwedge_{\alpha \in P} \varphi(\alpha)$ satisfies
        \[
            \llbracket P \rrbracket = \llbracket \Phi(P) \rrbracket
            \qquad\text{(exact equivalence over the safety fragment).}
        \]
        The map $P \mapsto \Phi(P)$ is total on the safety predicates and not injective (distinct
        predicates may share a configuration); it is not surjective onto the configuration space
        (\cref{cor:expressiveness-full}).
    \end{theorem}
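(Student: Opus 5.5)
The plan is a direct corollary argument that assembles \cref{lem:per-axiom} and \cref{lem:conjunction}. First, I fix a safety predicate $P \in \mathcal{F}_{\mathrm{std}}$ and write it as a conjunction $\alpha_1 \wedge \cdots \wedge \alpha_k$ of atoms, at most one per group, with no $\textsc{rt}$. Totality of $\Phi$ then follows at once. Every safety atom lies in the domain of $\varphi$. By \cref{lem:conjunction}, the images occupy distinct coordinates ($C$ for visibility, $O$ for arbitration, $F$ for return value), so $\Phi(P)$ is a well-defined single configuration. Absent groups are filled by their bottom values: $C(\mathrm{none})$, $O(\mathrm{trivial})$, and $F_{\mathrm{anything}}$. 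These match the unconstrained atoms $\textsc{vis-none}$, $\textsc{ar-none}$, and $\textsc{rval}_{\mathrm{any}}$, so omitting a group and choosing its trivial atom give the same semantics.

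For the equality $\llbracket P \rrbracket = \llbracket \Phi(P) \rrbracket$, I argue by double inclusion through $\Theta$. Let $A \in \llbracket P \rrbracket$. By \cref{def:theta}, $\Theta$ is a bijection on executions over a fixed event set $E$, so $A = \Theta(\chi)$ for a unique $\chi$. \Cref{lem:conjunction} gives $\Theta(\chi) \models \bigwedge_i \alpha_i \iff \chi \models \Phi(P)$, hence $A \models \Phi(P)$ in the sense of \cref{def:theta}. The reverse inclusion is the same chain read backwards. Surjectivity of $\Theta$ is the point to check here: every abstract execution must be reachable, otherwise $\llbracket P\rrbracket$ could contain executions with no LCC preimage. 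The case $\textsc{single}$ is the one that needs care. Its image, $O(\mathrm{all})$ together with the matching $C$, could collide with the arbitration coordinate. I would resolve this by noting that $\textsc{single}$ forces $O(\mathrm{all})$. This is compatible with any $\textsc{ar}$ atom, since the conjunction of $O$-constraints is their strongest, and \cref{lem:vis-c} (prefix case) shows that the combined predicate is exactly $C \wedge O(\mathrm{all})$. Taking the join on the $O$ coordinate therefore keeps $\Phi(P)$ a single configuration.

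Non-injectivity needs only one witness. \Cref{cor:c-blind} suggests one, but it concerns $\preceq$-equivalence rather than $\Phi$. A cleaner witness is a pair of predicates that differ only by $\textsc{single}$, namely $\textsc{vis-causal} \wedge \textsc{ar-all} \wedge \textsc{rval}_f$ with and without $\textsc{single}$. Both are sent to $(C(\mathrm{explicit}), O(\mathrm{all}), F_f)$. Non-surjectivity I defer to \cref{cor:expressiveness-full}, as the statement does. For a concrete pointer, it suffices that $R$ is not in the range of $\Phi$: configurations differing only in $R$, such as \#38 versus \#40, have a single safety preimage at most up to $R$, so configurations like $R(\mathrm{absent})$ are not separated by any predicate.

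I expect the main obstacle to be the $\textsc{single}$ and derived-relation bookkeeping inside \cref{lem:conjunction}, specifically ensuring that $\varphi(\textsc{single})$ and $\varphi(\textsc{ar-}Y)$ never yield two inconsistent $O$ values. My first attempt is to define $\Phi$ coordinatewise by the join, with $O(\mathrm{all})$ absorbing the others. I would then verify, using \cref{lem:vis-c}, that the semantics of $\textsc{single} \wedge \textsc{ar-}Y$ equals that of $O(\mathrm{all})$ with the corresponding $C$.
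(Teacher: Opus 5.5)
Your proposal is correct and follows the paper's own route. The equality $\llbracket P \rrbracket = \llbracket \Phi(P) \rrbracket$ comes from \cref{lem:conjunction} together with the bijectivity of $\Theta$ (you rightly note that only surjectivity is used), totality is by construction, non-injectivity comes from two atoms constraining the same scope, and non-surjectivity from the $R$ coordinate that no safety atom fixes. Your extra bookkeeping---absorbing $\varphi(\textsc{single})$ and $\varphi(\textsc{ar-}Y)$ into $O(\mathrm{all})$ via the chain of $O$-scopes, and the explicit witness with and without $\textsc{single}$---makes concrete what the paper leaves implicit in its ``distinct coordinates'' claim in \cref{lem:conjunction} and its ``distinct atoms can constrain the same scope.''
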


    \begin{proof}
        By \cref{lem:conjunction}, $\Theta(\chi) \models P \iff \chi \models \Phi(P)$ for every
        $\chi$; since $\Theta$ is a bijection on executions (\cref{def:theta}), the satisfying
        sets coincide, $\llbracket P \rrbracket = \llbracket \Phi(P) \rrbracket$. Totality is by
        construction; the map is non-injective because distinct atoms can constrain the same
        scope, and non-surjective because configurations such as $R(\delta)$ and
        $R(\mathrm{absent})$ have no $\mathcal{F}_{\mathrm{std}}$ preimage
        (\cref{cor:expressiveness-full}). The equivalence is exact \emph{within}
        $\mathcal{F}_{\mathrm{std}}$: a $P$ using arbitrary visibility lies outside it
        (\cref{rem:completeness-scope}).
    \end{proof}

    \begin{corollary}[Coincidence on $\mathcal{F}_{\mathrm{std}}$, incomparability off it]\label{cor:expressiveness-full}
        Over $\mathcal{F}_{\mathrm{std}}$ the map $\Phi$ is an exact equivalence, but not a
        containment in either direction:
        \begin{enumerate}
            \item \emph{(LCC, not in $\mathcal{F}_{\mathrm{std}}$)} $R(\mathrm{absent})$: no
            delivery guarantee. Burckhardt assumes events produce return values, implying
            liveness; no $\mathcal{F}_{\mathrm{std}}$ predicate maps to $R(\mathrm{absent})$.
            \item \emph{(LCC, not in $\mathcal{F}_{\mathrm{std}}$)} The waiting bound $R(\delta)$:
            the standard safety axioms carry no liveness/timing parameter, whereas $R(\delta)$
            parameterises the waiting bound continuously.
            \item \emph{(Burckhardt, not in the finite lattice)} Predicates over arbitrary
            visibility relations lie outside $\mathcal{F}_{\mathrm{std}}$ and have no finite-lattice
            configuration.
        \end{enumerate}
        The frameworks coincide on $\mathcal{F}_{\mathrm{std}}$ and neither contains the other.
    \end{corollary}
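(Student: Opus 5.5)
The plan is to prove the three clauses of \cref{cor:expressiveness-full} one at a time, each by exhibiting a witness. Nothing new needs to be built: \cref{thm:completeness} already supplies the exact equivalence on the safety fragment. What remains is to show that $\Phi$ is not surjective onto the LCC configuration space, and that $\mathcal{F}_{\mathrm{std}}$ is a proper slice of Burckhardt's full predicate language. Together these give incomparability off the fragment.

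\textbf{Clauses (1) and (2).} I would argue by comparing what a configuration specifies with what $\Theta$ preserves.

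First, by \cref{rem:no-R}, admissibility and readability are blind to $R$. By \cref{def:theta}, $\Theta(\chi)$ records only $\mathit{op}$, $\mathit{rval}$, $\mathit{rb}$, $\mathit{ss}$, $\mathit{vis}$ and $\mathit{ar}$. So any two configurations that differ only in $R$ are constrained identically by every safety atom. Concretely, \cref{lem:per-axiom} sends each atom to a $C$, $O$ or $F$ coordinate, never an $R$ one.

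Hence each $\Phi(P)$ fixes $(C,O,F)$ and leaves $R$ unconstrained, that is, $R$ at whatever default the returned-value presupposition forces. This is where I need an argument. I would show that every abstract execution has a defined $\mathit{rval}$ for every event, so every event answers. Therefore $R(\mathrm{absent})$, under which nothing is guaranteed to arrive, is not the image of any $P$: the configuration admits executions in which some query never answers, and those executions are not of the form $\Theta(\chi)$.

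For $R(\delta)$, take $\delta_1\neq\delta_2$. The configurations $(C,O,R(\delta_1),F)$ and $(C,O,R(\delta_2),F)$ are distinct, yet they map to the same constraints under $\varphi$. So at most one of the continuum of $\delta$-configurations can be $\Phi$ of anything, and in fact none is, because $\mathcal{F}_{\mathrm{std}}$ has no timing atom.

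\textbf{Clause (3).} Here I would exhibit a single Burckhardt predicate that constrains $\mathit{vis}$ at a granularity not among the four $C$-scopes. One candidate is ``$\mathit{vis}$ contains every causal ancestor sharing a designated tag,'' where the tag partition is independent of object and session. Another, simpler in spirit, is ``every event sees its $\mathit{hb}$-ancestors at distance at most $2$.'' I would then show that no $(C,O,R,F)$ in the finite lattice has the same satisfying set. The method is to produce, for each of the four $C$ values, a DAG on which the predicate and the closure scope disagree, using the same kind of single-edge or two-edge witnesses as Case~2 of \cref{thm:readability}. That there are only finitely many $C$ values is what makes this a finite case check.

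I expect the main obstacle to be clause~(1). Its force depends on reading ``events produce return values'' as a liveness commitment, and \cref{def:theta} does not state that explicitly. I would first try to phrase it as a totality condition, $\mathit{rval}$ defined on all of $E$, and show that $R(\mathrm{absent})$ violates it on some history.
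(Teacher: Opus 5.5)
Your handling of the exact equivalence and of clauses~(1)--(2) matches what the paper does. The equivalence is \cref{thm:completeness}. The two $R$-witnesses are the non-surjectivity recorded in its proof, and they rest on two facts: $\varphi$ lands only in the $C$, $O$, $F$ coordinates (\cref{lem:per-axiom}), and returned values are presupposed.

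Clause~(3) is where you depart. The paper constructs nothing there and reads the clause off \cref{rem:completeness-scope}. You instead propose an explicit predicate and refute it against every lattice point. That is the stronger route, and it is what ``neither contains the other'' actually needs: an existential witness. Read universally, the clause is false, and your own second candidate shows it.

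Measure distance in $\mathit{so} \cup \mathit{vis}$, whose transitive closure is $\mathit{hb}$. Distance~$1$ forces $\mathit{so} \subseteq \mathit{vis}$, so $\mathit{so} \cup \mathit{vis} = \mathit{vis}$. Distance~$2$ then forces $\mathit{vis} \circ \mathit{vis} \subseteq \mathit{vis}$. Hence $\mathit{vis}$ is transitive, $\mathit{hb} = \mathit{vis}$, and the predicate is exactly \textsc{vis-causal}, i.e.\ $C(\mathrm{explicit})$, which lies inside the fragment. Drop it, or measure distance in the transitive reduction of $\mathit{hb}$ and say so. The tag predicate is a sound witness.

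Two further repairs are needed. First, your case check cannot range over the four $C$ values alone. $O(\pi)$ also constrains what is seen, through the per-class prefix condition of \cref{def:fork} (the prefix case of \cref{lem:vis-c}). $F$ constrains $\mathit{rval}$, which your predicate leaves free. So you must refute every $(C,O,F)$, with $R$ irrelevant. This stays easy:
\begin{itemize}
\item On executions with pairwise-distinct tags the tag predicate is vacuous. So any execution violating a constraining configuration separates the two.
\item An execution in which some same-tag causal ancestor is not visible separates the tag predicate from the unconstrained $(\mathrm{none},\mathrm{trivial},F_{\mathrm{anything}})$.
\end{itemize}

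Second, on what you call the main obstacle in~(1): do not try to prove that $\mathit{rval}$ is total. Cite the signature $\mathit{rval}: E \to \mathrm{Val}$ that the paper stipulates in its Appendix~A summary; that is all its ``Burckhardt assumes events produce return values'' amounts to. It is a modeling choice, not a theorem. Burckhardt's own histories give an operation that never returns the value $\nabla$, and the paper's \textsc{single} atom (``on complete operations'') tacitly presupposes this. Under that convention an unanswered query is representable, so your step ``those executions are not of the form $\Theta(\chi)$'' would fail.
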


    \section{Clock Accuracy for Causal Arbitration}
    \label{sec:clocks}

    \Cref{thm:dichotomy} located physical-timestamp (last-writer-wins) systems by whether their
    arbitration refines causality. The quantitative question that follows---\emph{how good must
    the clocks be?}---acquires a decidable form here, because the $\kappa$ shortcut turns
    ``refines causality'' into a property of the order (\cref{lem:kappa-inv}). The bare
    clock--latency inequality we derive is close to folklore on $\varepsilon$-synchronised clocks
    and hybrid logical clocks; the contribution is not the inequality but the
    \emph{decomposition} it sits inside. LCC separates the cost of \emph{causal} arbitration
    ($\mathit{hb} \subseteq \mathit{ar}$) from the cost of \emph{real-time} arbitration
    ($\mathit{rb} \subseteq \mathit{ar}$), and shows that a commit-wait pays for the latter while
    causal honesty needs far less---usually nothing. The answer is not a number of milliseconds
    but a ratio of clock uncertainty to the relation being protected.

    \begin{lemma}[$\kappa$ is exactly zero causal inversion]\label{lem:kappa-inv}
        For a fixed DAG $G$ and total order $\prec$, the shortcut $(\mathrm{none},\mathrm{all})
        \preceq (\mathrm{explicit},\mathrm{trivial})$ holds iff $\prec$ inverts no causal
        edge---iff $\prec$ is a linear extension of $G$.
    \end{lemma}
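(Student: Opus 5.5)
The plan is to fix $G$ and $\prec$ and work with the per-$G$ relation $\mathrm{Adm}(A,G) \subseteq \mathrm{Adm}(B,G)$, where admissibility is scored against this $\prec$. The quantifier over all $G$ in \cref{def:readable} is thereby specialised to the single pair $(G,\prec)$, and \cref{def:adm} is read with $\prec$ an arbitrary total order rather than a linear extension. With $A=(\mathrm{none},\mathrm{all})$, a cut is $A$-admissible iff it is an initial segment of $\prec$, since the filter $\mathit{deps}_{\mathrm{none}}$ imposes nothing. With $B=(\mathrm{explicit},\mathrm{trivial})$, a cut is $B$-admissible iff it is downward-closed under $\mathit{deps}^*$, since $O(\mathrm{trivial})$ imposes no prefix condition. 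The statement therefore reduces to a purely order-theoretic claim: every $\prec$-initial segment is $G$-downward-closed iff $\prec$ is a linear extension of $G$.

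\emph{($\Leftarrow$)} If $\prec$ inverts no causal edge, apply \cref{lem:o-entails-c} with $\pi_X = \mathrm{all}$, which is the Sufficiency half of \cref{thm:readability} restricted to one $G$. Directly: let $V$ be a $\prec$-prefix, $m \in V$, and $(m',m)\in E$. Then $m' \prec m$, so $m' \in V$. Hence $V$ is closed under immediate predecessors, and by recursion under $\mathit{deps}^*$, so $V \in \mathrm{Adm}(B,G)$.

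\emph{($\Rightarrow$)} Argue the contrapositive, mirroring the Case~2 witness of \cref{thm:readability} as anticipated in \cref{rem:cliff}. Suppose $\prec$ inverts some causal relation $m' \in \mathit{deps}^*(m)$ with $m \prec m'$. Walking the path $m' \rightsquigarrow m$ yields an inverted \emph{edge}: if every edge on the path respected $\prec$, transitivity would give $m' \prec m$. So pick $(u,v)\in E$ with $v \prec u$, and take $V = \{x : x \preceq v\}$, the $\prec$-initial segment ending at $v$. It is $A$-admissible, and it contains $v$ but not $u$, because $v \prec u$. It is therefore not $\mathit{deps}_{\mathrm{explicit}}$-closed and not $B$-admissible. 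Finally, "linear extension" and "inverts no causal edge" are equivalent for a total order by the same path-transitivity remark, which closes the chain of iffs.

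The main obstacle is definitional rather than computational. \cref{def:adm} fixes $\prec$ as a linear extension, under which the statement is vacuous, and \cref{def:readable} quantifies over all $G$. The proof must therefore state at the outset that, for this lemma, admissibility is evaluated against an arbitrary total order on a fixed $G$, exactly as in the failed-arbitration regime of \cref{thm:dichotomy}. Once that reading is fixed, both directions are the two halves of \cref{thm:readability} instantiated at one $(G,\prec)$.
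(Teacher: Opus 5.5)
Your proof is correct and follows the paper's own argument: the $\prec$-prefix ending at the child of an inverted edge is $(\mathrm{none},\mathrm{all})$-admissible but omits that child's parent, and conversely a linear extension makes every prefix downward-closed, as in \cref{lem:o-entails-c}. Your explicit per-$(G,\prec)$ reading of admissibility against an arbitrary total order, and your reduction from an inverted path to an inverted edge, spell out what the paper's two-line proof leaves implicit. One notational point: write the prefix as $\{x : x \prec v \text{ or } x = v\}$ rather than $\{x : x \preceq v\}$, since $\preceq$ already denotes the readability order.
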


    \begin{proof}
        If a causal edge $(j,i)$ is inverted ($i \prec j$), the $\prec$-prefix ending at $i$
        contains $i$ without its parent $j$: not $\mathit{deps}_{\mathrm{explicit}}$-closed, so
        the shortcut fails. If no edge is inverted, $\prec$ is a linear extension, every prefix is
        downward-closed, and the shortcut holds (\cref{lem:o-entails-c}).
    \end{proof}

    \paragraph*{Physical model.} Message $m$ is created at real time $r(m)$ on node $\nu(m)$; its
    timestamp is $\mathrm{ts}(m) = r(m) + b_{\nu(m)}$ for a per-node clock error $b$. A causal
    edge $(j,i)$ requires $r(i) \ge r(j) + L$ for cross-node edges, where $L$ is the link
    latency---causality cannot propagate faster than the network. Same-node causal edges share a
    clock and never invert, so only cross-node edges are at risk, and each has real gap $d_e \ge
    L$. Edge $(j,i)$ inverts iff $\mathrm{ts}(i) \le \mathrm{ts}(j)$, i.e.
    \[
        d_e = r(i) - r(j) \;\le\; b_{\nu(j)} - b_{\nu(i)}.
    \]

    \begin{theorem}[Clock bound for causal arbitration]\label{thm:clockbound}
        Let clock errors be bounded, $|b| \le \varepsilon$, and let equal timestamps be broken
        consistently with causal order (as a hybrid logical clock does). Then timestamp order
        refines causality---and the $\kappa$ regime holds---iff
        \[
            2\varepsilon \;\le\; d_{\min},
        \]
        where $d_{\min}$ is the smallest cross-node causal gap in the execution. Since $d_{\min}
        \ge L$, the condition $2\varepsilon \le L$ is a workload-independent sufficient guarantee:
        \emph{clock uncertainty at most half the fastest causal hop.}
    \end{theorem}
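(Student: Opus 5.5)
The plan is to reduce everything to single edges and then quantify over the execution. By \cref{lem:kappa-inv}, the $\kappa$ regime holds exactly when $\prec$ (here: timestamp order, with ties broken causally) inverts no causal edge. Causal order is the transitive closure of the edges, so a linear extension of the edges is a linear extension of $\mathit{deps}^*$. It therefore suffices to decide, edge by edge, when a causal edge $(j,i)$ can invert.

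Same-node edges share a clock and are handled first. For such an edge $\mathrm{ts}(i)-\mathrm{ts}(j)=r(i)-r(j)$. This is strictly positive, or zero with the tie broken causally, so the edge never inverts. For a cross-node edge, the physical-model inequality says $(j,i)$ inverts iff $d_e \le b_{\nu(j)}-b_{\nu(i)}$, with the tie case $d_e = b_{\nu(j)}-b_{\nu(i)}$ excluded by the causal tiebreak. So the edge is safe iff $d_e \ge b_{\nu(j)}-b_{\nu(i)}$, reading the boundary case via the tiebreak.

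Sufficiency is then one line. If $2\varepsilon \le d_{\min}$, then every cross-node edge satisfies $b_{\nu(j)}-b_{\nu(i)} \le 2\varepsilon \le d_{\min} \le d_e$. Hence no edge inverts strictly, and equality is resolved causally.

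Necessity needs more care, because the theorem is stated as an iff with respect to clock errors constrained only by $|b|\le\varepsilon$. I would read the ``iff'' as a guarantee over all admissible clock assignments: the order is guaranteed to refine causality for every $b$ with $|b|\le\varepsilon$ iff $2\varepsilon \le d_{\min}$. For the converse, suppose $d_{\min} < 2\varepsilon$, and let $e=(j,i)$ be the cross-node edge attaining $d_{\min}$. Choose $b_{\nu(j)}=\varepsilon$ and $b_{\nu(i)}=-\varepsilon$; this is possible because $\nu(j)\neq\nu(i)$. Then $\mathrm{ts}(i)-\mathrm{ts}(j)=d_{\min}-2\varepsilon<0$, a strict inversion that no tiebreak repairs, and \cref{lem:kappa-inv} gives failure of the shortcut.

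This necessity step is the main obstacle. The assignment must be realizable without disturbing other constraints, and the claim must be phrased as a worst case over clocks; for a fixed, benign clock assignment the converse is simply false. I would state that quantifier explicitly. The adversarial assignment touches only two node offsets and leaves the real times $r$, and hence the DAG and $d_{\min}$, unchanged, so it is a legitimate execution.

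Finally, the workload-independent corollary follows from the physical model: every cross-node gap satisfies $d_e \ge L$, so $d_{\min}\ge L$, and $2\varepsilon\le L$ implies $2\varepsilon\le d_{\min}$. Sufficiency then applies.
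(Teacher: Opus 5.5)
Your proof is correct and follows the paper's own argument. Sufficiency comes from bounding $b_{\nu(j)}-b_{\nu(i)}\le 2\varepsilon\le d_{\min}\le d_e$, with the boundary tie resolved by the causal tiebreak; necessity comes from the worst-case assignment $b_{\nu(j)}=+\varepsilon$, $b_{\nu(i)}=-\varepsilon$ on the minimum-gap edge. Your explicit remark that the converse holds only as a worst case over clock assignments, and fails for a fixed benign one, makes precise the quantifier the paper leaves implicit in the phrase ``under the worst-case assignment.''
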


    \begin{proof}
        The largest possible value of $b_{\nu(j)} - b_{\nu(i)}$ is $2\varepsilon$. If
        $2\varepsilon \le d_{\min} \le d_e$ for all $e$, then $d_e \ge b_{\nu(j)} - b_{\nu(i)}$,
        so the inversion inequality $d_e \le b_{\nu(j)} - b_{\nu(i)}$ can hold only with
        equality---a timestamp tie $\mathrm{ts}(i) = \mathrm{ts}(j)$, which the causal tiebreak
        resolves in favour of the earlier event. No \emph{strict} inversion occurs, $\prec$ is a
        linear extension, and $\kappa$ holds by \cref{lem:kappa-inv}. Conversely if $2\varepsilon
        > d_{\min}$, the minimum-gap edge \emph{strictly} inverts ($\mathrm{ts}(i) <
        \mathrm{ts}(j)$) under the worst-case assignment $b_{\nu(j)} = +\varepsilon$, $b_{\nu(i)}
        = -\varepsilon$, which no tiebreak repairs.
    \end{proof}

    \paragraph*{Probabilistic margin.} When $2\varepsilon > d_{\min}$ honesty is not lost outright
    but becomes a probability. For errors drawn independently per operation, uniform on
    $[-\varepsilon, \varepsilon]$, an edge of gap $d$ inverts with probability
    \[
        \Pr[\text{invert} \mid d] = \tfrac{1}{2}\,(1 - \rho)^2,
        \qquad \rho = \tfrac{d}{2\varepsilon} \in [0,1],
    \]
    and $0$ for $\rho \ge 1$ (\cref{fig:clocks}, left). (The worst-case bound of
    \cref{thm:clockbound} holds for any error model with $|b| \le \varepsilon$, including a
    constant per-node bias; the closed form above is the independent-error case. Per-node bias
    instead correlates inversions, so a single skewed node mis-orders all its edges at
    once---worse in the tail, identical at the $2\varepsilon$ threshold.) Requiring an execution
    with $E$ at-risk edges to be honest with probability $\ge 1-\delta$ gives, via the quadratic
    tail, $\varepsilon \le L / \bigl(2(1 - \sqrt{2\delta/E})\bigr)$. The correction is negligible:
    even at $E = 10^6$ and $\delta = 10^{-6}$ the bound is $\varepsilon \le 0.50\,L$. The
    half-the-fastest-hop law is stable across scale.

    \begin{corollary}[The wait is set by the relation you protect]\label{cor:wait}
        A timestamp-ordered system that must enforce $\mathcal{R} \subseteq \mathit{ar}$ for a
        relation $\mathcal{R}$ must insert a minimum pre-response wait
        \[
            w \;=\; \max\bigl(0,\ 2\varepsilon - d_{\min}(\mathcal{R})\bigr),
        \]
        where $d_{\min}(\mathcal{R})$ is the smallest real-time gap between $\mathcal{R}$-related
        operations. For $\mathcal{R} = \mathit{hb}$, $d_{\min}$ is the fastest cross-node
        \emph{causal} hop, bounded below by the link latency $L$, so $w = 0$ whenever
        $2\varepsilon \le L$. For $\mathcal{R} = \mathit{rb}$, the relation orders every
        non-overlapping pair regardless of causality, so $d_{\min}$ can approach $0$ and $w \to
        2\varepsilon$ on every operation.
    \end{corollary}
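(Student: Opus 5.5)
The plan is to reduce the corollary to \cref{thm:clockbound}, with the protected relation $\mathcal{R}$ in place of $\mathit{hb}$ and a pre-response wait added to the real gap. Model the wait as follows: before a message $j$ that is $\mathcal{R}$-related to a later $i$ returns, the system delays by $w$ (Spanner-style commit-wait). The effect is that every $\mathcal{R}$-related pair $(j,i)$ has its effective real-time gap raised from $d_e$ to at least $d_e + w$. Concretely, $i$'s invocation, and hence its timestamp reading, cannot begin until $w$ after $j$'s timestamp was fixed. The inversion condition from the physical model then becomes
\[
d_e + w \le b_{\nu(j)} - b_{\nu(i)}.
\]
Its worst case on the right-hand side is $2\varepsilon$.

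\emph{Sufficiency.} Follow the proof of \cref{thm:clockbound} verbatim. If $w \ge 2\varepsilon - d_{\min}(\mathcal{R})$, then $d_e + w \ge 2\varepsilon$ for every $\mathcal{R}$-edge. An inversion could therefore occur only at equality, and the consistent tiebreak removes that case. So $\mathcal{R} \subseteq \mathit{ar}$ holds.

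\emph{Necessity.} For any smaller wait, take the minimum-gap $\mathcal{R}$-pair. Apply the worst-case assignment $b_{\nu(j)}=+\varepsilon$, $b_{\nu(i)}=-\varepsilon$. This gives a strict inversion that no tiebreak repairs. The $\max(0,\cdot)$ covers the regime where $d_{\min}(\mathcal{R})$ already exceeds $2\varepsilon$, in which case no wait is needed.

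The two instantiations then follow. For $\mathcal{R} = \mathit{hb}$, the physical model gives $d_{\min}(\mathit{hb}) \ge L$, because same-node edges never invert and cross-node edges need at least one hop. So $2\varepsilon \le L$ yields $w=0$, recovering \cref{thm:clockbound}. For $\mathcal{R} = \mathit{rb}$, exhibit a family of executions: two clients on different nodes, causally unrelated, where one response precedes the other invocation by an arbitrarily small $\eta>0$. These pairs lie in $\mathit{rb}$ but in no $\mathit{hb}$ edge, so $d_{\min}(\mathit{rb}) \le \eta$ and $w \ge 2\varepsilon - \eta$. Since the system cannot tell in advance which operation will be followed closely, the wait must be paid on every operation.

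The main obstacle is making the wait model precise. Specifically, I need to justify that a wait of $w$ shifts the \emph{effective} gap by exactly $w$ in the inversion inequality. This depends on where the timestamp is taken relative to the wait (at commit, before the wait) and on the fact that the $\mathcal{R}$-successor's timestamp is read no earlier than its invocation. I would first fix the convention that the timestamp is assigned at the start of the wait and the response is released at its end. Under that convention the shift is immediate from $\mathrm{ts}(m) = r(m) + b_{\nu(m)}$. The claim of ``every operation'' for $\mathit{rb}$ also needs the argument that the wait is decided before knowing the future invocation, which I would state as an adversary choosing $\eta$ after the wait policy is fixed.
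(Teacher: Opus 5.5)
Your proposal is correct and follows the paper's route, made explicit: the paper gives no separate proof and reads the corollary off \cref{thm:clockbound}, letting the wait add to the real-time gap against the worst-case clock difference $2\varepsilon$, with the network supplying the floor $L$ for $\mathit{hb}$ and nothing supplying one for $\mathit{rb}$. Two small tightenings: your necessity step's bias assignment $b_{\nu(j)}=+\varepsilon$, $b_{\nu(i)}=-\varepsilon$ needs the minimum-gap pair to be cross-node, so take $d_{\min}(\mathcal{R})$ over cross-node pairs as the theorem does; and at exact equality the theorem's causal tiebreak does not settle causally unrelated $\mathcal{R}$-pairs, so for general $\mathcal{R}$ you need either an $\mathcal{R}$-consistent tiebreak or a strictly longer wait.
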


    Commit-wait is the $\mathit{rb}$-corner of \cref{cor:wait}. Real-time order must sequence even
    causally-unrelated operations that are close in wall-clock time, so it has no propagation
    floor to lean on and must manufacture the gap by waiting $2\varepsilon$. Causal order lives in
    the $\mathit{hb}$-corner, where the floor $L$ is supplied by the network at no cost. A
    real-time system's commit-wait stalls each commit by $\approx 2\varepsilon$ to provide
    external consistency ($\mathit{rb} \subseteq \mathit{ar}$); a system selling only
    \emph{causal} consistency over the same clocks waits $\max(0,\ 2\varepsilon -
    d_{\min}^{\mathit{hb}})$, which across a WAN is zero. The price of real-time order is the
    commit-wait; the price of causal order is usually nothing.

    \begin{remark}[Boundedness is a load-bearing assumption]
        A hard $\kappa$ guarantee requires \emph{bounded} clock uncertainty. Under Gaussian
        (NTP-style) error $b \sim \mathcal{N}(0,\sigma^2)$ the inversion probability is
        $Q\!\bigl(d/(\sigma\sqrt 2)\bigr) > 0$ for every finite gap (\cref{fig:clocks}, left,
        dashed): unbounded clocks make causal honesty arbitrarily likely but never certain, with
        execution-level honesty decaying as $(1 - Q)^E$ over $E$ edges. This is why a bounded
        clock advertises an interval, not a point---and why \cref{thm:clockbound} assumes the
        advertised $\varepsilon$ actually bounds the error: a clock that silently drifts past its
        interval reintroduces inversions exactly as an arbitration violating $\mathit{hb}
        \subseteq \mathit{ar}$ would. Boundedness is an assumption about the clock, not a theorem
        about it.
    \end{remark}

    \begin{remark}[Two mechanisms for the same requirement]\label{rem:currencies}
        The $\kappa$ requirement $\mathit{hb} \subseteq \mathit{ar}$ can be enforced two ways.
        Physical-timestamp systems use clock accuracy: $2\varepsilon \le d_{\min}^{\mathit{hb}}$
        (\cref{thm:clockbound}). Dependency-tracking systems---COPS~\cite{lloyd2011}, and any
        system that orders by the explicit causal edges of $C(\mathrm{explicit})$---use metadata:
        they carry the causal edges explicitly and order by them, so $\varepsilon = 0$ and no
        clock is consulted, incurring the message-size growth of \cref{cor:retention-quant}. The
        two are different mechanisms for the same requirement, and LCC admits both because both
        establish $\mathit{hb} \subseteq \mathit{ar}$. A clock-free, DAG-ordered system is in the
        $\kappa$ regime unconditionally.
    \end{remark}

    \begin{figure}[t]
        \centering
        \includegraphics[width=\linewidth]{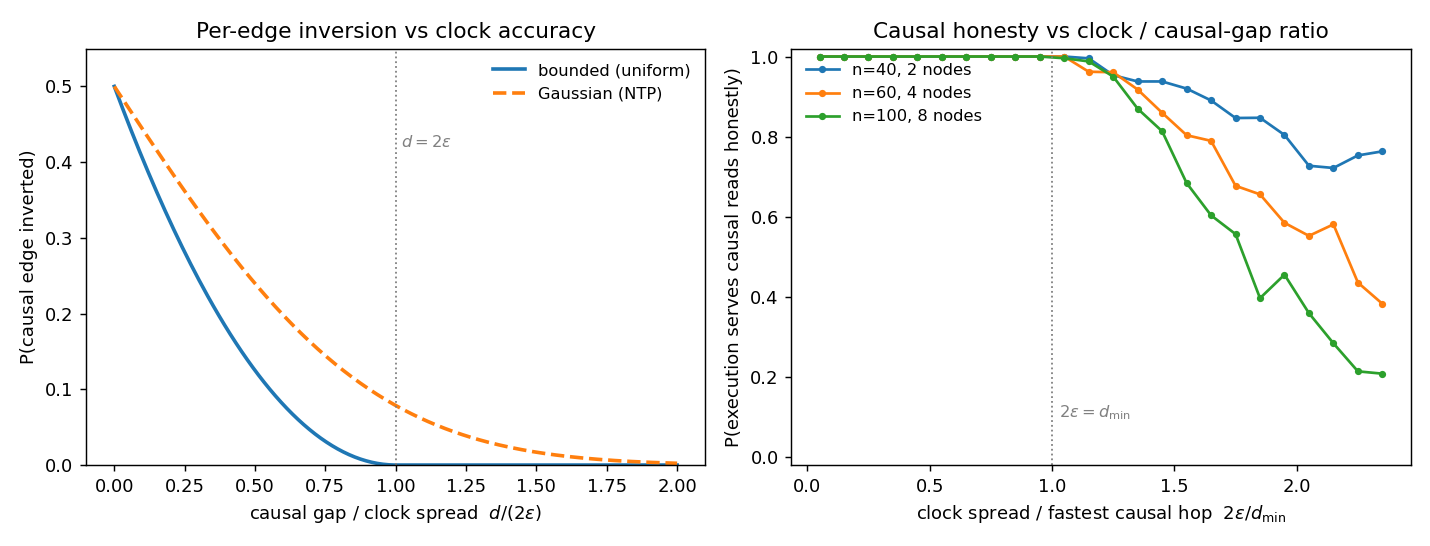}
        \caption{Left: per-edge inversion probability vs the ratio of causal gap to clock spread
            $d/(2\varepsilon)$. Bounded (uniform) error reaches zero at $d = 2\varepsilon$
            (\cref{thm:clockbound}); Gaussian error never does. Right: probability that a whole
            execution serves causal reads honestly, vs $2\varepsilon/d_{\min}$. Honesty is guaranteed
            below $1$ and degrades above, faster for larger systems (more at-risk edges).}
        \label{fig:clocks}
    \end{figure}

\end{document}